\crefname{equation}{inequality}{inequalities}
\Crefname{equation}{Inequality}{Inequalities}
\tikzset{discont/.style={decoration={zigzag,segment length=12pt, amplitude=4pt},decorate}}
\def\discontarrow(#1)(#2)(#3)(#4);{
  \draw[discont,thick] (#2) -- (#3);
  \draw[thick,-latex] (#1) -- (#2)  (#3) -- (#4);
}
\newcommand{\NN}{\mathbb{N}}
\newcommand{\RR}{\mathbb{R}}
\newcommand{\calL}{\mathcal{L}}
\newcommand{\SC}{\text{SC}}
\newcommand{\instance}{\mathcal{E}}
\newcommand{\dist}{\mathrm{dist}}
\newcommand{\ve}[1]{\mathbf{#1}}
\newcommand{\medianAlt}{\textsc{Median Alternation}}
\newcommand{\favCouple}{\textsc{Favorite Couple}}
\newcommand{\kExtremes}{\textsc{$k$-Extremes}}
 \newtheorem{lemma}{Lemma}
 \newtheorem{claim}{Claim}[section]
 \newtheorem{Vrule}{Voting Rule}
\crefname{claim}{Claim}{Claims}
\crefname{lemma}{Lemma}{Lemmas}
\crefname{theorem}{Theorem}{Theorems}
\crefname{part}{part}{parts}
\crefname{alt}{alternative}{alternatives}
\crefname{step}{step}{steps}
\crefname{algocf}{Algorithm}{Algorithms}
\crefname{eq}{equality}{equalities}
\crefname{eqs}{equalities}{equalities}
\crefname{eqn}{equation}{equations}
\crefname{ineq}{inequality}{inequalities}
\crefname{ineqs}{inequalities}{inequalities}
\crefname{exp}{expression}{expressions}
\crefname{prop}{property}{properties}
\crefname{props}{properties}{properties}
\crefname{imp}{implication}{implications}
\crefname{eqv}{equivalence}{equivalences}
\crefname{def}{definition}{definitions}
\title{Metric Distortion in Peer Selection}
 \author[1,2]{Javier Cembrano \thanks{\href{mailto:jcembran@mpi-inf.mpg.de}{jcembran@mpi-inf.mpg.de}}}
 \author[1,3]{Golnoosh Shahkarami \thanks{\href{mailto:gshahkar@mpi-inf.mpg.de}{gshahkar@mpi-inf.mpg.de}}}
 \affil[1]{Department of Agorithms and Complexity, Max Planck Institut für Informatik}
 \affil[2]{Department of Industrial Engineering, Universidad de Chile}
 \affil[3]{Graduate School of Computer Science, Universität des Saarlandes}
\date{}
\begin{document}

\maketitle

\begin{abstract}
In the \textit{metric distortion} problem, a set of voters and candidates lie in a common metric space, and a committee of $k$ candidates must be elected. The objective is to minimize a social cost, defined as a function of the distances between voters and their chosen representatives, while the voting rule only has access to ordinal preferences. The \textit{distortion} of a rule is the worst-case ratio between the social cost of its outcome and that of the optimal committee, taken over all consistent preferences and metrics.

We initiate the study of metric distortion in peer selection, where voters and candidates coincide. 
We consider four objectives, obtained by combining two aggregation rules with two types of social cost. 
Under \textit{additive aggregation}, an individual’s cost is the sum of their distances to all committee members; under \textit{$q$-cost}, it is their distance to the $q$th closest member. 
The overall social cost is either \textit{utilitarian}, given by the sum of all individual costs, or \textit{egalitarian}, given by the maximum individual cost.
Surprisingly, we find that even on the line metric, peer selection retains much of the hardness of the general case: Lower bounds remain strictly larger than one for all objectives, and cases where bounded distortion is impossible in general remain so here as well.
On a positive note, cases with bounded distortion in the general setting achieve better constants in peer selection. For utilitarian cost, selecting the $k$ middle agents achieves a distortion between $1$ and $2$ under additive aggregation. Under $q$-cost, we show positive results for $q=k=2$, but impossibility results largely carry over. For egalitarian cost, selecting the extremes yields an optimal distortion of $2$ under additive aggregation and for $q$-cost with $q>k/3$, while no bounded distortion is possible when $q \leq k/3$.
Overall, our results show that peer selection on the line metric admits better constants than the general case, yet fundamental hardness barriers persist.
\end{abstract}

\newpage
\section{Introduction}

A fundamental problem in social choice is the aggregation of individual preferences, expressed as rankings over a set of candidates, into a social preference consisting of a subset of elected candidates.
For centuries, social choice theorists have proposed several axioms to capture desirable properties that these aggregation or \textit{voting} rules should satisfy, usually leading to strong impossibility results \citep{arrow1963social,condorcet1785essai,Gibbard73,Satterthwaite75}.

As an alternative approach, attempting to quantify the extent to which a certain voting rule can faithfully translate voters' preferences into the selected committee, \citet{procaccia2006distortion} introduced the notion of \textit{distortion} of a rule.
The underlying assumption is that a voter's (dis)affinity with a candidate can be represented by a certain cost, and voters' rankings express these cardinal preferences.
The cost of a committee for a voter is then defined by aggregating the costs of the committee members, and the overall \textit{social cost} of the committee by aggregating the costs for all voters.
The distortion corresponds to the worst-case ratio between the social cost of the selected committee and that of the optimal committee, over all possible preferences and consistent metrics.

The study of the distortion of voting rules has usually focused on two ways of modeling the social cost: utilitarian and egalitarian~\citep{caragiannis2011voting, 10.1145/3230654.3230658, caragiannis2017subset}.
In the utilitarian case, the social cost is defined as the sum of the individual costs of the voters, so that all voters' costs contribute equally to the objective.
In contrast, the egalitarian social cost considers the maximum individual cost among all voters, capturing a notion of fairness where no voter is excessively disadvantaged.

In voting theory, it is common to assume that voters' preferences are not fully arbitrary but exhibit some structural properties.
A relevant line of work has sought structural restrictions that are natural and have powerful implications, such as single-peaked \citep{black48grouDecision} or single-crossing \citep{mirrlees1971exploration}; see \citet{elkind2022preference} for a survey.
A rather general framework among these is that of \textit{spatial} or \textit{metric voting}, where voters and candidates are assumed to lie in a common low-dimensional metric space and voters' costs correspond to their distance to each candidate~\citep{aziz2020justifications,jessee2012ideology,enelow1984spatial,merrill1999unified}.
For instance, a line metric is commonly employed to capture political affinity on the left-right spectrum, whereas geographical preferences are typically modeled in two-dimensional spaces.

This structural assumption naturally fits in the metric distortion framework: the distances to candidates fully define the social cost of a committee, but the voting rules only receive their expression as preference rankings.
With this structural restriction, a tight distortion bound of $3$ has been shown for any single-winner deterministic voting rule~\citep{ANSHELEVICH201827, kizilkaya2022plurality, gkatzelis2020resolving}.
Extending distortion to multi-winner elections requires defining how a voter's cost is aggregated over the selected committee.
Two ways have been considered in the literature: the \textit{additive cost}, where a voter's cost is the sum of their distances to all members of the committee~\citep{babashah2024distortionmultiwinnerelectionsline}, and the \textit{$q$-cost}, where the cost is determined by their distance to their $q$th closest committee member~\citep{caragiannis2022metric,Chen_Li_Wang_2020}.

Work on metric distortion has so far focused on the case where voters and candidates constitute disjoint sets, which forms a natural model for large-scale elections.
However, in many decision-making scenarios, a group of agents seeks to elect a subset of their own members.
One can think, for example, of a political organization selecting a committee.
Each member ranks others according to their political affinity, and the organization aims to select a committee that represents the variety of preferences of its members.
Since the voting rule only receives ordinal preferences, a small distortion constitutes a suitable objective to ensure a close-to-optimal outcome under this limited information.
In general, this situation arises in the context of \textit{peer selection}, where individuals evaluate each other to choose a group for governance, leadership, or resource allocation.
Further examples include academic hiring and promotion, student representative elections, self-organized committees in cooperatives, and local governance selection. 

While peer selection rules have been extensively studied in other contexts, particularly regarding the effect of strategic behavior~\citep[e.g.][]{holzman2013impartial,alon2011sum,caragiannis2022impartial}, little is known about their ability to accurately reflect agents' cardinal preferences.
On the other hand, previous work on metric distortion for single-selection has often parameterized an election via its \textit{decisiveness}, corresponding to the maximum ratio between a voter's distance to their top choice and to any other candidate~\citep{anshelevich2017randomized,gkatzelis2020resolving}.
These works have motivated this parameter by the fact that it becomes zero in the peer selection setting, as each agent becomes their own top choice.
However, directly modeling a common set of voters and candidates constitutes a structural change to the problem that has not been explored so far. 

\subsection{Our Contributions and Techniques}
We initiate the study of metric distortion when the set of voters and candidates coincide and bound the distortion achievable by voting rules selecting $k$ out of $n$ agents on the line metric for several social costs.
A summary of our results appears in \Cref{tab:results}.

\newcommand{\raisedrule}[2][2em]{\leaders\hbox{\rule[#1]{1pt}{#2}}\hfill}

{\footnotesize
\begin{table}[tb]
\centering

\makebox[\columnwidth][c]{%
\begin{tabular}{@{}l c cc@{}}
\toprule\\[-8pt]
& \multirow{2}{*}{\textbf{additive}} & \multicolumn{2}{c}{\textbf{$q$-cost}\phantom{aaaaaaaaa}} \\[2pt]
&  & $q\le \frac{k}{2}$ & $\frac{k}{2}< q\le k$ \\[3pt]
\midrule\\[-4pt]
\multirow{2}{*}{\rotatebox{90}{\bf utilitarian}}
& {\color{gray}$1+\sqrt{1+\frac{2}{k}}$; $\frac{7}{3}+\frac{4}{k}\big(\sqrt{2}-\frac{4}{3}\big)$ \scriptsize[BKSS]}
& \multicolumn{1}{c}{{\color{gray}$\infty$ \scriptsize[CSV]}}
& {\color{gray}$3$; $3$ \scriptsize[CSV]} \\[18pt]
& $1.0914$; $\displaystyle %\frac{2}{k}\!\left(n-\sqrt{2n\!\left\lfloor \frac{n-k}{2}\right\rfloor}\right)$ 
\tfrac{2}{k}\big(n-2\sqrt{n(n-k)}\big)^{(*)}$
\scriptsize[T.~\ref{thm:Util-Add}]
& $\infty$ \scriptsize[T.~\ref{thm:utilitarian-1cost}]
& $\displaystyle 2 - \tfrac{k-q}{4q-k-3}^{(**)}$; $3$ \scriptsize[T.~\ref{thm:utilitarian-lowerbound-qcost}, CSV] \\[8pt]

\midrule\\[-8pt]
& \multirow{2}{*}{\textbf{additive}} & \multicolumn{2}{c}{\textbf{$q$-cost}\phantom{aaaaaaaaa}} \\[2pt]
&  & $q\le \frac{k}{3}$ & $\frac{k}{3}< q\le k$ \\[3pt]
\midrule\\[-8pt]
\multirow{2}{*}{\rotatebox{90}{\bf egalitarian}}
& {\color{gray}$\cdot$; $2$ \scriptsize[BKSS]}
& {\color{gray}$\infty$ \scriptsize[CSV]}
& {\color{gray}$3$; $3$ \scriptsize[CSV]} \\[6pt]
& $\tfrac{3}{2}-\tfrac{1}{k}$; $\tfrac{3}{2}-\tfrac{1}{2(k-1)}^{(*)}$ \scriptsize[T.~\ref{thm:egalitarian-additive}]
& $\infty$ \scriptsize[T.~\ref{thm:egalitarian-qcost-smallq}]
& $2$; $2$ \scriptsize[T.~\ref{thm:egalitarian-qcost-qlarge}] \\[22pt]
%& $\tfrac{3}{2}-\tfrac{1}{k}$; $\tfrac{3}{2}-\tfrac{1}{k(k-1)}$ (odd $k$) \scriptsize[T.~\ref{thm:egalitarian-additive}]& & \\[6pt]
\bottomrule
\end{tabular}
}
\caption{
\label{tab:results}
Summary of our and previous bounds on the distortion that voting rules can achieve in different settings.
Values before and after the semicolon represent lower and upper bounds, respectively.
Gray entries correspond to the previously studied setting with disjoint voters and candidates, either under a general metric~[CSV] or under the line metric~[BKSS].
The upper bounds for utilitarian additive and egalitarian additive social costs marked with $(*)$ correspond to the cases of even $n-k$ and even $k$, respectively; the slightly different bounds for the odd cases are stated in the corresponding theorems.
The upper bound for utilitarian $q$-cost marked with $(**)$ in the text is valid for $q\ge \frac{k}{2}+1$ (slightly stronger than $q>\frac{k}{2}$). BKSS: \citet{babashah2024distortionmultiwinnerelectionsline}; CSV: \citet{caragiannis2022metric}.
}
\end{table}
}

We start by observing a simple yet powerful property of metric voting on the line with a single set of voters and candidates, which follows from previous work \citep{elkind2014recognizing, babashah2024distortionmultiwinnerelectionsline}: we can fully compute the order of the agents from their rankings.
This constitutes a powerful tool for the design of our mechanisms, as in the following we can always take this order as given.

\paragraph{Utilitarian Additive Cost.}
We first consider the utilitarian social cost, in which the social cost of a committee is defined as the sum of all individual costs.
Intuitively, selecting $k$ consecutive agents results in lower utilitarian social cost.
In \Cref{subsec:uti-add}, we focus on the case of additive aggregation: the cost of a committee for a voter is given by the sum of all distances from the candidates to this voter.
As a natural extension of the optimal rules for one or two agents, which select the median and closest-to-median agents, we consider a rule called \medianAlt\ that selects $k$ middle agents.
We show that \medianAlt\ provides a distortion close to $1$ when $k$ is small compared to $n$ and approaches $2$ as $k$ goes to $n$.
Despite its simplicity, the analysis of this rule poses significant challenges.
In short, we reduce any metric to another with only two locations by showing the existence of a non-improving direction of movement for each agent, and then compute the worst-case distortion for this class.

We complement the upper bound described above with a lower bound of $1.0914$ for any $k\geq 3$.
This implies that, even in this seemingly simple case, we cannot always select the optimal committee.

\paragraph{Utilitarian $q$-Cost.}
In \Cref{subsec:uti-q}, we consider utilitarian $q$-cost, where the cost of a committee for an agent is given by the agent's distance to their $q$th closest candidate in the committee.
In \Cref{thm:utilitarian-1cost}, we show that no voting rule can provide a constant distortion when $q\leq \frac{k}{2}$, implying that this known impossibility from the setting with disjoint voters and candidates and a general metric space~\citep{caragiannis2022metric} remains in place in our restricted setting.
To prove this bound, we partition all but $q$ agents into $\big\lfloor \frac{k}{q}\big\rfloor \geq 2$ sets and consider two metrics that differ in the position of the remaining $q$ agents: relatively close to the other agents in one metric; very far in the other.
Intuitively, selecting these $q$ agents leads to an unbounded distortion in the former case but is necessary for a bounded distortion in the latter.
For $q>\frac{k}{2}$, the existence of rules with distortion $3$ follows from a general result by \citet{caragiannis2022metric}.
We provide a lower bound that varies between $\frac{3}{2}$ and $2$ as $q$ varies between $\frac{k}{2}+1$ and $2$, by considering three different metrics consistent with the same rankings and showing that, in one of them, there are $q$ agents in one extreme that cannot be consistently selected.
We take a closer look at the case with $k = q = 2$, where a best-possible distortion of $2$ can be achieved by selecting the median agents when $n$ is even.
For odd $n$, we show that a rule selecting a \textit{couple} of agents—a pair of agents who prefer each other over all other agents—among the five middle agents achieves an improved distortion of $\frac{4}{3}$, which is again best-possible.
The \favCouple\ rule leverages two key principles: (1) selecting agents close to the median so as to balance overall distances from agents on each side of the median, and (2) selecting consecutive agents with a small distance between them since this distance is part of the cost of all agents.
This intuition of selecting consecutive agents that are as close to each other as possible while also being close to the median in principle holds for larger $k$, but determining how tightly a group of $k$ agents is clustered based solely on ordinal rankings remains a challenge.

\paragraph{Egalitarian Additive Cost.}
In \Cref{sec:egalitarian}, we turn our attention to the egalitarian social cost, where we focus on the maximum cost of a committee for a voter.
We consider the simple \kExtremes\ rule, which selects half of the committee from each extreme.
On an intuitive level, this constitutes a natural rule in this setting as it avoids that extreme voters are excessively disadvantaged.
For the additive setting, we show in \Cref{subsec:egalitarian-additive} that \kExtremes\ achieves an optimal distortion up to $O\big(\frac{1}{k}\big)$ terms.
In particular, the optimal distortion of $1$ is attained for $k=2$, and distortions of $\frac{3}{2}-\frac{1}{2(k-1)}$ and $\frac{3}{2}-\frac{1}{k(k-1)}$ are achieved for even and odd $k\geq 3$, respectively, almost matching a lower bound of $\frac{3}{2}-\frac{1}{k}$.
The worst-case instances involve $k+1$ agents in one extreme, a single agent in the other extreme, and $k$ agents in the middle, which are selected in the optimal committee but cannot be detected by any rule when considering two symmetric distance metrics. 

\paragraph{Egalitarian $q$-Cost.}
In \Cref{subsec:egalitarian-qcost}, we show that \kExtremes\ attains a distortion of $2$ for $q$-cost as long as $q>\frac{k}{3}$.
To do so, we prove that the social cost of the set selected by this rule is at most the distance from the agent closest to the center to their nearest extreme, and bound the social cost of the optimal set from below by half of this distance.
We provide a matching lower bound by revisiting the instance used for the additive case.
Finally, we show that no constant distortion is possible when $q\leq \frac{k}{3}$, again implying that the general impossibility result of \citet{caragiannis2022metric} still holds in our setting.
In the worst-case instances, we partition the agents into $\big\lfloor \frac{k}{q}\big\rfloor$ sets and consider two symmetric distance metrics where all but one set are placed at a unit distance from one another and two sets in one extreme are at the same location.
We show that no rule can pick $q$ agents from each location.

\subsection{Further Related Work}
Distortion of voting rules was first introduced by \citet{procaccia2006distortion}.
Since then, extensive research has been conducted to establish lower and upper bounds on the distortion of different rules under various scenarios, both within the metric and non-metric frameworks.
For a comprehensive survey, we refer to \citet{anshelevich2021distortion}. 

\paragraph{Single-Winner Voting.}
In the non-metric framework, \citet{caragiannis2011voting} showed that the distortion of any voting rule is at least $\Omega(m^2)$ and that simple rules such as Plurality achieve a distortion of at most $O(m^2)$, where $m$ is the number of candidates.
In the metric framework, \citet{ANSHELEVICH201827} established a general lower bound of $3$ on the distortion of any deterministic voting rule.
They also analyzed the distortion of common voting rules such as Majority, Borda, and Copeland, showing that the latter achieves the lowest distortion of $5$ among them.
\citet{goel2017metric} disproved a conjecture by \citeauthor{ANSHELEVICH201827} regarding a better-than-$5$ distortion of the Ranked Pairs rule and introduced the notion of \textit{fairness ratio} of a rule, which captures the egalitarian social cost as a special case.
These results were later improved by \citet{munagala2019improved}, who extended the analysis to uncovered set rules and reduced the upper bound to $4.236$.
\citet{gkatzelis2020resolving} closed the gap by improving this bound to $3$, and showed the validity of this bound in terms of fairness ratio and thus egalitarian social cost.
Randomized voting rules have also been extensively explored in the metric framework \citep{pulyassary2021randomized, fain2019random}.
The best-known upper bound for a randomized voting rule was recently obtained by \citet{charikar2024breaking}, who showed that a carefully designed randomization over existing and novel voting rules achieves a distortion of at most $2.753$.
As for lower bounds, \citet{charikar2022metric} disproved a conjecture by \citet{goel2017metric} regarding the existence of a randomized voting rule with distortion $2$, by constructing instances whose distortion approaches $2.113$ as the number of candidates grows.

\paragraph{Multi-Winner Voting.}
In the study of metric distortion for multi-winner voting, various objective functions have been proposed to capture the cost incurred by each voter for the elected committee~\citep{elkind2017a, faliszewski2017multiwinner}.
A foundational result by \citet{10.1145/3230654.3230658} showed that, for the additive cost function, iterating a single-winner voting rule with distortion $\delta$ for $k$ rounds produces a $k$-winner committee with the same distortion.
\citet{Chen_Li_Wang_2020} studied the $1$-cost objective in the metric framework when each voter casts a vote for a single candidate.
They proposed a deterministic rule with a tight distortion of $3$ and a randomized rule with a distortion of $3 - \frac{2}{m}$.
More generally, \citet{caragiannis2022metric} introduced the $q$-cost objective, where a voter's cost for a committee is determined by the distance to their $q$th closest member.
They showed that the distortion is unbounded for $q \leq \frac{k}{3}$ and linear in $n$ for $\frac{k}{3} < q \leq \frac{k}{2}$.
For $q > \frac{k}{2}$, they presented a non-polynomial voting rule that achieves a distortion of $3$ and a polynomial rule with a distortion of $9$.
They discussed how these upper bounds for $q > \frac{k}{2}$ and the unbounded distortion for $q\leq \frac{k}{3}$ carry over to egalitarian social cost, but interestingly showed that a constant distortion is possible for this objective when $\frac{k}{3} < q \leq \frac{k}{2}$.
\citet{kizilkaya2022plurality} later proposed a polynomial-time rule with a distortion of $3$.
Recently, \citet{babashah2024distortionmultiwinnerelectionsline} studied the distortion of multi-winner elections with additive cost on the line, devising a rule with a distortion of roughly $\frac{7}{3}$.
\citet{caragiannis2017subset} studied distortion in multi-winner voting for the non-metric framework, defining a voter's utility for a committee as the highest utility derived from any of its members.
They proposed a rule achieving a distortion of $1 + \frac{m(m-k)}{k}$ for deterministic committee selection when selecting $k$ out of $m$ candidates. 

\paragraph{Restricted Voting Settings.}
A specialized setting in metric voting studies single-peaked and $1$-Euclidean preferences, where both voters and alternatives are embedded on the real line \citep{black48grouDecision, moulin1980strategy, miyagawa2001locating, fotakis2016conference, fotakis2022distortion, VOUDOURIS2023266, ghodsi2019distortion}.
In particular, the work of \citet{DBLP:journals/corr/abs-2408-11755} investigated the distortion of deterministic algorithms for $k$-committee selection on the line under the $1$-cost objective, leveraging additional distance queries.

\paragraph{Mechanism Design in Committee Selection.}
Several recent studies have explored alternative models for committee selection.
The concept of stable committees and stable lotteries has been considered in various settings, focusing on fairness and individual incentives \citep{jiang2020approximately, DBLP:conf/ijcai/Borodin0L022}.
An active area of research in the last years has focused on impartial mechanisms, where agents approve a subset of other agents and the voting rule must incentivize truthful reports while selecting well-evaluated agents~\citep{alon2011sum,holzman2013impartial,fischer2015optimal,tamura2016characterizing,mackenzie2015symmetry,caragiannis2022impartial,cembrano2023single,cembrano2024impartial,bousquet2014near}.
Finally, another line of work investigates distortion when agents have known locations, enabling mechanisms to explicitly consider distances in selection \citep{kalayci2024proportional, pulyassary2022algorithm, Anshelevich21KnownLocations}.

\section{Preliminaries}
\label{sec:prelim}
We let $\NN$ denote the strictly positive integers and, for $n\in \NN$, we write $[n]=\{1,\ldots,n\}$ for the first $n$.
A \textit{linear order} $\succ$ on a set $S$ is a complete, transitive, and antisymmetric binary relation on $S$; we denote the set of all linear orders on $[n]$ by $\calL(n)$.

\paragraph{Election.}
 An instance of a committee election, or simply an \textit{election} is described by the triple $\instance = (A, k, \succ)$, where:
\begin{itemize}
    \item $A=[n]$ is the set of agents,
    \item $k\in \NN$ is the number of agents to be selected for the committee, and
    \item $\ve{\succ} = (\succ_1,\succ_2,\ldots,\succ_n)\in \calL^n(n)$ comprises the agents' preference profiles, where $\succ_a\in \calL(n)$ is a linear order on $[n]$ for every $a\in [n]$.
\end{itemize}
We let ${A\choose k} = \{S\subseteq A \mid |S|=k\}$ denote the feasible committees for a given election; i.e., the set of all subsets of $A$ of size $k$.

\paragraph{Line metric.} A \textit{distance metric} on $A$ is a function $d\colon A\times A\to \RR_+$ satisfying (i) $d(a,a)=0$, 
(ii) $d(a,b)=d(b,a)$ for every $a,b\in A$, and (iii) $d(a,c)\leq d(a,b)+d(b,c)$ for every $a,b,c\in A$.
In this paper, we focus on the line metric:
We associate each agent $a\in A$ with a position $x_a\in (-\infty,\infty)$, and the metric $d$ is defined by $d(a,b)=|x_a-x_b|$ for every $a,b\in A$.
A metric $d$ is said to be \textit{consistent} with a ranking profile $\succ\ \in \calL^n(n)$, denoted as $d\ \rhd \succ$, if for every triple of agents $a,b,c\in A$, the condition $d(a, b) < d(a, c)$ implies $b \succ_a c$.\footnote{Note that this definition allows for agent-dependent tie-breaking; i.e., when $d(a, b) = d(a, c)$ agent $a$ can rank either $b \succ_a c$ or $c \succ_a b$, independently of other agents. This assumption makes the problem in principle harder, so that our upper bounds on the distortion remain valid if a common tie-breaking rule is employed, and it allows us to construct simpler examples for lower bounds. It is not hard to see that the same lower bounds can be obtained without the assumption: Whenever a metric has ties, distances can be perturbed by a small enough constant $\varepsilon$ so that there are no longer ties and the distortion does not improve.}
Since $d$ is fully defined by the position vector $x\in (-\infty,\infty)^A$, we often refer directly to this vector being consistent with a ranking profile $\succ\ \in \calL^n(n)$ and denote it by $x\ \rhd \succ$.
Likewise, we often exchange $d$ by $x$ in the definitions that follow.
Finally, for a fixed election $\instance=(A,k,\succ)$, consistent vector of locations $x\in (-\infty,\infty)^n$, and interval $I=(y,z)$ with $y < z$, we let $A(I)=\{a\in A\mid x_a\in I\}$ denote the agents with locations in $I$. 
When $I$ is a single point $\bar{x}$, we write $A(\bar{x})$ for the agents located at this point.

 \paragraph{Social cost.} 
For a certain set of agents $A$, a committee size $k\in \NN$, and a \textit{candidate-aggregation function} $h\colon \RR^k_+\to \RR_+$, the \textit{cost} of $S\in {A\choose k}$ for agent $a\in A$ is simply $\SC(S,a;d)= h((d(a,b))_{b\in S})$.
For a set of agents $A$, a committee size $k\in \NN$, and a \textit{voter-aggregation function} $g\colon \RR^n_+\to \RR$, the \textit{social cost} of $S\in {A\choose k}$ is $\SC(S,A;d)= g((\SC(S,a;d))_{a\in A})$.
In this paper, we study a handful of candidate- and voter-aggregation functions.
In terms of the voter-aggregation function $g\colon \RR^n\to \RR_+$, we focus on the \textit{utilitarian social cost}, given by $g(y)=\sum_{i\in [n]} y_i$, and the \textit{egalitarian social cost}, given by $g(y) = \max\{y_i\mid i\in [n]\}$.
In terms of the candidate-aggregation function $h\colon \RR^k_+\to \RR_+$, we focus on the \textit{additive social cost}, given by $h(y)=\sum_{i\in [k]}y_i$, and the \textit{$q$-cost}, given by $h(y)=\tilde{y}_q$, where $\tilde{y}$ is the vector with the entries of $y$ sorted in increasing order.
Thus, for example, the \textit{$1$-cost} is given by $h(y)=\min\{y_i\mid i\in [k]\}$; and the \textit{$k$-cost} is given by $h(y)=\max\{y_i\mid i\in [k]\}$.

\paragraph{Voting rules and distortion.}
For $n,k\in \NN$ with $n\geq k$, an \textit{$(n,k)$-voting rule} is a function $f$ that takes a preference profile $\succ\ \in \calL^n(n)$ and returns a subset $S\in {[n]\choose k}$, to which we often refer as a \textit{committee}.
For an election $\instance=([n],k,\succ)$ and a metric $d$, the \textit{distortion $\dist(S,\instance;d)$ of $S\subseteq A$ under $d$} is the ratio between the social cost of the committee and the minimum social cost of any committee; i.e.,
\[
    \dist(S, \instance; d) = \frac{\SC(S,A;d)}{\min_{S' \in {A\choose k}} \SC(S',A;d)}.
\]
For an election $\instance=(A,k,\succ)$, the \textit{distortion $\dist(S,\instance)$ of a committee $S\subseteq A$} is then defined as the worst-case distortion over all metrics consistent with the ranking profile $\succ$; i.e.,
\[
    \dist(S, \instance) = \sup_{d \rhd \succ} \dist(S, \instance; d).
\]
Finally, for an $(n,k)$-voting rule $f$, the distortion of $f$ is defined as the worst-case distortion of its output across all possible elections; i.e.,
\[
\dist(f) = \sup_{\succ\in \calL^n(n)} \dist(f(\succ), ([n],k,\succ)).
\]
Throughout the paper, we study the distortion that voting rules can achieve under different social costs.

\subsection{Computing the Order From an Election}
An essential property in line metric settings is the ability to determine the order of agents based on their preferences. This result has been established in prior work. Specifically, \citet{elkind2014recognizing} and \citet{babashah2024distortionmultiwinnerelectionsline} demonstrate that if voters’ preference lists are pairwise distinct, it is possible to uniquely determine the ordering on the line, along with the ordering of the non-Pareto-dominated alternatives. While their setting differentiates between voters and alternatives, this result naturally extends to our context, where agents serve as both voters and candidates. 
In our setting, this follows directly from a simpler observation: for any three agents, their relative order on the line can be reconstructed from their preference rankings.
We state this result as a lemma, which serves as a foundation for many results in this paper as it guarantees that the order of agents in any election can be uniquely identified. 

\begin{lemma}[\citet{elkind2014recognizing}, \citet{babashah2024distortionmultiwinnerelectionsline}]\label{lem:agents-order}
For every election $\instance=([n],k,\succ)$, we can compute a permutation $\pi\colon [n]\to[n]$ of the agents such that, for any consistent position vector $x\in (-\infty,\infty)^n$ with $x\ \rhd \succ$, we have either 
$x_{\pi(1)} \leq x_{\pi(2)} \leq \dots \leq x_{\pi(n)}$ 
or 
$x_{\pi(n)} \leq x_{\pi(n-1)} \leq \dots \leq x_{\pi(1)}$.
\end{lemma}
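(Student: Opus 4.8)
The plan is to reduce the claim to a purely local statement about triples of agents and then stitch the local orderings together. The key observation is that for any metric $d \rhd \succ$ on the line, and any three agents $a, b, c$, exactly one of them lies "between" the other two, and this betweenness relation is encoded in the rankings: if $b$ is between $a$ and $c$ on the line, then $d(a,b) < d(a,c)$ and $d(c,b) < d(c,a)$, so agent $a$ ranks $b$ above $c$ and agent $c$ ranks $b$ above $a$. Conversely, the three possible betweenness patterns are mutually exclusive given any fixed consistent metric, so we can \emph{read off} from $\succ_a$, $\succ_b$, $\succ_c$ which agent is the middle one. (One must be slightly careful because of the agent-dependent tie-breaking allowed in the definition of consistency; I would handle this by noting that a tie $d(a,b)=d(a,c)$ only occurs when the configuration is degenerate, and argue that the reconstructed betweenness is still consistent with \emph{every} metric realizing $\succ$ — or simply restrict attention to the perturbed, tie-free metrics as the footnote permits, since the permutation $\pi$ we output must work for all consistent $x$ simultaneously.)

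Next I would show that the collection of these pairwise-consistent betweenness facts determines a linear order up to reversal. Concretely: pick any two agents, say $1$ and $2$; for every other agent $a$, the triple $\{1,2,a\}$ tells us whether $a$ is between $1$ and $2$, on the far side of $1$, or on the far side of $2$. This already partitions $[n]\setminus\{1,2\}$ into three blocks. Within each block — and to merge the blocks into one chain — I would use additional triples: for two agents $a, a'$ both on the far side of $1$ (say), the triple $\{1, a, a'\}$ tells us which of $a, a'$ is closer to $1$, hence their relative order. Iterating this, one obtains a total preorder; antisymmetry of the final order (strictness) again follows from the tie-free / perturbation argument. The permutation $\pi$ is then defined by listing agents according to this chain, and by construction any consistent $x$ must respect this chain or its reverse, since every adjacent comparison in $\pi$ was forced by a betweenness fact that holds for all consistent metrics. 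This is exactly where I would invoke the cited results of \citet{elkind2014recognizing} and \citet{babashah2024distortionmultiwinnerelectionsline}, which carry out precisely this reconstruction (in their language, recognizing a $1$-Euclidean profile and outputting the witnessing order); the only adaptation needed is the trivial one that here voters and candidates are the same set, which if anything simplifies their argument since every agent is their own top choice and no Pareto-domination subtleties arise.

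The main obstacle, and the part worth spelling out carefully, is the betweenness-from-rankings step in the presence of the tie-breaking freedom: I need that the \emph{same} permutation works for \emph{every} $x \rhd \succ$, not just one. The clean way around this is to first observe that if two distinct position vectors $x, x'$ are both consistent with $\succ$, then they induce the same strict betweenness relation on every triple that is non-degenerate under at least one of them, and degenerate triples impose no constraint; hence the chain built above is forced (up to reversal) by each of $x, x'$ separately. I expect the write-up to be short once this is set up, since the bulk of the combinatorial work — showing triple-wise betweenness data assembles into a linear order — is standard and is exactly what the cited papers establish. I would therefore present the proof mostly as a reduction: state the triple reconstruction explicitly as the new ingredient, and cite \Cref{lem:agents-order}'s sources for the assembly.
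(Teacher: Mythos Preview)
Your proposal is correct and matches the paper's own treatment: the paper does not give a formal proof but merely cites \citet{elkind2014recognizing} and \citet{babashah2024distortionmultiwinnerelectionsline} and notes that in the peer-selection setting the result ``follows from a simpler fact\ldots: For any three agents, their relative order on the line can be reconstructed from their preference rankings.'' Your triple-betweenness-then-assemble argument is exactly this observation spelled out, with more care about ties than the paper itself expends.
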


We briefly highlight some structural features that distinguish the peer selection setting from the general setting. First, unlike the general voting setup with disjoint voters and candidates, the peer selection problem provides us with not only the order of the candidates but also the exact order of the voters. Second, this order can be efficiently reconstructed: we can identify extreme agents by observing who consistently appears at the bottom of others' rankings. Once such an agent is identified, their ranking determines the full order of agents along the line. Third, the mutual evaluations are more restricted, so we gain more information about relative preferences, such as how many voters prefer one candidate over another, than in the general voting setting. These features inform the design of improved mechanisms in peer selection and present new algorithmic and structural challenges.

For simplicity, whenever we fix an election throughout the paper, we will assume w.l.o.g., that the agents are already ordered, i.e., that the permutation $\pi$ stated in the lemma is the identity.
Hence, we denote the ordered agents by $1, \ldots, n$ and informally refer to this order as \textit{from left to right}.

\section{Utilitarian Social Cost}\label{sec:utilititarian}
Using Lemma~\ref{lem:agents-order}, we know that the order of agents can be fully determined from the preference profile $\ve{\succ}$. 
This allows us to identify the \textit{median agent}, who minimizes the total distance to all other agents. 
Hence, when selecting a single agent ($k=1$) under the utilitarian objective, the median agent is optimal. 
This observation holds regardless of whether the individual cost is defined additively or via the $q$-cost, since both notions coincide when $k=1$.

For larger committee sizes ($k > 1$), it becomes necessary to define how a voter's distances to the selected agents are aggregated. 
In this section, we study two aggregation rules: one that considers the sum of all distances to selected agents in \Cref{subsec:uti-add}, and one that considers the distance to the $q$th closest selected agent in \Cref{subsec:uti-q}.

\subsection{Utilitarian Additive Cost}\label{subsec:uti-add}

In this section, we focus on the \textit{utilitarian additive} objective for committee selection. This objective aims to minimize the \textit{utilitarian additive social cost}, which is defined as the total distance from all agents to the selected committee. Formally, the \textit{utilitarian additive social cost} of a committee $S' \in {A \choose k}$ is given by
\[
\SC(S', A; d) = \sum_{a \in A} \sum_{b \in S'} d(a, b).
\]
The cost of each agent $a \in A$ is the sum of their distances to all members of the selected committee $S'$, and the overall social cost is the sum of these individual costs across all agents in $A$.

It is straightforward to verify that the optimal committee can be directly identified from the preference profile when the committee size is both $k = 1$ and $k = 2$. 
For $k = 1$, the optimal committee consists of the median agent, as discussed above. 
For $k = 2$, the optimal committee consists of the two median agents if $n$ is even, or the median agent together with the agent closest to them, if $n$ is odd. 
In both cases, the optimal agents can be determined solely from $\ve{\succ}$, without knowledge of the underlying metric, yielding a distortion of $1$.

We now introduce our voting rule for selecting a committee of size $k \geq 2$, called the \medianAlt\ rule. 
This rule can be viewed as a generalization of the \textit{Polar Comparison} rule proposed by \citet{babashah2024distortionmultiwinnerelectionsline} for multi-winner elections. 
In contrast to the Polar Comparison rule, the \medianAlt\ rule (i) applies to all committee sizes $k$, (ii) is simpler since it does not require case distinctions or pairwise comparisons, and (iii) achieves improved distortion guarantees under peer selection.

\begin{Vrule}[\medianAlt]
Compute the order of the agents $1,\ldots,n$ on the line. 
If $n-k$ is even, return $S=\big\{ \tfrac{n-k}{2} + 1,\; \tfrac{n-k}{2} + 2,\; \dots,\; \tfrac{n+k}{2} \big\}$.
If $n-k$ odd and $\big\lceil \tfrac{n-k}{2} \big\rceil \succ_m \big\lceil \tfrac{n+k}{2} \big\rceil$, return $S=\big\{ \big\lceil \tfrac{n-k}{2} \big\rceil, \, \big\lceil \tfrac{n-k}{2} \big\rceil + 1, \, \dots, \, \big\lceil \tfrac{n+k}{2} \big\rceil - 1 \big\}$.
Otherwise, return $S=\big\{ \big\lceil \tfrac{n-k}{2} \big\rceil + 1, \, \big\lceil \tfrac{n-k}{2} \big\rceil + 2, \, \dots, \, \big\lceil \tfrac{n+k}{2} \big\rceil \big\}$.
\end{Vrule}

On an intuitive level, the rule can be understood as constructed by going through the rank list of the median(s) agent(s), selecting agents in the order reported by them but alternating between those to their left and to their right.
This ensures a balanced representation of agents on both sides.

For larger committees, a key ingredient in our analysis is that there always exists an optimal committee consisting of consecutive agents.
We formalize this observation in the following lemma.

\begin{restatable}{lemma}{lemUtilConsecutive}\label{lem:util-additive-consecutive}
For any election $\instance = (A, k, \succ)$ and consistent metric $d\ \rhd \succ$, there exists $i \in [n-k+1]$ such that, defining $S^* = \{i, i+1, \dots, i+k-1\}$, we have $\SC(S^*,A;d)=\min\big\{\SC(S',A;d)\mid S'\in {A\choose k}\big\}$.
\end{restatable}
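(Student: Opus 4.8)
The plan is to prove the statement by an exchange argument: start from an arbitrary optimal committee $S'$ and transform it, step by step, into a committee of consecutive agents without ever increasing the social cost. Since the agents are ordered $1,\dots,n$ on the line (by \Cref{lem:agents-order}), a committee $S'$ is a set of indices; if it is not an interval, there is a "gap", i.e., indices $j < \ell$ with $j,\ell \in S'$ but some $m$ with $j < m < \ell$ and $m \notin S'$. I would pick the leftmost such configuration and argue that we can shift the portion of $S'$ to the left of the gap one step to the right (or the portion to the right of the gap one step to the left), strictly decreasing the number of gaps while weakly decreasing the cost; iterating terminates at an interval.

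The key quantitative step is the following local move. Suppose $S'$ contains an agent at position $i$ but not at position $i+1$ (a "right boundary" of a block). Consider replacing the maximal left-block $\{i', i'+1, \dots, i\} \subseteq S'$ ending at $i$ by $\{i'+1, \dots, i+1\}$ — i.e., sliding this whole block one unit to the right toward the next block of $S'$. The change in additive social cost is $\sum_{a\in A}\big(d(a,i+1) - d(a,i')\big)$, since every agent's cost changes only through losing the term $d(a,i')$ and gaining $d(a,i+1)$. Because $i' \le i < i+1$ and all agents lie on the line in index order, for an agent $a$ with $x_a \le x_{i'}$ we have $d(a,i+1) - d(a,i') = x_{i+1} - x_{i'} \ge 0$, while for $a$ with $x_a \ge x_{i+1}$ we get $d(a,i+1)-d(a,i') = x_{i'} - x_{i+1} \le 0$, and in between the difference is $(x_{i+1}-x_a)-(x_a-x_{i'})$, which is monotone in $x_a$. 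So the sign of the total change is governed by a balance between the number of agents on the far left versus the far right of the block, exactly as in the classical $1$-median argument. The point is that \emph{one} of the two slides (the left-block to the right, or the symmetric right-block to the left) must be non-increasing: if sliding the left block rightward strictly increases the cost, then by the symmetric computation sliding the right block leftward is non-increasing (the two cost-change expressions are negatives of each other up to the agents strictly inside, which can be handled by choosing the slide that moves the block whose "outside mass" is on the heavier side). I would phrase this cleanly by defining, for a gap between consecutive blocks $B_L$ (ending at $i$) and $B_R$ (starting at $\ell$, $\ell > i+1$), the quantity comparing $|A(-\infty, x_i)|$-type counts on both sides and picking the direction accordingly; a single step reduces the gap width by one, and since $S'$ has finitely many agents and bounded total gap width, the process halts at a consecutive committee $S^*$ with $\SC(S^*,A;d) \le \SC(S',A;d) = \OPT$, hence equality.

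The main obstacle I anticipate is handling the agents located \emph{strictly between} the endpoints of the block being slid (and agents located within the gap itself): for these the cost change is neither uniformly $+\Delta$ nor $-\Delta$, so the clean "count left vs. count right" dichotomy needs care. I would resolve this by sliding by just one unit at a time and tracking the cost change as a sum of three monotone pieces (agents left of the moving block, inside it, right of it); the inside-contribution is antisymmetric under the left-slide/right-slide choice, so summing the two candidate moves shows their cost changes add to something $\le 0$, forcing at least one of them to be non-positive. A second, minor subtlety is potential ties in positions (several agents at the same point), but since \Cref{lem:agents-order} still gives a valid ordering and all the inequalities above are weak, ties cause no trouble. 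Finally one should check the boundary cases where the sliding block would run off the end of $\{1,\dots,n\}$ — but that cannot happen, because the block is being slid \emph{toward} the next block of $S'$, which lies strictly inside $\{1,\dots,n\}$, so there is always room.
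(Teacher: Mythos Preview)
Your exchange-argument strategy is the same as the paper's, and the conclusion you are aiming for---that at least one of the two slides is non-increasing---is correct. However, your justification for that step is wrong. You claim that the two cost changes sum to something $\le 0$, but this is false. Take $n=10$, positions $x_1=0,\ x_2=0.01,\ x_3=1,\ x_4=2,\ x_j=j-2$ for $j\ge 5$, and $S'=\{1,3,4\}$ with the gap at index $2$. Sliding $B_L=\{1\}$ right swaps $1$ for $2$ and changes the cost by $g(2)-g(1)=-0.08$; sliding $B_R=\{3,4\}$ left swaps $4$ for $2$ and changes it by $g(2)-g(4)=+9.94$. The sum is $+9.86>0$. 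The ``antisymmetric inside contribution'' heuristic does not hold here because the two slides replace different elements ($i'$ versus $r$) and there is no reason for these effects to cancel.

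What you are missing is the decomposition that makes the whole lemma almost trivial: for additive cost,
\[
\SC(S',A;d)=\sum_{s\in S'} g(s),\qquad g(s)=\sum_{a\in A} d(a,s),
\]
so the optimal committee is simply the $k$ agents with smallest $g$-value. Since $g(j)=\sum_a |x_a-x_j|$ is a convex function of position evaluated at increasing points $x_1\le\cdots\le x_n$, the sequence $g(1),\dots,g(n)$ is unimodal with its minimum at the median index. Hence the $k$ smallest values are attained on a consecutive block. This is exactly the fact the paper invokes (via a result of Babashah et al.), and it also repairs your argument directly: if the median index $m$ satisfies $m\ge i+1$ then $g(i+1)\le g(i')$ and the left slide is non-increasing; otherwise $m\le i<\ell-1$, so $g(\ell-1)\le g(r)$ and the right slide works. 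No ``sum $\le 0$'' claim is needed.
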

\begin{proof}
Let $\instance=(A,k,\succ)$ with $A=[n]$ and $d$ be as in the statement, and let also $x\ \rhd \succ$ be a consistent position vector defining $d$.
The result is trivial if $k=1$, so we assume that $k\geq 2$ in what follows.
We first observe that, by Lemma~$2$ in \citet{babashah2024distortionmultiwinnerelectionsline}, $\SC(a,A;d)\leq \SC(b,A;d)$ holds for $a,b\in A$ are such that either (1) $a,b\geq \frac{n+1}{2}$ and $a-\frac{n+1}{2} \leq b-\frac{n+1}{2}$, or (2) $a,b\leq \frac{n+1}{2}$ and $\frac{n+1}{2} -a \leq \frac{n+1}{2}-b$.
In simple words, if two agents lie on the same side of the median agent(s), the agent closer to them has a lower cost. 
Thus, there exist $S^*\in {A\choose k}$ that minimizes the social cost such that $\{m_1,m_2\} \subseteq S^*$, where $m_1=\big\lfloor \frac{n+1}{2}\big\rfloor$ and $m_2=\big\lceil \frac{n+1}{2}\big\rceil$ denote the median agent(s) (note that $m_1=m_2$ if $n$ is odd).
Now, suppose that $S^*$ is not consecutive.
Since $m_1,m_2\in S^*$, there exists an agent $a \notin S^*$ and $b \in S^*$ such that either (1) $a,b\geq \frac{n+1}{2}$ and $a-\frac{n+1}{2} \leq b-\frac{n+1}{2}$, or (2) $a,b\leq \frac{n+1}{2}$ and $\frac{n+1}{2} -a \leq \frac{n+1}{2}-b$.
But then, using the result by \citeauthor{babashah2024distortionmultiwinnerelectionsline} again, we obtain that $\SC((S^*\setminus \{b\})\cup \{a\},A;d)\leq \SC(S^*,A;d)$; i.e., we can exchange $b$ by $a$ and the social cost of the committee does not increase.
By repeating this procedure, we reach a committee with consecutive agents and minimum social cost, as claimed in the statement.
\end{proof}
We now present our main result in terms of utilitarian additive social cost, regarding the distortion guaranteed by \medianAlt.  
\begin{restatable}{theorem}{thmUtilAdd}\label{thm:Util-Add}
    For every $n,k\in \NN$ with $n\ge k\ge 2$, under the utilitarian additive social cost the distortion of \medianAlt\ is at most
    \[
    \frac{1}{k}\left(2n+\chi-2\sqrt{\,n\bigl(n-k+\chi\bigr)}\right),
    \]
    where $\chi=1$ if $n-k$ is odd and $\chi=0$ otherwise.
\end{restatable}

The bound ranges between $1$ and $2$: it is closer to $1$ when $k$ is small relative to $n$, and approaches $2$ as $k$ approaches $n$.
\Cref{fig:plot-uti-add} illustrates the bound for $n=100$ and $k$ between $2$ and $n-1$. 
\begin{figure}
    \centering
\begin{tikzpicture}[xscale=0.07,yscale=0.07]
\draw[thick,-latex] node[left] {$1$} (-1,0) -- (105,0) node[right] {$k$};
\draw[thick,-latex] node[below] {$0$} (0,-1) -- (0,55) node[above] {distortion};
\foreach \n in {10,20,30,40,50,60,70,80,90,100}{
    \draw (\n,1) -- (\n,-1) node[below] {$\n$};
}
\draw (1,10) -- (-1,10) node[left] {$1.2$};
\draw (1,20) -- (-1,20) node[left] {$1.4$};
\draw (1,30) -- (-1,30) node[left] {$1.6$};
\draw (1,40) -- (-1,40) node[left] {$1.8$};
\draw (1,50) -- (-1,50) node[left] {$2$};

\draw[semithick,blue] plot[mark=., mark color=none,mark options={fill=black,draw opacity=0},mark size=3pt] coordinates { 
(2, 0.2525) (3, 0.1684) (4, 0.5103) (5, 0.4082) (6, 0.7734) (7, 0.6629) (8, 1.0421) (9, 0.9263) (10, 1.3167) (11, 1.197) (12, 1.5974) (13, 1.4745) (14, 1.8844) (15, 1.7588) (16, 2.178) (17, 2.0499) (18, 2.4786) (19, 2.3482) (20, 2.7864) (21, 2.6537) (22, 3.1018) (23, 2.9669) (24, 3.4251) (25, 3.2881) (26, 3.7567) (27, 3.6176) (28, 4.0971) (29, 3.9558) (30, 4.4467) (31, 4.3032) (32, 4.8059) (33, 4.6603) (34, 5.1753) (35, 5.0275) (36, 5.5556) (37, 5.4054) (38, 5.9472) (39, 5.7947) (40, 6.3508) (41, 6.1959) (42, 6.7673) (43, 6.6099) (44, 7.1974) (45, 7.0374) (46, 7.642) (47, 7.4794) (48, 8.102) (49, 7.9367) (50, 8.5786) (51, 8.4104) (52, 9.073) (53, 8.9018) (54, 9.5865) (55, 9.4122) (56, 10.1205) (57, 9.943) (58, 10.6769) (59, 10.4959) (60, 11.2574) (61, 11.0729) (62, 11.8643) (63, 11.676) (64, 12.5) (65, 12.3077) (66, 13.1674) (67, 12.9709) (68, 13.8698) (69, 13.6688) (70, 14.6111) (71, 14.4053) (72, 15.3958) (73, 15.1849) (74, 16.2295) (75, 16.0131) (76, 17.1187) (77, 16.8964) (78, 18.0716) (79, 17.8428) (80, 19.0983) (81, 18.8625) (82, 20.2117) (83, 19.9682) (84, 21.4286) (85, 21.1765) (86, 22.7714) (87, 22.5097) (88, 24.2716) (89, 23.9989) (90, 25.9747) (91, 25.6893) (92, 27.9519) (93, 27.6513) (94, 30.3246) (95, 30.0054) (96, 33.3333) (97, 32.9897) (98, 37.6101) (99, 37.2302) };
\end{tikzpicture}
    \caption{Distortion of \medianAlt\ stated in \Cref{thm:Util-Add} for $n=100$ and $k\in \{2,\ldots,99\}$.}
    \label{fig:plot-uti-add}
\end{figure}

In order to prove \Cref{thm:Util-Add}, we will show that we can reduce any metric to another one where all agents are in one out of two locations.
As a first step, we prove that an agent (or set of agents at the same location) can always be moved in one direction such that the distortion does not improve, as long as they do not pass through other agents' locations.
To this end, for a position vector $x\in (-\infty,\infty)^n$, a position $\bar{x}\in (-\infty,\infty)$ such that $A(\bar{x})\neq \emptyset$, and $\delta>0$, we define the \textit{shifted position vectors} $x^-(\bar{x},\delta),x^+(\bar{x},\delta)\in (-\infty,\infty)^n$ as follows:
\begin{align*}
    x^-_a(\bar{x},\delta)& =x_a-\delta \text{ for every } a\in A(\bar{x}),\quad x^-_a(\bar{x},\delta)=x_a \text{ for every } a\in A\setminus A(\bar{x}),\\
    x^+_a(\bar{x},\delta)& =x_a+\delta \text{ for every } a\in A(\bar{x}),\quad x^+_a(\bar{x},\delta)=x_a \text{ for every } a\in A\setminus A(\bar{x}).
\end{align*}

\begin{restatable}{lemma}{lemAgentMove}\label{lem:agent-move-distortion}
Let $\instance=(A,k,\succ)$ be an election with $A=[n]$, let $S\in {A\choose k}$ be the committee selected by \medianAlt\ on this election, and let $x\in (-\infty,\infty)^n$ with $x\ \rhd \succ$ be a consistent position vector.
Let $\bar{x}\in (-\infty,\infty)$ be such that $A(\bar{x})\neq \emptyset$, let $\delta>0$ be such that $A((\bar{x}-\delta,\bar{x}+\delta))=A(\bar{x})$ and let $x^-=x^-(\bar{x},\delta)$ and $x^+=x^+(\bar{x},\delta)$.
Then, for all preference profiles $\succ^-,\succ^+$ such that $x^-\ \rhd \succ^-$ and $x^+\ \rhd \succ^+$, at least one of the following inequalities holds: 
\[
    \dist(S,(A,k,\succ^-);x^-)\geq \dist(S,\instance;x),\quad \text{or}\quad \dist(S,(A,k,\succ^+);x^+)\geq \dist(S,\instance;x).
\]
\end{restatable}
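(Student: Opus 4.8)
The plan is to analyze how the social cost of the selected committee $S$ and of the optimal committee change as we shift the agents at $\bar x$ by $\pm\delta$, and to argue that the distortion is a convex (or at least "no-local-max") function of this shift on the interval $[-\delta,\delta]$, so that one of the two endpoints cannot decrease it. Write $P = A(\bar x)$ for the agents being moved and $p = |P|$. For $t\in[-\delta,\delta]$ let $x(t)$ denote the position vector in which the agents of $P$ sit at $\bar x + t$ and all other agents are fixed; note $x(0)=x$, $x(-\delta)=x^-$, $x(\delta)=x^+$, and by the choice of $\delta$ all of these are consistent with $\succ$ (the relative order of agents never changes, so the permutation from \Cref{lem:agents-order} is unchanged and in particular $S$ is still a valid "middle" committee and remains the output of \medianAlt). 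Crucially, whether an agent of $P$ is in $S$ or not is the same for all $t$, because membership in the middle block depends only on the order.

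The key structural observation I would establish is that both $\SC(S,A;x(t))$ and $\min_{S'}\SC(S',A;x(t))$ are piecewise-linear convex functions of $t$ on $[-\delta,\delta]$. For the numerator this is immediate: $\SC(S,A;x(t)) = \sum_{a\in A}\sum_{b\in S} |x(t)_a - x(t)_b|$, and each term is either constant (both $a,b\in P$ or both $a,b\notin P$) or of the form $|(\bar x + t) - x_b|$ (exactly one of $a,b$ in $P$), which is convex in $t$; hence the sum is convex and piecewise-linear, with breakpoints only where $\bar x + t$ crosses some $x_b$ — but there are none in the open interval by the hypothesis $A((\bar x-\delta,\bar x+\delta))=A(\bar x)$. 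So in fact $\SC(S,A;x(t))$ is \emph{affine} on $[-\delta,\delta]$. For the denominator: for each fixed candidate committee $S'$, the same argument shows $\SC(S',A;x(t))$ is affine on $[-\delta,\delta]$; the minimum over the finitely many $S'\in\binom{A}{k}$ of affine functions is concave and piecewise-linear. Thus the distortion $\dist(S,\cdot;x(t))$ is a ratio (affine)/(concave-positive) on $[-\delta,\delta]$.

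Now I invoke the elementary fact that a function of the form $\varphi(t) = \ell(t)/c(t)$, with $\ell$ affine and $c$ positive and concave on an interval, attains its maximum over that interval at an endpoint — equivalently, it has no interior strict local maximum. One clean way to see it: on any subinterval where $c$ is affine (and there are finitely many such, determined by the breakpoints of the lower envelope $\min_{S'}$), $\varphi$ is a ratio of two affine functions, hence monotone; so on each piece the max is at a piece-endpoint, and at the interior breakpoints $c$ is "kinked concave" which only makes $\varphi$ jump up-then-down is impossible — more carefully, at a breakpoint the left derivative of $c$ is $\ge$ the right derivative, and a short computation with the quotient rule shows $\varphi$ cannot have a local max there either unless $\ell$ is constant, in which case everything is trivial. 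Hence $\max_{t\in[-\delta,\delta]}\varphi(t) = \max\{\varphi(-\delta),\varphi(\delta)\}$, which gives $\dist(S,(A,k,\succ);x) = \varphi(0) \le \max\{\varphi(-\delta),\varphi(\delta)\}$. Finally, since $x^-\rhd\succ$ and also $x^-\rhd\succ^-$, and the distortion of $S$ is a supremum over \emph{all} consistent metrics, we have $\dist(S,(A,k,\succ^-);x^-)\ge \dist(S,(A,k,\succ^-);x^-)$ trivially — more to the point, $\varphi(-\delta) = \dist(S,(A,k,\succ);x^-)$ and the right-hand side $\dist(S,(A,k,\succ^-);x^-)$ is the same quantity since distortion for a fixed metric $\dist(S,\cdot;d)$ depends only on $d$, not on which consistent profile we name. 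So whichever of $\varphi(\pm\delta)$ is the larger yields the desired inequality.

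I expect the main obstacle to be the careful bookkeeping at the breakpoints of the lower envelope $\min_{S'}\SC(S',A;x(t))$: one must confirm that $\dist(S,\cdot;x(t))$ has no interior local maximum \emph{including} at those kink points, not merely on the open linear pieces. The cleanest route is probably to avoid derivatives entirely and argue directly: suppose $\varphi(0) > \max\{\varphi(-\delta),\varphi(\delta)\}$; pick the breakpoints $-\delta = t_0 < t_1 < \dots < t_m = \delta$ of the piecewise-linear function $c$; on each $[t_j,t_{j+1}]$ both $\ell$ and $c$ are affine so $\varphi$ is monotone there, hence $\varphi(0)$, lying in some $[t_j,t_{j+1}]$, is $\le \max\{\varphi(t_j),\varphi(t_{j+1})\}$; iterating, $\varphi(0)\le\max_j \varphi(t_j)$, and a telescoping/monotonicity argument along the chain of breakpoints forces $\max_j\varphi(t_j) = \max\{\varphi(t_0),\varphi(t_m)\}$ — this last step needs that $\varphi$ restricted to the finite set $\{t_j\}$ has its max at an end, which follows because on each piece $\varphi$ is monotone and a max at an interior $t_j$ would require $\varphi$ to increase on $[t_{j-1},t_j]$ and decrease on $[t_j,t_{j+1}]$; translating these two monotonicities into inequalities on the affine pieces of $\ell$ and $c$ and using concavity of $c$ at $t_j$ yields a contradiction unless $\ell$ is constant on a neighborhood, which again collapses to the trivial case. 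That contradiction is the one genuinely fiddly computation and is where I would spend the most care.
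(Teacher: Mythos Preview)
Your approach is correct but more elaborate than necessary, and differs from the paper's route in an instructive way. You parametrize by $t\in[-\delta,\delta]$, observe that $\SC(S,A;x(t))$ is affine and $\min_{S'}\SC(S',A;x(t))$ is concave (as a minimum of affines), and then argue that the ratio affine/positive-concave is quasi-convex via a piecewise M\"obius / breakpoint analysis. This works, and the ``fiddly computation'' at a breakpoint you flag does go through (using $\ell(t_j)\ge 0$ and concavity $c'_-(t_j)\ge c'_+(t_j)$ one gets a contradiction). The paper sidesteps all of this by \emph{fixing} a single optimal committee $S^*$ for the original metric $x$ and using it as the comparison committee at both $x^-$ and $x^+$; then both numerator and denominator shift by symmetric amounts $\pm\Delta(S)$ and $\pm\Delta(S^*)$, and the trivial fact that for $y,z>0$ one of $\tfrac{y-\Delta_1}{z-\Delta_2}\ge\tfrac{y}{z}$ or $\tfrac{y+\Delta_1}{z+\Delta_2}\ge\tfrac{y}{z}$ must hold finishes the proof. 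In effect, the paper replaces your concave denominator by a single affine lower bound, which suffices because $\dist(S,\cdot;x^\pm)\ge \SC(S,A;x^\pm)/\SC(S^*,A;x^\pm)$.

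Two minor remarks on your write-up. First, the assertion that ``all of these are consistent with $\succ$'' is not needed and is not generally true: moving the block $P$ can flip pairwise rankings of agents in $P$ over agents outside $P$ even though the left-to-right order is preserved. Fortunately your argument never actually uses this---$\varphi(t)$ is defined purely from positions. Second, you need not claim that $S$ ``remains the output of \medianAlt'' for the shifted profiles; the lemma is stated for the fixed committee $S$ from the original election, and at the endpoints ties may in fact change the rule's output. Neither issue affects the correctness of your argument, only the exposition.
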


The proof of this lemma relies on the linearity of the objective function: If moving an agent or set of agents to the right has a certain effect on the social cost, moving them to the left has the opposite effect.
Then, the ratio between the social cost of any two fixed committees must not improve in one of these directions.
Since the committee selected by \medianAlt\ remains fixed as long as the order of agents does not change, and changing the optimal set can only lead to a worse distortion, the result follows.
We now proceed with the formal proof.

\begin{proof}[Proof of \Cref{lem:agent-move-distortion}]
    Let $\instance = (A, k, \succ)$, $S$, $x$, $\bar{x}$, $\delta$, $x^-$, $x^+$, $\succ^-$, and $\succ^+$ be as in the statement.
    We denote by $d$, $d^-$, and $d^+$ the distance metrics associated to $x$, $x^-$, and $x^+$, respectively.

    We first consider an arbitrary committee $S'\in {A\choose k}$ and compute the difference between the social cost of this committee under metric $d$ and under both of the other metrics.
    From the definition of the additive social cost, for any $a\in A$ such that $x_a< \bar{x}$ we have that
    \begin{align}
        \SC(S',a;x^-) & = \sum_{b\in S'\cap A(\bar{x})}d^-(a,b) + \sum_{b\in S'\setminus A(\bar{x})}d^-(a,b)\nonumber \\
        & =\sum_{b\in S'\cap A(\bar{x})}(d(a,b)-\delta) + \sum_{b\in S'\setminus A(\bar{x})}d(a,b)\nonumber\\
        & = \SC(S',a;x) - \delta\, |S'\cap A(\bar{x})|.\label{eq:diff-sc-d-dmin-left}
    \end{align}
    Similarly, for any $a\in A$ such that $x_a> \bar{x}$ we have that
    \begin{align}
        \SC(S',a;x^-) & = \sum_{b\in S'\cap A(\bar{x})}d^-(a,b) + \sum_{b\in S'\setminus A(\bar{x})}d^-(a,b)\nonumber \\
        & =\sum_{b\in S'\cap A(\bar{x})}(d(a,b)+\delta) + \sum_{b\in S'\setminus A(\bar{x})}d(a,b)\nonumber\\
        & = \SC(S',a;x) + \delta\, |S'\cap A(\bar{x})|.\label{eq:diff-sc-d-dmin-right}
    \end{align}
    Finally, for every $a$ with $x_a=\bar{x}$, i.e., $a\in A(\bar{x})$, we have that
    \begin{align}
        \SC(S',a;x^-) & = \sum_{b\in S'\cap A((-\infty,\bar{x}))}d^-(a,b) + \sum_{b\in S'\cap A((\bar{x},+\infty))}d^-(a,b)\nonumber \\
        & =\sum_{b\in S'\cap A((-\infty,\bar{x}))}(d(a,b)-\delta) + \sum_{b\in S'\cap A((\bar{x},+\infty))}(d^-(a,b)+\delta) \nonumber\\
        & = \SC(S',a;x) + \delta\, (|S'\cap A((\bar{x},+\infty))|-|S'\cap A((-\infty,\bar{x}))|).\label{eq:diff-sc-d-dmin-xbar}
    \end{align}
    Combining \cref{eq:diff-sc-d-dmin-left,eq:diff-sc-d-dmin-right,eq:diff-sc-d-dmin-xbar}, we obtain from the definition of utilitarian social cost that
    \begin{align*}
        \SC(S',A;x^-) & = \sum_{a\in A}\SC(S',a;d^-) \\
        & = \SC(S',A;x) - \delta\, |S'\cap A(\bar{x})| \,\big(|A(-\infty,\bar{x})|-|A(\bar{x},+\infty)|\big) \\
        & \phantom{{}={}} - \delta\,|A(\bar{x})|\, \big(|S'\cap A((-\infty,\bar{x}))|-|S'\cap A((\bar{x},+\infty))|\big).
    \end{align*}
    One can proceed analogously for $d^+$ to obtain
    \begin{align*}
        \SC(S',A;x^+) & = \SC(S',A;x) + \delta\, |S'\cap A(\bar{x})| \,\big(|A(-\infty,\bar{x})|-|A(\bar{x},+\infty)|\big) \\
        & \phantom{{}={}} + \delta\,|A(\bar{x})|\, \big(|S'\cap A((-\infty,\bar{x}))|-|S'\cap A((\bar{x},+\infty))|\big).
    \end{align*}
    Hence, there exists a value $\Delta(S')$, that only depends on the committee $\delta$, such that
    \begin{equation}
        \SC(S',A;x^-)=\SC(S',A;x)-\Delta(S'),\qquad \SC(S',A;x^+)=\SC(S',A;x)+\Delta(S').\label{eq:diff-committees}
    \end{equation}
    
    We let $S^*$ denote an optimal committee for the metric $d$ in what follows, i.e., a committee such that $\SC(S^*,A;x)=\min\big\{\SC(S',A;x)\mid S'\in {A\choose k}\big\}$.
    We observe that
    \begin{equation}
        \dist(S,(A,k,\succ^-);x^-) = \frac{\SC(S,A;x^-)}{\min_{S'\in {A\choose k}}\SC(S',A;x^-)} \geq \frac{\SC(S,A;x^-)}{\SC(S^*,A;x^-)} = \frac{\SC(S,A;x)-\Delta(S)}{\SC(S^*,A;x)-\Delta(S^*)},\label{eq:dist-xmin-lb}
    \end{equation}
    and
    \begin{equation}
        \dist(S,(A,k,\succ^-);x^+) = \frac{\SC(S,A;x^+)}{\min_{S'\in {A\choose k}}\SC(S',A;x^+)} \geq \frac{\SC(S,A;x^+)}{\SC(S^*,A;x^+)} = \frac{\SC(S,A;x)+\Delta(S)}{\SC(S^*,A;x)+\Delta(S^*)}.\label{eq:dist-xplus-lb}
    \end{equation}
    If either $\SC(S^*,A;x)=\Delta(S^*)$ or $\SC(S^*,A;x)=-\Delta(S^*)$ holds, the distortion becomes unbounded in one of the new instances and the result follows directly. Otherwise, it follows from the simple property stated in the following claim.
    \begin{claim}
        For any values $y,z\in \RR_+$ and $w\in (-z,z)$, we have either $\frac{y+w}{z+w} \geq \frac{y}{z}$ or $\frac{y-w}{z-w} \geq \frac{y}{z}$.
    \end{claim}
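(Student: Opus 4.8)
The plan is to prove the claim by a single sign computation: I will show that the two quantities $\frac{y+w}{z+w}-\frac{y}{z}$ and $\frac{y-w}{z-w}-\frac{y}{z}$ have opposite (weak) signs — each equals $w(z-y)$ divided by a strictly positive quantity, carrying the sign of $w(z-y)$ in the first case and of $-w(z-y)$ in the second — so at least one of them is $\geq 0$, which is exactly the asserted dichotomy.

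First I would record the structural facts forced by the hypotheses. Since the interval $(-z,z)$ must contain the point $w$, it is nonempty, so $z>0$; and from $w\in(-z,z)$ we get $z+w>0$ and $z-w>0$, hence $z(z+w)>0$ and $z(z-w)>0$. This is the only place where the hypothesis $w\in(-z,z)$ is used, and it is precisely what licenses reading off the sign of each of the two fractions from the sign of its numerator.

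Next comes the computation, one line each. Writing the two terms over a common denominator,
\[
\frac{y+w}{z+w}-\frac{y}{z}=\frac{z(y+w)-y(z+w)}{z(z+w)}=\frac{w(z-y)}{z(z+w)},\qquad
\frac{y-w}{z-w}-\frac{y}{z}=\frac{z(y-w)-y(z-w)}{z(z-w)}=\frac{-w(z-y)}{z(z-w)}.
\]
Since both denominators are strictly positive, the first difference has the sign of $w(z-y)$ and the second has the sign of $-w(z-y)$; these are opposite, so they cannot both be strictly negative, giving the claim. No case split on the signs of $w$ or of $z-y$ is needed: if $w(z-y)=0$ (in particular if $w=0$ or $y=z$), both differences vanish and both stated inequalities hold with equality.

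There is essentially no obstacle here — the only point requiring care is establishing positivity of the denominators before arguing about signs. I would then close by noting how the claim completes the proof of \Cref{lem:agent-move-distortion}: taking $y=\SC(S,A;x)$ and $z=\SC(S^*,A;x)$, with the increments $\Delta(S),\Delta(S^*)$ in the roles of $w$ (the identical sign computation goes through verbatim even though the numerator and denominator increments differ), it shows that the right-hand side of \cref{eq:dist-xmin-lb} or that of \cref{eq:dist-xplus-lb} is at least $\SC(S,A;x)/\SC(S^*,A;x)=\dist(S,\instance;x)$; the remaining case of a vanishing denominator was already dispatched by the unboundedness observation just preceding the claim.
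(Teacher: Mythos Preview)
Your proof is correct and takes essentially the same approach as the paper: both reduce to the observation that, after clearing the strictly positive denominators $z(z+w)$ and $z(z-w)$, the two differences have numerators $w(z-y)$ and $-w(z-y)$ which cannot both be strictly negative. The paper phrases this as a proof by contradiction (deriving $zw<yw$ and $yw<zw$ simultaneously), while you compute the differences directly, but the underlying algebra is identical; your closing remark that the sign computation extends verbatim to distinct increments $\Delta(S),\Delta(S^*)$ is also correct, since the numerators become $z\Delta(S)-y\Delta(S^*)$ and its negative.
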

    \begin{proof}
        Suppose towards a contradiction that both $\frac{y+w}{z+w} < \frac{y}{z}$ and $\frac{y-w}{z-w} < \frac{y}{z}$ hold.
        Since $w<z$, the first inequality is equivalent to 
        \[
            z(y+w)< y(z+w) \Longleftrightarrow zw<yw.
        \]
        Since $w>-z$, the second inequality is equivalent to 
        \[
            z(y-w)< y(z-w) \Longleftrightarrow yw<zw.
        \]
        As the inequalities contradict each other, we conclude.
    \end{proof}
    
    Applying these properties to inequalities \eqref{eq:dist-xmin-lb} and \eqref{eq:dist-xplus-lb}, we obtain that either  
    \[
        \dist(S,(A,k,\succ^-);x^+) \geq \frac{\SC(S,A;x)+\Delta(S)}{\SC(S^*,A;x)+\Delta(S^*)} \geq \frac{\SC(S,A;x)}{\SC(S^*,A;x)} = \dist(S,\instance;x)
    \]
    or 
    \[
        \dist(S,(A,k,\succ^-);x^-) \geq \frac{\SC(S,A;x)-\Delta(S)}{\SC(S^*,A;x)-\Delta(S^*)} \geq \frac{\SC(S,A;x)}{\SC(S^*,A;x)} = \dist(S,\instance;x)
    \]
    holds, concluding the proof.
\end{proof}

We can use the previous lemma to conclude that, for every election and consistent metric, \medianAlt\ selects a committee such that, under another metric with only two locations, the distortion does not improve.
Indeed, we can iterating the argument in \Cref{lem:agent-move-distortion} to move (sets of) agents in non-extreme positions in their non-improving direction.
This procedure terminates with all agents in one of the original extreme positions $x_1$ or $x_n$ and that the distortion has not improved.
The following lemma formally states this fact.
\begin{restatable}{lemma}{lemWorstDistortion}\label{lem:worst-distortion}
    Let $\instance=(A,k,\succ)$ be an election with $A=[n]$, let $S\in {A\choose k}$ be the committee selected by \medianAlt\ on this election, and let $x\in (-\infty,\infty)^n$ with $x\ \rhd \succ$ be a consistent position vector.
    Then, there exists a position vector $x'\in (-\infty,\infty)^n$ such that $x'_a\in \{x_1,x_n\}$ for every $a\in A$ and $\dist(S',(A,k,\succ');x') \geq \dist(S,\instance,x)$, where $\succ'$ is any preference profile such that $x'\ \rhd \succ'$ and $S'\in {A\choose k}$ is the committee selected by \medianAlt\ on the election $(A,k,\succ')$. 
\end{restatable}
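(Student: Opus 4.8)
The plan is to prove this by iterating \Cref{lem:agent-move-distortion}. Fix the election, the committee $S$ selected by \medianAlt, and the consistent vector $x$, and recall we may assume the agents are labelled so that $x_1\le\dots\le x_n$. If $x_1=x_n$ then all agents coincide and $x'=x$ works, so assume $x_1<x_n$. We process the occupied locations one at a time: while some location $\bar x\notin\{x_1,x_n\}$ is occupied, we relocate the entire group $A(\bar x)$ in a direction that does not decrease the distortion, moving it until it collides with a neighbouring group and thereby reducing the number of occupied locations by one; the groups at $x_1$ and $x_n$ are never selected for relocation. After at most $n-2$ such merges every agent sits at $x_1$ or $x_n$, yielding the required $x'$, and the invariant maintained throughout is that the committee \medianAlt\ selects on the current configuration has distortion at least $\dist(S,\instance;x)$.

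The substance is one merge step, since \Cref{lem:agent-move-distortion} only provides a non-improving \emph{infinitesimal} move: the $\delta$ it permits must not reach any neighbour. To reach a neighbour, fix an occupied $\bar x\notin\{x_1,x_n\}$ with nearest occupied neighbours at $\bar x^-<\bar x<\bar x^+$, and for $s$ in the open interval $(\bar x^--\bar x,\ \bar x^+-\bar x)$ let $x(s)$ be $x$ with $A(\bar x)$ relocated to $\bar x+s$, so $x(0)=x$. Let $T$ be the committee \medianAlt\ selects at $s=0$ and put $\phi(s)=\SC(T,A;x(s))/\min_{S'}\SC(S',A;x(s))$, so that $\phi(0)$ is the distortion of $T$ on the current configuration. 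Repeating the computation in the proof of \Cref{lem:agent-move-distortion} with a signed displacement in place of $\delta$ shows $\SC(S',A;x(s))=\SC(S',A;x)+\gamma_{S'}\,s$ is affine in $s$ for every committee $S'$, with the coefficient $\gamma_{S'}$ constant on the interval because no agent passes another there. Thus $\phi=N/D$ with $N(s)=\SC(T,A;x(s))$ affine and $D(s)=\min_{S'}\SC(S',A;x(s))$ concave and piecewise linear; since $x_1<x_n$ and the leftmost and rightmost locations always remain occupied, two agents lie at distinct positions in every $x(s)$, so $D>0$ and hence $N\ge D>0$ on the closed interval. The ratio of a positive affine function by a positive concave one is quasiconvex (for $t\ge 0$ the set $\{s:N(s)\le tD(s)\}=\{s:N(s)-tD(s)\le 0\}$ is a sublevel set of the convex function $N-tD$, hence an interval), and a continuous quasiconvex function on a closed interval attains its maximum at an endpoint. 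Therefore one of the two directions of motion, pushed all the way until $A(\bar x)$ collides with the corresponding neighbour, does not decrease $\phi$; call the resulting configuration $x(s^\star)$, for which $\phi$ extends continuously with $\phi(s^\star)=\SC(T,A;x(s^\star))/\min_{S'}\SC(S',A;x(s^\star))\ge\phi(0)$. Finally, for a profile $\succ^\star$ consistent with $x(s^\star)$ inherited from the motion, the committee $\medianAlt(\succ^\star)$ selects the same number of agents from each occupied location as $T$, and since co-located agents are interchangeable for the additive social cost, $\SC(\medianAlt(\succ^\star),A;x(s^\star))=\SC(T,A;x(s^\star))$; hence $\dist(\medianAlt(\succ^\star),(A,k,\succ^\star);x(s^\star))=\phi(s^\star)\ge\phi(0)$, completing one merge step without decreasing the distortion of \medianAlt's committee.

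To iterate, observe that after a merge the enlarged group again occupies a single location, so \Cref{lem:agent-move-distortion} and the argument above apply to it verbatim, while the groups at $x_1,x_n$ are never touched; chaining the inequalities over the at most $n-2$ merges, we reach $x'$ with $x'_a\in\{x_1,x_n\}$ for all $a$ together with a consistent $\succ'$ satisfying $\dist(\medianAlt(\succ'),(A,k,\succ');x')\ge\dist(S,\instance;x)$, as claimed. I expect the main obstacle to be exactly the passage from the infinitesimal guarantee of \Cref{lem:agent-move-distortion} to a full merge: it relies on the additive social cost being affine in the agents' positions, which turns ``non-improving by an infinitesimal amount in one direction'' into ``non-improving all the way to a collision'' via quasiconvexity of the distortion along the line of motion. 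A secondary, more routine point is the treatment of configurations with coinciding agents, where the order underlying \medianAlt\ is not canonical and one must work with the profile inherited from the motion so that \medianAlt\ still returns a committee of the correct social cost.
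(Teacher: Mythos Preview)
Your proof is correct and follows the same iterative scheme as the paper: repeatedly slide a non-extreme group until it merges with a neighbour, maintaining that the distortion of \medianAlt's output never decreases. The paper's version is terser---it applies \Cref{lem:agent-move-distortion} with $\delta=\delta^*:=\max\{\delta>0: A((\bar x-\delta,\bar x+\delta))=A(\bar x)\}$ and asserts that the number of occupied positions drops at every step. As you anticipate, that assertion is not fully justified as written: \Cref{lem:agent-move-distortion} guarantees only that \emph{one} of the two displacements by $\delta^*$ is non-improving, and if that direction points away from the nearest neighbour no merge occurs. Your quasiconvexity argument (an affine numerator over a positive concave denominator, so the ratio attains its maximum on the closed interval $[\bar x^--\bar x,\bar x^+-\bar x]$ at an endpoint) is a clean way to bridge this gap and push all the way to a collision; an equivalent repair, implicit in the linearity computation inside the proof of \Cref{lem:agent-move-distortion}, is to note that the non-improving direction is determined by a sign that does not depend on $\delta$, so one may keep moving in that fixed direction until the corresponding neighbour is reached. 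Either way the two proofs coincide in spirit, with yours supplying the missing justification for the merge step.
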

\begin{proof}
    Let $\instance = (A, k, \succ)$ and $x$ be as in the statement, where, as usual, $x_1$ and $x_n$ represent the positions of the two extreme agents. 
    To construct $x'$ as claimed in the statement, we iteratively move agents toward the positions of the extreme agents using Lemma~\ref{lem:agent-move-distortion}.
    Specifically, we initialize $x'=x$ and, as long as $x'_a\in (x_1,x_n)$ for some $a\in A$, we fix $\bar{x}=x_a$, we define
    \[
        \delta^*=\max\{\delta>0\mid A((\bar{x}-\delta,\bar{x}+\delta))=A(\bar{x})\},
    \]
    and we update $x'_b\gets x'_b\pm \delta^*$ for every $b\in A(\bar{x})$ and the sign that ensures not increasing the distortion $\dist(S,A;x')$ of $S$.
    Note that the definition of $\delta^*$ ensures both the existence of this sign, due to \Cref{lem:agent-move-distortion}, and the fact that the number of different positions $|\{y\in (-\infty,\infty)\mid \exists a\in [n]: x'_a=y\}|$ is reduced in each step.
    Thus, the procedure terminates with a vector $x'\in (-\infty,\infty)$ such that (1) $x'_a\in \{x_1,x_n\}$ for every $a\in A$, and (2) the distortion of $S$ under the resulting metric has not decreased.
    Note that, since the order of the agents has not been changed besides ties, we have either $S'=S$ if the committee selected by \medianAlt\ has not changed or $S'\neq S$ but $\SC(S',A,x')=\SC(S,A;x')$ if the committee has changed due to a different tie-breaking.
\end{proof}

We now proceed with the proof of \Cref{thm:Util-Add}. 

\begin{proof}[Proof of Theorem \ref{thm:Util-Add}]
    Let $\instance = (A, k, \succ)$ be an arbitrary election, where $A = [n]$ is the set of agents. Let $d\ \rhd \succ$ be any consistent distance metric  induced by positions $x\in (-\infty,\infty)^n$, and let $S$ denote the committee selected by \medianAlt\ on this election.
    From \Cref{lem:worst-distortion}, we know that there exists a new position vector $x'\in (-\infty,\infty)^n$ and associated election $\instance' = (A, k, \succ')$, with $x'\ \rhd \succ'$, such that where all agents are positioned at the two extreme positions of the original instance and the distortion in $\instance'$ is at least as bad as the distortion in $\instance$; i.e., $x'_a\in \{x_1,x_n\}$ for every $a\in A$ and
    $\dist(S',(A,k,\succ');x') \geq \dist(S,\instance,x)$, where $S'$ denotes the committee selected by \medianAlt\ on $\instance'$.
    Thus, it suffices to compute the distortion for this election $\instance'$ to bound the distortion of the voting rule.
    As usual, we denote by $d'$ the metric induced by the position vector $x'$.

    We partition the set of agents into two groups, $A=A_1\dot{\cup}A_n$, where
    \[
    A_1 = \{a \in A \mid x'_a = x_1\} \text{ and } A_n = \{a \in A \mid x'_a = x_n\}
    \]
    denote the sets of agents located at positions $x_1$ and $x_n$ under the position vector $x'$, respectively. 
    We let $S_1=S'\cap A_1$ and $S_n=S'\cap A_n$ denote the agents selected by \medianAlt\ on $\instance'$ from agents in $A_1$ and $A_n$, respectively.
    Then, the social cost of $S'$ is given by
    \begin{align*}
    \SC(S', A; d') & = \sum_{a \in A_1} \sum_{b \in S'} d'(x_1, x_b) + \sum_{a \in A_n} \sum_{b \in S'} d'(x_n, x_b) \\
    & = |A_1| \cdot |S_n| \cdot d'(x_1, x_n) + |A_n| \cdot |S_1| \cdot d'(x_1, x_n).
    \end{align*} 

    On the other hand, the optimal committee $S^*$ clearly minimizes the total social cost by selecting as many agents as possible from the larger group between $A_1$ and $A_n$, as this cost is only incurred by agents in the smaller set. 
    We suppose that $|A_n| \geq |A_1|$ w.l.o.g. 
    We have two cases: either $|A_n| \geq k$ or $|A_n| < k$. 
    In the former case,
    \[
    \SC(S^*, A; d) = |A_1| \cdot k \cdot d'(x_1, x_n),
    \]
    while in the latter case,
    \[
    \SC(S^*, A; d) = |A_1| \cdot |A_n| \cdot d'(x_1, x_n) + |A_n| \cdot (k - |A_n|) \cdot d'(x_1, x_n).
    \]
    Since $|A_n|\geq |A_1|$ implies
    \[
    |A_1| \cdot |A_n| \cdot d'(x_1, x_n) + |A_n| \cdot (k - |A_n|) \cdot d'(x_1, x_n) \geq |A_1| \cdot k \cdot d'(x_1, x_n),
    \]
    the social cost induced by $S^*$ is smaller when $|A_n| \geq k$ and it suffices to bound the distortion in this case.
    Therefore,
    \begin{align}
    \dist(f) & \leq \frac{\SC(S, A; d')}{\SC(S^*, A; d')}\nonumber\\
    & = \frac{|A_1| \cdot |S_n| \cdot d'(x_1, x_n) + |A_n| \cdot |S_1| \cdot d'(x_1, x_n)}{|A_1| \cdot k \cdot d'(x_1, x_n)} \nonumber\\
    & = \frac{|A_1| \cdot |S_n| + |A_n| \cdot |S_1|}{|A_1| \cdot k}.\label{eq:distortion-uti-add-1}
    \end{align}
    If $|S_n|=k$, we obtain $\dist(f)=1$.
    In what follows, we thus assume $S_1\neq \emptyset$.
    From the definition of the \medianAlt\ voting rule, we know that $|A_n| - |S_n| = |A_1| - |S_1|$ if $n-k$ is even, and $|A_n| - |S_n| = |A_1| - |S_1|-1$ if $n-k$ is odd. 
    
    We can express the previous equations simply as 
    \[
        |A_n| - |S_n| = |A_1| - |S_1| - \chi.
    \]
    From this equality, alongside $|A_1|+|A_n| = n$ and $|S_1|+|S_n|=k$, we can express all $|A_1|$, $|S_1|$, and $|S_n|$ in terms of $|A_n|$ as follows:
    \[
        |A_1|=n-|A_n|,\quad |S_1|=\frac{n+k-\chi}{2}-|A_n|,\quad |S_n|=|A_n|-\frac{n-k-\chi}{2}.
    \]
    
    Replacing in \cref{eq:distortion-uti-add-1}, we obtain
    \begin{align}
        \dist(f) & \leq \frac{(n-|A_n|)\big(|A_n|-\frac{n-k-\chi}{2}\big) + |A_n|\big(\frac{n+k-\chi}{2}-|A_n| \big)}{(n-|A_n|)k}\nonumber\\
        & =\frac{2|A_n|}{k}-\frac{\chi |A_n|}{k(n-|A_n|)} - \frac{n(n-k-\chi)}{2k(n-|A_n|)}\nonumber \\
        & = h(|A_n|),\label{eq:distortion-uti-add-2}
    \end{align}
    where we have defined a function $h\colon \big\{\big\lceil \frac{n}{2}\big\rceil,\ldots, n-1\big\} \to \RR$, which evaluated at $|A_n|$ gives the last expression.
    Its first and second derivatives are given by
    \[
        h'(y) = \frac{1}{k}\bigg(2-\frac{(n-k+\chi)n}{2(n-y)^2}\bigg),\qquad h''(y) = -\frac{(n-k+\chi)n}{k(n-y)^3}.
    \]
    Since $h''(y) \leq 0$ for every $y$ in the domain of $h$, an upper bound for the value of $h$ is given by its value at $y^*$, where $y^*$ is such that
    \[
        h'(y^*)=0 \Longleftrightarrow y^* = n - \frac{1}{2}\sqrt{(n-k+\chi)n}.
    \]
    Combining this fact with \cref{eq:distortion-uti-add-2}, we conclude that
    \begin{align*}
        \dist(f) & \leq h(y^*) \\
        & = \frac{2}{k}\bigg(n - \frac{1}{2}\sqrt{(n-k+\chi)n}\bigg) - \frac{\chi \big(n-\frac{1}{2}\sqrt{(n-k+\chi)n}\big)}{\frac{k}{2}\sqrt{(n-k+\chi)n}}\\
        & \phantom{{}={}}- \frac{n(n-k-\chi)}{k\sqrt{(n-k+\chi)n}} \\
        & = \frac{1}{k}\Big(2n+\chi-2\sqrt{(n-k+\chi)n}\Big),
    \end{align*}
    which is the same as the expression in the statement.
\end{proof}

In terms of lower bounds, we prove that for any $k\geq 3$, there exists $n$ such that any $(n,k)$-voting rule has distortion at least $1.0914$.
To do so, for any fixed $k$ we consider $n$ agents partitioned into sets $A_1,A_2,A_3$, with $|A_3|\leq |A_1|\leq k$ and $|A_2|+|A_3|=k$.
We consider two metrics depicted in \Cref{fig:uti-add-lb}, both with the agents in the same set located at the same point and $x_a\leq x_b\leq x_c$ for any $a\in A_1,b\in A_2$, and $c\in A_3$.
In one metric, $A_2$ is equidistant to $A_1$ and $A_3$; in the other, $A_2$ and $A_3$ are located at the same point.
Intuitively, a rule performs badly in the first instance if it selects many agents from $A_3$ and in the second instance otherwise.
We formalize this distinction through a threshold, obtain bounds parameterized on this threshold for each of the corresponding cases, and obtain the overall bound by optimally picking the threshold and the number of agents in each set that maximizes the minimum distortion between the two cases.

\begin{restatable}{theorem}{thmUtilitarianAdditiveLB}\label{thm:utilitarian-add-lb}
    For every $k\in \NN$ with $k\geq 3$, there exists $n\in \NN$ with $n\geq k$ such that, for every $(n,k)$-voting rule $f$, $dist(f)\geq 1.0914$ for utilitarian additive social cost.
\end{restatable}
\begin{proof}
%We let $k$ be as in the statement, fix $n$ (the precise value will be specified later depending on $k$), and 
Consider an arbitrary $(n,k)$-voting rule $f$.
We partition the agents into three sets $A_1,A_2,A_3$ with $|A_1|=n_1$, $|A_2|=n_2$, and $|A_3|=k-n_2$. The natural values $n_1$ and $n_2$ will be fixed later, respecting $n_2\leq k$ and $k-n_2\leq n_1\leq k$.
The total number of agents is then $n=n_1+k$.
We consider the profile $\succ\in \calL^n(n)$, where $S = f(\succ)$, and
\begin{enumerate}[label=(\roman*)]
    \item $b\succ_a c$ whenever $a\in A_i,b\in A_j,c\in A_\ell$ for some $i,j,\ell\in [3]$ with $|i-j|<|i-\ell|$;
    \item $b \succ_a c$ whenever $a\in A_2,b\in A_3,c\in A_1$;
\end{enumerate}
and the remaining pairwise comparisons are arbitrary.
We consider the election $\instance=(A,k,\succ)$ with $A=[n]$.

In what follows, we distinguish whether or not $f$ selects $k'$ or more agents from $A_3$, where $k'\in [k-n_2]$ will be fixed later, and construct appropriate distance metrics to show that, in either case, the distortion is at least the one claimed in the statement.
Intuitively, if $f$ selects many agents from $S_3$ (more than $k'$), we will consider $A_1$, $A_2$, and $A_3$ from left to right, each at a distance of $1$ from each other, so that picking all agents from $A_2$ and some agents from $A_1$ would lead to a lower or equal social cost (because $|A_1|\geq |A_3|$).
If $f$ selects few agents from $A_3$ (less than $k'$), we will consider $A_2$ and $A_3$ at the same location and a distance of $1$ from $A_1$, so that picking all $k$ agents from $A_2\cup A_3$ would lead to a lower or equal social cost (because $|A_1|\leq |A_2\cup A_3|$).

Formally, we first consider the case with $|S\cap A_3| \geq k'$ and define the distance metric $d_1$ on $A$ by the following positions $x\in (-\infty,\infty)^n$: $x_a=0$ for every $a\in A_1$, $x_a=1$ for every $a\in A_2$, and $x_a=2$ for every $a\in A_3$.
Note that $d_1\ \rhd \succ$; see \Cref{fig:uti-add-lb} for an illustration.
Since $|A_1|=n_1 \geq k-n_2=|A_3|$ by assumption, defining $T\subset A$ such that $|A_1\cap T|=k-k'-n_2$, $|A_2\cap T|=n_2$, and $|A_3\cap T|=k'$, we know that
\[
    \SC(S,A;d_1)\geq \SC(T,A;d_1) = n_1(n_2+2k')+n_2(k-n_2)+(k-n_2)(n_2+2(k-k'-n_2)).
\]
If we consider the alternative committee $S'$ such that $|A_1\cap S'|=k-n_2$, $|A_2\cap S'|=n_2$, and $|A_3\cap S'|=0$, we have $\SC(S',a;d_1)=n_2$ for every $a\in A_1$, $\SC(S',a;d_1)=k-n_2$ for every $a\in A_2$, and $\SC(S',a;d_1)=n_2+2(k-n_2)$ for every $a\in A_3$.
We obtain
\begin{align}
    \dist(f(\succ),\instance) \geq \frac{\SC(S,A;d_1)}{\SC(S',A;d_1)} & \geq \frac{n_1(n_2+2k')+n_2(k-n_2)+(k-n_2)(n_2+2(k-k'-n_2))}{n_1n_2+n_2(k-n_2)+(k-n_2)(n_2+2(k-n_2))}\nonumber\\
    & = 1 + \frac{2k'(n_1+n_2-k)}{n_1n_2 + n_2(k-n_2) + (k-n_2)(n_2+2(k-n_2))}\nonumber \\
    %& = 1 + \frac{2k'(n_1+n_2-k)}{n_1n_2 + 2kn_2 - 2n_2^2 + 2(k-n_2)^2}\nonumber \\
    & = 1+ \frac{2k'(n_1+n_2-k)}{n_1n_2-2kn_2+2k^2}.\label{eq:lb-uti-add-ins1}
\end{align}
\begin{figure}
\centering
\begin{tikzpicture}[scale=1, every node/.style={font=\footnotesize}]
% Metric d_1
    \node[anchor=east] at (-0.5, 0) {\textbf{Metric $d_1$}};
    % Draw points and labels
    \filldraw[black] (0,0) circle (2.5pt) node[anchor=north] {$A_1$};
    \filldraw[black] (2.5,0) circle (2.5pt) node[anchor=north] {$A_2$};
    \filldraw[black] (5,0) circle (2.5pt) node[anchor=north] {$A_3$};
    % Distances
    \draw[-] (0,0) -- (2.5,0) node[midway, anchor=south] {$1$};
    \draw[-] (2.5,0) -- (5,0) node[midway, anchor=south] {$1$};

% Metric d_2
\begin{scope}[yshift=-1cm],
    \node[anchor=east] at (-0.5, 0) {\textbf{Metric $d_2$}};
    % Draw points and labels
    \filldraw[black] (1.25,0) circle (2.5pt) node[anchor=north] {$A_1$};
    \filldraw[black] (3.75,0) circle (2.5pt) node[anchor=north] {$A_2\cup A_3$};
    % Distances
    \draw[-] (1.25,0) -- (3.75,0) node[midway, anchor=south] {$1$};
\end{scope}
\end{tikzpicture}
\caption{Metrics considered in the proof of \Cref{thm:utilitarian-add-lb}. In this and all similar figures throughout the paper, the (sets of) agents are represented by circles, with the identity of the agents or sets below them, and the distances between them are written on top of the corresponding line segments. All figures consider indistinguishable metrics for a certain preference profile of the agents and thus any voting rule must select the same subsets for any of these metrics.}
    \label{fig:uti-add-lb}
\end{figure}

We now consider the case with $|S\cap A_3| \leq k'-1$ and define the distance metric $d_2$ on $A$ by the following positions $x\in (-\infty,\infty)^n$: $x_a=0$ for every $a\in A_1\cup A_2$ and $x_a=1$ for every $a\in A_3$.
Note that $d_2\ \rhd \succ$; see \Cref{fig:uti-add-lb} for an illustration.
Since $|A_1|=n_1\leq k=|A_2\cup A_3|$ by assumption, defining $T\subset A$ such that $|A_1\cap T|=k-(k'-1)-n_2$ and $|(A_2\cup A_3)\cap T|=n_2+(k'-1)$, we know that
\[
    \SC(S,A;d_2)\geq \SC(T,A;d_2) = n_1(n_2+(k'-1))+k(k-(k'-1)-n_2).
\]
If we consider the alternative committee $S'$ such that $|A_1\cap S'|=0$ and $|(A_2\cup A_3)\cap S'|=k$, we have $\SC(S',a;d_2)=k$ for every $a\in A_1$ and $\SC(S',a;d_2)=0$ for every $a\in A_2\cup A_3$.
We obtain
\begin{align}
    \dist(f(\succ),\instance) \geq \frac{\SC(S,A;d_2)}{\SC(S',A;d_2)} & \geq \frac{n_1(n_2+(k'-1))+k(k-(k'-1)-n_2)}{n_1k}\nonumber \\
    & =1 + \frac{(k-n_1)(k-n_2-k'+1)}{kn_1}.\label{eq:lb-uti-add-ins2}
\end{align}

We conclude that $\text{dist}(f(\succ),\instance) \geq 1+\gamma$, where $\gamma$ is the optimal value of the following optimization program:
\begin{equation}
\begin{aligned}\label{eq:lp-lb-uti-add}
    \max\quad & \min\bigg\{ \frac{2k'(n_1+n_2-k)}{n_1n_2-2kn_2+2k^2}, \frac{(k-n_1)(k-n_2-k'+1)}{kn_1}\bigg\}\\
    \text{s.t.}\quad & n_1,n_2 \leq k,\\
    & k'+n_2 \leq k,\\
    & n_1+n_2 \geq k,\\
    & k',n_1,n_2 \in \NN.
\end{aligned}
\end{equation}
Intuitively, for large enough $k$ it is not hard to see that the above program has a strictly positive objective value, leading to a distortion strictly greater than one, while for small values of $k$ in can be solved computationally by enumerating all possible solutions.
Naturally, the larger the lower bound we choose for $k$, the better the bound is. 

Formally, to obtain the claim bound of $1.0914$, it suffices to assume $k\geq 200$.
Indeed, for any fixed $k\geq 200$, we take
\[
    n_1=\bigg\lceil\frac{3}{5}k\bigg\rceil,\quad n_2=\bigg\lceil \frac{13}{20}k\bigg\rceil,\quad k'=\bigg\lceil \frac{1}{5}k\bigg\rceil.
\]
These values trivially satisfy the constraints of the program \eqref{eq:lp-lb-uti-add}.
Moreover, since $k\geq 200$ we have
\begin{align}
    n_1 \leq \frac{3}{5}k+1 \leq %\frac{3}{5}k+\frac{1}{200}k = 
    \frac{121}{200}k,\quad n_2 \leq \frac{13}{20}k+1 \leq %\frac{13}{20}k+\frac{1}{200}k = 
    \frac{131}{200}k, \quad k' \leq \frac{1}{5}k+1 \leq %\frac{1}{5}k+\frac{1}{200}k = 
    \frac{41}{200}k.\label{eq:bounds-n1-n2}
\end{align}
Therefore,
\[
    \frac{2k'(n_1+n_2-k)}{n_1n_2-2kn_2+2k^2} \geq \frac{2\cdot \frac{1}{5}k\big(\frac{3}{5}k+\frac{13}{20}k-k)}{\frac{121}{200}k\cdot \frac{13}{20}k-2k\cdot \frac{13}{20}k+2k^2} = \frac{\frac{2}{5}\cdot \frac{1}{4}}{\frac{1573-5200+8000}{4000}} = \frac{400}{4373} \approx 0.09147,
\]
where the first inequality follows from \eqref{eq:bounds-n1-n2} and the fact that $n_1n_2-2kn_2$ is decreasing in $n_2$ since $n_1<2k$.
On the other hand, 
\[
    \frac{(k-n_1)(k-n_2-k'+1)}{kn_1} \geq \frac{(k-\frac{121}{200}k)(k-\frac{131}{200}k-\frac{41}{200}k)}{\frac{121}{200}k^2} = \frac{\frac{79}{200}\cdot \frac{28}{200}}{\frac{121}{200}} = \frac{2212}{24200} \approx 0.09140,
\]
where the first inequality again follows from \eqref{eq:bounds-n1-n2}.
Combining these two inequalities, we obtain $\text{dist}(f(\succ),\instance) > 1.0914$, concluding the proof for $k\geq 200$.

For values $k\leq 200$, we include in \Cref{tab:bounds-uti-add-smallk} feasible (and even optimal) solutions for the program \eqref{eq:lp-lb-uti-add} alongside their objective value.
Since all objective values are strictly greater than $0.0914$, this completes the proof.
\end{proof}

\begin{table}[!htbp]
\centering
\makebox[0cm]{
\footnotesize{
\begin{tabular}{@{\quad} c @{~~} c @{~~} c @{\quad} | @{\quad}  c @{~~} c @{~~} c @{\quad} | @{\quad}  c @{~~} c @{~~} c @{\quad} | @{\quad}  c @{~~} c @{~~} c @{\quad}}%{@{}l @{~} r @{~} l @{\quad} c @{\quad} c @{\quad} r @{~} l  @{\quad} c}
\toprule \\[-10pt]
$k$ & $(n_1,n_2,k')$ & $\gamma$ & $k$ & $(n_1,n_2,k')$ & $\gamma$ & $k$ & $(n_1,n_2,k')$ & $\gamma$ & $k$ & $(n_1,n_2,k')$ & $\gamma$
\\[1pt] 
\midrule 
\\[-9pt]
$3$ & $(2, 2, 1)$ & $0.1667$ & $53$ & $(30, 37, 10)$ & $0.0998$ & $103$ & $(63, 67, 21)$ & $0.0974$ & $153$ & $(93, 99, 32)$ & $0.0970$ \\
$4$ & $(2, 3, 1)$ & $0.1429$ & $54$ & $(32, 36, 11)$ & $0.0995$ & $104$ & $(62, 68, 22)$ & $0.0977$ & $154$ & $(90, 104, 30)$ & $0.0969$ \\
$5$ & $(3, 3, 2)$ & $0.1333$ & $55$ & $(34, 36, 11)$ & $0.0996$ & $105$ & $(64, 68, 22)$ & $0.0976$ & $155$ & $(92, 104, 30)$ & $0.0969$ \\
$6$ & $(3, 5, 1)$ & $0.1481$ & $56$ & $(33, 37, 12)$ & $0.0996$ & $106$ & $(61, 72, 21)$ & $0.0974$ & $156$ & $(94, 103, 31)$ & $0.0970$ \\
$7$ & $(3, 6, 1)$ & $0.1250$ & $57$ & $(35, 36, 13)$ & $0.0992$ & $107$ & $(63, 72, 21)$ & $0.0978$ & $157$ & $(96, 103, 31)$ & $0.0970$ \\
$8$ & $(4, 6, 2)$ & $0.1250$ & $58$ & $(35, 38, 12)$ & $0.0986$ & $108$ & $(65, 72, 21)$ & $0.0978$ & $158$ & $(98, 101, 33)$ & $0.0969$ \\
$9$ & $(4, 7, 2)$ & $0.1250$ & $59$ & $(34, 40, 12)$ & $0.0997$ & $109$ & $(67, 71, 22)$ & $0.0978$ & $159$ & $(95, 106, 31)$ & $0.0967$ \\
$10$ & $(6, 7, 2)$ & $0.1176$ & $60$ & $(36, 40, 12)$ & $0.1000$ & $110$ & $(64, 75, 21)$ & $0.0974$ & $160$ & $(97, 105, 32)$ & $0.0967$ \\
$11$ & $(5, 8, 3)$ & $0.1091$ & $61$ & $(38, 39, 13)$ & $0.0992$ & $111$ & $(66, 74, 22)$ & $0.0974$ & $161$ & $(99, 105, 32)$ & $0.0968$ \\
$12$ & $(7, 9, 2)$ & $0.1185$ & $62$ & $(35, 43, 12)$ & $0.0995$ & $112$ & $(68, 74, 22)$ & $0.0975$ & $162$ & $(101, 104, 33)$ & $0.0969$ \\
$13$ & $(5, 11, 2)$ & $0.1121$ & $63$ & $(37, 43, 12)$ & $0.0992$ & $113$ & $(70, 73, 23)$ & $0.0975$ & $163$ & $(100, 104, 35)$ & $0.0966$ \\
$14$ & $(9, 9, 3)$ & $0.1086$ & $64$ & $(39, 42, 13)$ & $0.0992$ & $114$ & $(69, 73, 25)$ & $0.0973$ & $164$ & $(97, 109, 33)$ & $0.0969$ \\
$15$ & $(8, 11, 3)$ & $0.1154$ & $65$ & $(41, 41, 14)$ & $0.0991$ & $115$ & $(69, 76, 23)$ & $0.0971$ & $165$ & $(99, 110, 32)$ & $0.0970$ \\
$16$ & $(10, 11, 3)$ & $0.1111$ & $66$ & $(38, 45, 13)$ & $0.0986$ & $116$ & $(68, 77, 24)$ & $0.0974$ & $166$ & $(101, 108, 34)$ & $0.0969$ \\
$17$ & $(9, 12, 4)$ & $0.1046$ & $67$ & $(40, 45, 13)$ & $0.0986$ & $117$ & $(70, 77, 24)$ & $0.0976$ & $167$ & $(103, 107, 35)$ & $0.0967$ \\
$18$ & $(9, 14, 3)$ & $0.1111$ & $68$ & $(42, 44, 14)$ & $0.0986$ & $118$ & $(72, 76, 25)$ & $0.0975$ & $168$ & $(100, 112, 33)$ & $0.0967$ \\
$19$ & $(11, 13, 4)$ & $0.1078$ & $69$ & $(41, 46, 14)$ & $0.0990$ & $119$ & $(69, 81, 23)$ & $0.0974$ & $169$ & $(102, 112, 33)$ & $0.0968$ \\
$20$ & $(13, 13, 4)$ & $0.1069$ & $70$ & $(43, 45, 15)$ & $0.0987$ & $120$ & $(71, 81, 23)$ & $0.0974$ & $170$ & $(104, 111, 34)$ & $0.0968$ \\
$21$ & $(12, 14, 5)$ & $0.1071$ & $71$ & $(40, 48, 15)$ & $0.0982$ & $121$ & $(73, 80, 24)$ & $0.0974$ & $171$ & $(106, 110, 35)$ & $0.0968$ \\
$22$ & $(12, 16, 4)$ & $0.1053$ & $72$ & $(42, 49, 14)$ & $0.0991$ & $122$ & $(75, 80, 24)$ & $0.0975$ & $172$ & $(103, 114, 34)$ & $0.0965$ \\
$23$ & $(14, 15, 5)$ & $0.1038$ & $73$ & $(44, 49, 14)$ & $0.0989$ & $123$ & $(77, 78, 26)$ & $0.0971$ & $173$ & $(105, 114, 34)$ & $0.0966$ \\
$24$ & $(13, 17, 5)$ & $0.1058$ & $74$ & $(46, 47, 16)$ & $0.0987$ & $124$ & $(74, 83, 24)$ & $0.0971$ & $174$ & $(104, 114, 36)$ & $0.0967$ \\
$25$ & $(15, 17, 5)$ & $0.1067$ & $75$ & $(43, 52, 14)$ & $0.0985$ & $125$ & $(76, 82, 25)$ & $0.0972$ & $175$ & $(109, 113, 35)$ & $0.0967$ \\
$26$ & $(15, 18, 5)$ & $0.1020$ & $76$ & $(45, 51, 15)$ & $0.0984$ & $126$ & $(78, 82, 25)$ & $0.0972$ & $176$ & $(103, 118, 35)$ & $0.0966$ \\
$27$ & $(14, 20, 5)$ & $0.1032$ & $77$ & $(47, 51, 15)$ & $0.0984$ & $127$ & $(77, 82, 27)$ & $0.0971$ & $177$ & $(105, 119, 34)$ & $0.0968$ \\
$28$ & $(16, 20, 5)$ & $0.1042$ & $78$ & $(49, 50, 16)$ & $0.0986$ & $128$ & $(74, 86, 26)$ & $0.0969$ & $178$ & $(107, 118, 35)$ & $0.0968$ \\
$29$ & $(18, 19, 6)$ & $0.1041$ & $79$ & $(48, 51, 17)$ & $0.0981$ & $129$ & $(76, 86, 26)$ & $0.0973$ & $179$ & $(109, 118, 35)$ & $0.0968$ \\
$30$ & $(17, 20, 7)$ & $0.1020$ & $80$ & $(48, 53, 16)$ & $0.0979$ & $130$ & $(78, 86, 26)$ & $0.0974$ & $180$ & $(111, 116, 37)$ & $0.0967$ \\
$31$ & $(17, 22, 6)$ & $0.1030$ & $81$ & $(47, 54, 17)$ & $0.0982$ & $131$ & $(80, 85, 27)$ & $0.0973$ & $181$ & $(108, 121, 35)$ & $0.0966$ \\
$32$ & $(19, 22, 6)$ & $0.1021$ & $82$ & $(49, 54, 17)$ & $0.0986$ & $132$ & $(77, 90, 25)$ & $0.0971$ & $182$ & $(110, 120, 36)$ & $0.0966$ \\
$33$ & $(21, 21, 7)$ & $0.1022$ & $83$ & $(51, 53, 18)$ & $0.0983$ & $133$ & $(79, 89, 26)$ & $0.0971$ & $183$ & $(112, 120, 36)$ & $0.0967$ \\
$34$ & $(20, 23, 7)$ & $0.1029$ & $84$ & $(48, 58, 16)$ & $0.0982$ & $134$ & $(81, 89, 26)$ & $0.0972$ & $184$ & $(114, 119, 37)$ & $0.0967$ \\
$35$ & $(22, 22, 8)$ & $0.1013$ & $85$ & $(50, 57, 17)$ & $0.0983$ & $135$ & $(83, 88, 27)$ & $0.0972$ & $185$ & $(116, 117, 39)$ & $0.0965$ \\
$36$ & $(22, 24, 7)$ & $0.1006$ & $86$ & $(52, 57, 17)$ & $0.0983$ & $136$ & $(85, 86, 29)$ & $0.0971$ & $186$ & $(110, 123, 38)$ & $0.0966$ \\
$37$ & $(21, 26, 7)$ & $0.1029$ & $87$ & $(54, 56, 18)$ & $0.0983$ & $137$ & $(82, 91, 27)$ & $0.0969$ & $187$ & $(112, 123, 38)$ & $0.0967$ \\
$38$ & $(23, 25, 8)$ & $0.1024$ & $88$ & $(51, 60, 17)$ & $0.0979$ & $138$ & $(84, 91, 27)$ & $0.0969$ & $188$ & $(114, 122, 39)$ & $0.0967$ \\
$39$ & $(25, 24, 9)$ & $0.1005$ & $89$ & $(53, 60, 17)$ & $0.0978$ & $139$ & $(83, 91, 29)$ & $0.0971$ & $189$ & $(111, 127, 37)$ & $0.0966$ \\
$40$ & $(22, 28, 8)$ & $0.1015$ & $90$ & $(55, 59, 18)$ & $0.0979$ & $140$ & $(88, 90, 28)$ & $0.0971$ & $190$ & $(113, 127, 37)$ & $0.0966$ \\
$41$ & $(24, 28, 8)$ & $0.1013$ & $91$ & $(57, 59, 18)$ & $0.0980$ & $141$ & $(82, 95, 28)$ & $0.0970$ & $191$ & $(115, 126, 38)$ & $0.0966$ \\
$42$ & $(26, 27, 9)$ & $0.1009$ & $92$ & $(56, 59, 20)$ & $0.0978$ & $142$ & $(84, 95, 28)$ & $0.0971$ & $192$ & $(117, 126, 38)$ & $0.0967$ \\
$43$ & $(23, 31, 8)$ & $0.1009$ & $93$ & $(56, 62, 18)$ & $0.0974$ & $143$ & $(86, 95, 28)$ & $0.0972$ & $193$ & $(119, 125, 39)$ & $0.0967$ \\
$44$ & $(27, 28, 10)$ & $0.1002$ & $94$ & $(55, 63, 19)$ & $0.0981$ & $144$ & $(88, 94, 29)$ & $0.0972$ & $194$ & $(116, 129, 38)$ & $0.0965$ \\
$45$ & $(27, 30, 9)$ & $0.1000$ & $95$ & $(57, 63, 19)$ & $0.0982$ & $145$ & $(90, 92, 31)$ & $0.0969$ & $195$ & $(118, 129, 38)$ & $0.0965$ \\
$46$ & $(26, 31, 10)$ & $0.1003$ & $96$ & $(59, 62, 20)$ & $0.0980$ & $146$ & $(87, 97, 29)$ & $0.0969$ & $196$ & $(120, 128, 39)$ & $0.0965$ \\
$47$ & $(28, 31, 10)$ & $0.1011$ & $97$ & $(56, 66, 19)$ & $0.0978$ & $147$ & $(89, 97, 29)$ & $0.0969$ & $197$ & $(122, 128, 39)$ & $0.0966$ \\
$48$ & $(30, 30, 11)$ & $0.1000$ & $98$ & $(58, 66, 19)$ & $0.0978$ & $148$ & $(91, 96, 30)$ & $0.0970$ & $198$ & $(124, 125, 42)$ & $0.0964$ \\
$49$ & $(27, 34, 10)$ & $0.0998$ & $99$ & $(60, 65, 20)$ & $0.0978$ & $149$ & $(93, 95, 31)$ & $0.0970$ & $199$ & $(118, 132, 40)$ & $0.0966$ \\
$50$ & $(29, 34, 10)$ & $0.1005$ & $100$ & $(62, 65, 20)$ & $0.0979$ & $150$ & $(92, 95, 33)$ & $0.0967$ & $200$ & $(120, 133, 39)$ & $0.0967$ \\
$51$ & $(31, 34, 10)$ & $0.1004$ & $101$ & $(59, 68, 20)$ & $0.0974$ & $151$ & $(89, 100, 31)$ & $0.0969$ & & & \\
$52$ & $(33, 33, 11)$ & $0.0997$ & $102$ & $(61, 68, 20)$ & $0.0974$ & $152$ & $(91, 100, 31)$ & $0.0970$ & & & \\
\bottomrule
\end{tabular}
}}
\caption{
\label{tab:bounds-uti-add-smallk}
Optimal solutions for the program \eqref{eq:lp-lb-uti-add} for all integer values of $k$ between $3$ and $200$, alongside their objective value.
}
\end{table}

\subsection{Utilitarian $q$-Cost}\label{subsec:uti-q}

In this section, we study the distortion of voting rules in the context of utilitarian $q$-cost, in which the cost of a committee $S'$ for an agent is given by its distance to the $q$th closest agent in $S'$, and the social cost of a committee is the sum of its cost for all agents.
Formally, for a set of agents $A$, a committee size $k$, a committee $S'\in {A\choose k}$, and a distance metric $d$, the social cost of the committee is given by
\[
    \SC(S',A;d) = \sum_{a\in A}\tilde{d}(a)_q,
\]
where $\tilde{d}(a)\in \RR^S_+$ contains the values $\{d(a,s)\mid s\in S'\}$ in increasing order.

Similarly to the setting with disjoint voters and candidates, the distortion of voting rules heavily depends on the value of $q$. 
The key intuition is that when $q$ is large, the $q$-cost objective emphasizes the lower-ranked positions in each agent’s ranking, which leads to a higher optimal cost. 
This makes the problem easier, as the selected committee can better approximate the worst-case distances. 
In contrast, when $q$ is small, the objective focuses on the top few distances, so there may exist very good committees that are difficult to identify with only ordinal information, leading to greater potential distortion. 
Indeed, a result by \citet{caragiannis2022metric} implies the existence of $(n,k)$-voting rules with distortion $3$ for $q$-cost whenever \smash{$q>\frac{k}{2}$}, since their result holds in the setting with disjoint voters and candidates and general distance metrics. 
We complement this result with a lower bound that ranges from \smash{$\frac{3}{2}$} to $2$ as $q$ varies between \smash{$\big\lceil\frac{k}{2}\big\rceil+1$} and $k$. 
For \smash{$q\leq \frac{k}{2}$}, \citet{caragiannis2022metric} showed that no rule provides a bounded distortion; we prove that this still holds in our setting.

In \Cref{subsec:utilitarian-kcost}, we study the case where $q=k=2$ in further detail and achieve the best-possible distortions of $\frac{4}{3}$ and $2$ for odd and even $n$, respectively, through natural voting rules that are able to leverage the different objectives involved in the problem.

\subsubsection{Impossibility Results}\label{subsec:utilitarian-qcost-lbs}

In this section, we provide two strong impossibilities regarding distortion bounds for $q$-cost, analyzing the cases with $q\leq \frac{k}{2}$ and with $q \geq \big\lceil\frac{k}{2}\big\rceil+1$ separately.

We begin with a strong impossibility for the case where we focus, for each agent, on their $q$th closest selected agent with $q\leq \frac{k}{2}$.
We show that no constant distortion is possible in this setting, regardless of the number of agents to select.
\begin{restatable}{theorem}{thmUtilitarianSmallq}\label{thm:utilitarian-1cost}
    For every $k\in \NN$ with $k\geq 2$ and $q\in \NN$ with $q\leq \frac{k}{2}$, there exists $n\in \NN$ with $n\geq k$ such that, for every $(n,k)$-voting rule $f$, $\dist(f)$ is unbounded for utilitarian $q$-cost.
\end{restatable}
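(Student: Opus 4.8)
The plan is to exhibit, for every target value $C$, an election whose preference profile $\succ$ is simultaneously consistent with two very different metrics and forces every rule to a distortion larger than $C$; letting $C\to\infty$ then gives the claimed unboundedness. I would set $m:=\lfloor k/q\rfloor$, which is at least $2$ exactly because $q\le k/2$, partition $n-q$ of the agents into $m$ blocks $B_1,\dots,B_m$ of a common (large) size $s$, and call the remaining $q$ agents the \emph{special} block $T$. In a first metric $d_{\mathrm{near}}$ I place $B_i$ at position $i$ and $T$ at position $2m$; in a family of metrics $d_{\mathrm{far}}^{(R)}$ I keep the $B_i$ fixed and move $T$ to position $R$ for arbitrarily large $R$. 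The crucial bookkeeping step is to check that every metric in this family induces the \emph{same} profile $\succ$: each agent in $B_i$ ranks the others by $|i-j|$ with the special agents last, and each special agent ranks the blocks in the order $B_m,B_{m-1},\dots,B_1$ — and none of this changes as $R$ grows (ties, e.g.\ between $B_{i-1}$ and $B_{i+1}$, are broken consistently, as the model permits).

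Next I would fix an arbitrary $(n,k)$-voting rule $f$, let $S=f(\succ)$, and split into two cases according to whether $S$ contains all of $T$. If some special agent is missing from $S$, then under $d_{\mathrm{far}}^{(R)}$ the $q$-th closest selected agent to any special agent lies in a block, hence at distance at least $R-m$, so $\SC(S,A;d_{\mathrm{far}}^{(R)})\ge q(R-m)$; on the other hand, covering $m-1$ blocks together with all of $T$ and spending the $r:=k-mq<q$ leftover slots on the remaining block has social cost at most $s$, independent of $R$, so the distortion exceeds $C$ once $R$ is large enough. If instead $S\supseteq T$, then $S$ has only $k-q<mq$ slots left for the $m$ blocks, so some block $B_j$ receives fewer than $q$ of its members; each of the $s$ agents in $B_j$ then pays $q$-cost at least $1$, giving $\SC(S,A;d_{\mathrm{near}})\ge s$, whereas the committee that covers all $m$ blocks and undercovers $T$ has social cost exactly $qm$ (each special agent's $q$-th closest selected agent sits in $B_m$ at distance $m$), so the distortion is at least $s/(qm)$, which exceeds $C$ as soon as $s>Cqm$. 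In either case $\dist(f)>C$, and since this holds for every $f$ and every $C$, no rule attains a constant distortion.

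I expect the main obstacle to be the consistency requirement rather than the two distortion computations, which are routine. On one hand $\succ$ must permit $T$ to be pushed to infinity, which forces every non-special agent to rank the special agents last already in $d_{\mathrm{near}}$; on the other hand I need the optimum of $d_{\mathrm{near}}$ to stay bounded by a constant, which requires that the cheapest group to undercover there be $T$ itself and not one of the (now large) blocks. Threading this needle is exactly why $T$ has to be put at $2m$ rather than any closer point: it must be genuinely the farthest group from every agent (so the same profile survives $R\to\infty$) while its distance to the nearest block is only the constant $m$ (so undercovering it is cheap for the optimum). The remaining care is the standard check that $n=ms+q\ge k$ and that the committees I claim to be optimal really are — in each metric some group must be undercovered because $(m+1)q>k$, and one then verifies that undercovering $T$ in $d_{\mathrm{near}}$ (resp.\ a block in $d_{\mathrm{far}}^{(R)}$) is the best option.
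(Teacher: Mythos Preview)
Your proposal is correct and follows essentially the same approach as the paper: the paper also sets $p=\lfloor k/q\rfloor$, partitions the agents into $p$ equal-size blocks plus a special block of size $q$, and considers two metrics (one with the special block at distance roughly $p$, one with it arbitrarily far) that induce the same profile; the case split on whether the special block is fully selected and the resulting distortion estimates are the same as yours. One minor remark: you do not actually need to verify that the alternative committees you exhibit are \emph{optimal}, only that they are feasible, since any feasible $S'$ already lower-bounds the distortion via $\SC(S,A;d)/\SC(S',A;d)$.
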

\begin{proof}
We let $k$ and $q$ be as in the statement, fix $n \in \NN$ to a large value, in particular with $n\geq 2k+q$ (we will ultimately take the limit $n\to \infty$), and consider an arbitrary $(n,k)$-voting rule $f$.
We denote $p=\big\lfloor \frac{k}{q} \big\rfloor \geq 2$ and partition the agents into $p+1$ sets $A=\dot{\bigcup}_{i=1}^{p}A_i \cup B$, such that $|A_i|\in \big\{\big\lfloor \frac{n-q}{p} \big\rfloor, \big\lceil \frac{n-q}{p} \big\rceil\big\}$ for every $i\in [p]$ and $|B|=q$. 
Note that this is possible since
\[
    p \bigg\lfloor \frac{n-q}{p} \bigg\rfloor + q \leq n \leq p \bigg\lfloor \frac{n-q}{p} \bigg\rfloor + q.
\]
We consider the profile $\succ\in \calL^n(n)$, where 
\begin{enumerate}[label=(\roman*)]
    \item $b\succ_a c$ whenever $a\in A_i,b\in A_j,c\in A_\ell$ for some $i,j,\ell \in [p]$ with $|i-j|<|i-\ell|$;
    \item $b\succ_a c$ whenever $a\in A_i,b\in A_j, c\in B$ for some $i,j\in [p]$;
    \item $b\succ_a c$ whenever $a,b\in B, c\in A_i$ for some $i\in [p]$;
    \item $b\succ_a c$ whenever $a\in B,b\in A_i,c\in A_j$ for some $i,j\in [p]$ with $i>j$;
\end{enumerate}
and the remaining pairwise comparisons are arbitrary.
We consider the election $\instance=(A,k,\succ)$ with $A=[n]$.

In what follows, we distinguish whether $f$ selects all $q$ agents in $B$ or not and construct appropriate distance metrics to show that, in either case, the distortion can be arbitrarily large.
Intuitively, if $f$ selects $B$ we will consider this set to be relatively close to $A_p$, so that picking $q$ agents from each set $A_1,\ldots,A_p$ would give a much lower social cost.
On the contrary, if $f$ does not select $B$, we will place this set extremely far from all others, so that the social cost of the selected set is huge compared to the social cost of a committee containing $B$.

Formally, we first consider the case with $B\subseteq S$ and define the distance metric $d_1$ on $A$ given by the following positions $x\in (-\infty,\infty)^n$: $x_a=i-1$ for every $a\in A_i$ and every $i\in [p]$, and $x_a=2(p-1)$ for every $a\in B$.
Note that $d_1\ \rhd \succ$; see \Cref{fig:utilitarian-qcost} for an illustration.
Since $B\subseteq S$, we have that $\big|S \cap \bigcup_{i\in [p]}A_i\big| \leq k-q$.
Hence, from an averaging argument, there exists $j\in [p]$ with 
\[
    |S\cap A_j| \leq \frac{k-q}{p} = \frac{q}{k}(k-q) < q.
\]
From the definition of $q$-cost, we thus have 
\begin{equation}
    \SC(S,a;d_1) \geq \min\{d_1(a,b)\mid b\in A\setminus A_j\}\geq 1 \quad \text{for every } a\in A_j.\label{ineq:uti-qcost-lb-alg-case1}
\end{equation}
On the other hand, consider the set $S=\cup_{i\in [p]}S_i$, where $S_i\subseteq A_i$ and $|S_i|\geq q$ for every $i\in [p]$.
Note that this set exists because $pq=k$ and
\[
    |A_i| \geq \bigg\lfloor \frac{n-q}{p} \bigg\rfloor \geq \bigg\lfloor \frac{2k}{k}q \bigg\rfloor  \geq q,
\]
where we used our assumption $n\geq 2k+q$.
From the definition of $q$-cost, we have that $\SC(S,a;d_1)=0$ for every $a\in A_i$ and every $i\in [p]$.
For each $a\in B$, we have $\SC(S,a;d_1)=p-1$.
Combining these facts with \cref{ineq:uti-qcost-lb-alg-case1}, we obtain
\[
    \dist(f(\succ),\instance) \geq \frac{\SC(S,A;d_1)}{\SC(S,A;d_1)} \geq \frac{|A_j|}{(p-1)|B|} \geq \bigg\lfloor \frac{n-q}{p} \bigg\rfloor \frac{1}{(p-1)q} = \bigg\lfloor \frac{(n-q)q}{k} \bigg\rfloor \cdot \frac{1}{k-q}.
\]

We now consider the case with $B\not\subseteq S$ and define the distance metric $d_2$ on $A$ given by the following positions $x\in (-\infty,\infty)^n$: $x_a=i-1$ for every $a\in A_i$ and every $i\in [p]$, and $x_a=p-1+Mn$ for every $a\in B$.
Note that $d_2\ \rhd \succ$; see \Cref{fig:utilitarian-qcost} for an illustration.
Since $B\not\subseteq S$, we have that $|S\cap B| < q$ and thus, by the definition of $q$-cost, we have
\begin{equation}
    \SC(S,a;d_2) \geq \min\{d_2(a,b)\mid b\in A\setminus B\}\geq Mn \quad \text{for every } a\in B.\label{ineq:uti-qcost-lb-alg-case2}
\end{equation}
On the other hand, consider the set $T=B\cup \cup_{i\in [p-1]}T_i$, where $T_i\subseteq A_i$ and $|T_i|\geq q$ for every $i\in [p-1]$.
Note that this set exists because $(p-1)q=k-q$ and
\[
    |A_i| \geq \bigg\lfloor \frac{n-q}{p} \bigg\rfloor \geq \bigg\lfloor \frac{2k}{k}q \bigg\rfloor  \geq q,
\]
where we used our assumption $n\geq 2k+q$.
From the definition of $q$-cost, we have that $\SC(T,a;d_2)=0$ for every $a\in A_i$ and every $i\in [p-1]$ and $\SC(T,a;d_2)=0$ for every $a\in B$.
For each $a\in A_p$, we have $\SC(T,a;d_2)=1$.
Combining these facts with \cref{ineq:uti-qcost-lb-alg-case2}, we obtain
\[
    \dist(f(\succ),\instance) \geq \frac{\SC(S,A;d_2)}{\SC(T,A;d_2)} \geq \frac{Mn|B|}{|A_p|} \geq \frac{1}{\big\lceil \frac{n-q}{p} \big\rceil}Mnq = \frac{1}{\big\lceil \frac{(n-q)q}{k} \big\rceil}Mnq.
\]

Since $\dist(f(\succ),\instance) \geq \big\lfloor \frac{(n-q)q}{k} \big\rfloor \cdot \frac{1}{k-q}$ if $B\subseteq S$ and $\dist(f(\succ),\instance) \geq \frac{1}{\big\lceil \frac{(n-q)q}{k} \big\rceil}Mnq$ otherwise, we conclude that
\[
    \dist(f) \geq \min\bigg\{ \bigg\lfloor \frac{(n-q)q}{k} \bigg\rfloor \cdot \frac{1}{k-q}, \frac{1}{\big\lceil \frac{(n-q)q}{k} \big\rceil}Mnq \bigg\},
\]
which can be unbounded by taking $n$ and $M$ arbitrarily large.
\end{proof}

\begin{figure}
\centering
% Metric d_1
\begin{tikzpicture}[scale=1, every node/.style={font=\footnotesize}]
    \node[anchor=east] at (-0.5, 0) {\textbf{Metric $d_1$}};
    % Draw points and labels
    \filldraw[black] (0,0) circle (2.5pt) node[anchor=north] {$A_1$};
    \filldraw[black] (1,0) circle (2.5pt) node[anchor=north] {$A_2$};
    \filldraw[black] (2,0) circle (2.5pt) node[anchor=north] {$A_3$};
    \filldraw[black] (2.5,0) circle (0.5pt) node[anchor=north] {};
    \filldraw[black] (3,0) circle (0.5pt) node[anchor=north] {};
    \filldraw[black] (3.5,0) circle (0.5pt) node[anchor=north] {};
    \filldraw[black] (4,0) circle (2.5pt) node[anchor=north] {$A_{p-1}$};
    \filldraw[black] (5,0) circle (2.5pt) node[anchor=north] {$A_{p}$};
    \filldraw[black] (10,0) circle (2.5pt) node[anchor=north] {$B$};
    % Distances
    \draw[-] (0,0) -- (1,0) node[midway, anchor=south] {$1$};
    \draw[-] (1,0) -- (2,0) node[midway, anchor=south] {$1$};
    \draw[-] (4,0) -- (5,0) node[midway, anchor=south] {$1$};
    \draw[-] (5,0) -- (10,0) node[midway, anchor=south] {$p-1$};

% Metric d_2
\begin{scope}[yshift=-1cm],
    \node[anchor=east] at (-0.5, 0) {\textbf{Metric $d_2$}};
    % Draw points and labels
    \filldraw[black] (0,0) circle (2.5pt) node[anchor=north] {$A_1$};
    \filldraw[black] (1,0) circle (2.5pt) node[anchor=north] {$A_2$};
    \filldraw[black] (2,0) circle (2.5pt) node[anchor=north] {$A_3$};
    \filldraw[black] (2.5,0) circle (0.5pt) node[anchor=north] {};
    \filldraw[black] (3,0) circle (0.5pt) node[anchor=north] {};
    \filldraw[black] (3.5,0) circle (0.5pt) node[anchor=north] {};
    \filldraw[black] (4,0) circle (2.5pt) node[anchor=north] {$A_{p-1}$};
    \filldraw[black] (5,0) circle (2.5pt) node[anchor=north] {$A_{p}$};
    \filldraw[black] (11,0) circle (2.5pt) node[anchor=north] {$B$};
    % Distances
    \draw[-] (0,0) -- (1,0) node[midway, anchor=south] {$1$};
    \draw[-] (1,0) -- (2,0) node[midway, anchor=south] {$1$};
    \draw[-] (4,0) -- (5,0) node[midway, anchor=south] {$1$};
    \draw[-] (5,0) -- (7,0) node[midway, anchor=south] {};
    \draw[dashed] (7,0) -- (9,0) node[midway, anchor=south] {$Mn$};
    \draw[-] (9,0) -- (11,0) node[midway, anchor=south] {};
\end{scope}
\end{tikzpicture}
    \caption{Metrics considered in the proof of \Cref{thm:utilitarian-1cost}. In this and all similar figures throughout the paper, the (sets of) agents are represented by circles, with the identity of the agents or sets below them, and the distances between them are written on top of the corresponding line segments. All figures consider indistinguishable metrics for a certain preference profile of the agents and thus any voting rule must select the same subsets for any of these metrics.}
    \label{fig:utilitarian-qcost}
\end{figure}

Next, we prove a lower bound of $2 - \frac{k-q}{4q-k-3}$ for the distortion of any voting rule for utilitarian $q$-cost when $\big\lceil\frac{k}{2} \big\rceil < q \leq k$ and $k \geq 3$. 
\begin{restatable}{theorem}{thmUtilitarianLargeq}\label{thm:utilitarian-lowerbound-qcost}
    For every $k\in \NN$ with $k\geq 3$ and $q\in \NN$ with $\frac{k}{2} +1\leq q \leq k$, there exists $n\in \NN$ with $n\geq k$ such that, for every $(n,k)$-voting rule $f$, $\dist(f)$ is at least $2 - \frac{k-q}{4q-k-3}$ for utilitarian $q$-cost.
\end{restatable}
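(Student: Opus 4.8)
The plan is to construct, for a suitable $n$, a single preference profile together with three pairwise-indistinguishable consistent metrics $d_1, d_2, d_3$ that force any $(n,k)$-voting rule into a bad trade-off. Following the pattern used in \Cref{thm:utilitarian-1cost} and described in the introduction, I would place the agents in a few clusters along the line: a block of $q$ agents near the left extreme, a block of $q$ agents near the right extreme, and enough ``filler'' agents in between so that the total is $n$ and so that the rankings pin down which metrics are consistent. The key structural fact is that $q > k/2$ means $2q > k$, so a committee cannot contain $q$ agents from the left block \emph{and} $q$ agents from the right block simultaneously; hence for at least one of the two extreme blocks, fewer than $q$ selected agents lie within that block, and the $q$th-closest selected agent for every member of that block must lie outside it. The three metrics would differ in how far out one extreme block is pushed: in $d_1$ the left block is far, in $d_2$ the right block is far, and in $d_3$ everything is moderately spread, so that in $d_3$ the optimal committee is forced to ``sacrifice'' exactly one of the extreme blocks, incurring a cost that, relative to the optimum, yields the ratio $2 - \frac{k-q}{4q-k-3}$.

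Concretely, the steps I would carry out are: (1) fix the profile so that, by \Cref{lem:agents-order}, the left-to-right order is determined, with the left $q$-block, then the middle filler agents, then the right $q$-block; choose the rankings of the filler and block agents so that the three intended metrics are all consistent. (2) Show that, whatever committee $S$ the rule picks, either $|S \cap L| < q$ or $|S \cap R| < q$ (where $L, R$ are the extreme blocks), using $|S| = k < 2q$. (3) In the metric where the ``sacrificed'' block is pushed to distance $D$ (large) from the rest, the $q$-cost of each of the $q$ agents in that block is at least $D$, so $\SC(S,A) \geq qD$ roughly; meanwhile the optimal committee there puts $q$ agents in that far block and covers the other side with the remaining $k - q$ slots, giving an optimum that scales with $D$ but with the $q$ far agents at cost $0$ and a bounded contribution from the near side. (4) Carefully count the near-side contributions — here the precise spacing of the middle filler agents, and how many of them any committee can cover with its $k - q$ or $2q - k$ remaining slots, is what produces the exact constant; I would optimize the number of filler agents and the unit spacings so the worst of the three ratios is maximized, arriving at $2 - \frac{k-q}{4q-k-3}$, and then let the large distance $D \to \infty$ if needed to wash out lower-order terms (or keep $n$ and $D$ finite and exact, as the theorem claims a specific $n$).

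The main obstacle I expect is step (4): getting the \emph{exact} constant $2 - \frac{k-q}{4q-k-3}$ rather than just ``something approaching $2$'' requires a tight accounting of how the $k$ committee slots interact with the blocks and the filler in \emph{all three} metrics simultaneously, since the same committee $S$ must be charged against the right optimal committee in each metric and the rule gets to choose $S$ adversarially. In particular, the denominator $4q - k - 3$ strongly suggests a count of the form ``$q$ far agents $+$ $q$ agents newly badly served $+$ something linear in the overlap $2q - k$, minus a small constant for boundary effects,'' so I would reverse-engineer the filler count and spacings from that target expression: set up the ratio $\SC(S,A;d_i)/\OPT(d_i)$ symbolically in terms of the free parameters, impose that the minimum over $i \in \{1,2,3\}$ and the maximum over the adversary's choice of $S$ equals the claimed value, and solve. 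A secondary subtlety is ensuring the three metrics are genuinely consistent with one common profile despite the large distance $D$ appearing in two of them — this needs the middle filler agents' rankings to be ``robust'' to pushing either extreme block far away, which is exactly the kind of construction the authors used in \Cref{thm:utilitarian-1cost}, so I would mirror that argument.
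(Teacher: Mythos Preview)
Your high-level plan (one profile, three consistent metrics, case split on where the committee's mass lies) matches the paper, but the concrete construction you sketch would not give the bound, and the reason is structural rather than just a matter of tuning parameters.

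With extreme blocks $L,R$ of size exactly $q$ and one of them pushed to distance $D$, the optimal committee faces the \emph{same} dilemma as the rule: since $k-q<q$, any committee that puts $q$ agents in the far block leaves every near agent with its $q$th closest member in the far block, so the optimal cost also scales like $\Theta(D)$; and if the optimal puts nothing in the far block, its cost is $qD+O(1)$, the same leading term as the rule's. Either way the ratio tends to $1$ as $D\to\infty$, not to $2-\frac{k-q}{4q-k-3}$. So letting $D\to\infty$ cannot work here, and blocks of size $q$ are the wrong granularity.

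The paper avoids this by keeping all distances bounded (positions $0,1,2$) and, crucially, by taking the two outer blocks of size $q-1$ and two inner blocks of size $2q-k-1$, so that $|A_1\cup A_2|=|A_3\cup A_4|=3q-k-2$. The point of size $q-1$ is that \emph{no} committee can give an extreme block cost $0$: even if $S\supseteq A_4$, an agent in $A_4$ still has $q$-cost $\geq 1$. The three metrics then differ only in whether $A_2$ sits with $A_1$ or in the middle (and symmetrically for $A_3$). The case split is not on $|S\cap L|$ versus $q$ for $|L|=q$, but on whether $|S\cap(A_1\cup A_2)|\geq q$, $|S\cap(A_3\cup A_4)|\geq q$, or neither. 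In the ``$\geq q$ on the right'' case, one checks $|S\cap(A_1\cup A_2\cup A_3)|\leq (k-q)+|A_3|=q-1$, so every agent in $A_1\cup A_2$ has $q$-cost $2$ under $d_2$, while the optimal (all in $A_1\cup A_2$) pays $|A_3|+2|A_4|=4q-k-3$; the ratio $3(3q-k-2)/(4q-k-3)$ is exactly $2-\frac{k-q}{4q-k-3}$. The constant is not obtained by optimizing a free spacing parameter but falls out directly from these block sizes. Your step (4), ``reverse-engineer filler count and spacings,'' would eventually lead you here, but only after abandoning the large-$D$ idea and the size-$q$ blocks.
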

\begin{proof}
We let $k$ and $q$ be as in the statement and fix $n=2(3q-k-2)$, and consider an arbitrary $(n,k)$-voting rule $f$.
We partition the agents into four sets $A=\dot\bigcup_{i=1}^4 A_i$ such that $|A_1|=|A_4|=q-1$ and $|A_2|=|A_3|=2q-k-1$.
Note that all these values lie between $1$ and $q-1$. 
Indeed, this is trivial for $|A_1|$ and $|A_4|$, whereas for $|A_2|$ and $|A_3|$ we have $2q-k-1\geq 2\big(\frac{k}{2}+1\big)-k-1=1$ and $2q-k-1\leq 2q-q-1=q-1$, where we have used that $q$ lies between $\frac{k}{2}+1$ and $k$.

We consider the profile $\succ\in \calL^n(n)$, where
\begin{enumerate}[label=(\roman*)]
    \item $b\succ_a c$ whenever $a\in A_i,b\in A_j,c\in A_\ell$ for some $i,j,\ell\in [4]$ with $|i-j|<|i-\ell|$;
    \item $b \succ_a c$ whenever $a\in A_2,b\in A_1,c\in A_3$;
    \item $b \succ_a c$ whenever $a\in A_3,b\in A_4,c\in A_2$;
\end{enumerate}
and the remaining pairwise comparisons are arbitrary.
We consider the election $\instance=(A,k,\succ)$ with $A=[n]$.

In what follows, we distinguish whether $f$ selects $q$ or more agents from $A_1\cup A_2$, from $A_3\cup A_4$, or from none of them, and construct appropriate distance metrics to show that, in either case, the distortion is at least the one claimed in the statement.
Intuitively, if $f$ selects less than $q$ agents from both $A_1\cup A_2$ and from $A_3\cup A_4$, we will consider $A_1\cup A_2$ on one extreme and $A_3\cup A_4$ on the other, so that picking $q$ agents from any of these sets would lead to a lower social cost.
If $f$ selects $q$ or more agents from $A_1\cup A_2$ we will consider a metric where $A_1$ lies in one extreme, $A_2$ in the middle, and both $A_3$ and $A_4$ in the other extreme, so that picking all agents from $A_4$ would lead to a lower social cost.
If $f$ selects $q$ or more agents from $A_3\cup A_4$, we will construct a symmetric instance.

Formally, we first consider the case with $|S\cap (A_1\cup A_2)| < q$ and $|S\cap (A_3\cup A_4)| < q$ and define the distance metric $d_1$ on $A$ by the following positions $x\in (-\infty,\infty)^n$: $x_a=0$ for every $a\in A_1\cup A_2$ and $x_a=2$ for every $a\in A_3\cup A_4$.
Note that $d_1\ \rhd \succ$; see \Cref{fig:plot-uti-qcost}.(b) for an illustration.
It is clear that $\SC(S,a;d_1)=2$ for every $a\in A$.
If we consider the alternative committee $S'=A_1\cup A_2 \in {A\choose k}$, we have $\SC(S',a;d_1)=0$ for every $a\in A_1\cup A_2$ and $\SC(S',a;d_1)=2$ for every $a\in A_3\cup A_4$.
We obtain
\[
    \dist(f(\succ),\instance) \geq \frac{\SC(S,A;d_1)}{\SC(S',A;d_1)} = \frac{2\cdot n}{2\frac{n}{2}} = 2.
\]

If $|S\cap (A_3\cup A_4)| \geq q$, we define the distance metric $d_2$ on $A$ by the following positions $x\in (-\infty,\infty)^n$: $x_a=0$ for every $a\in A_1\cup A_2$, $x_a=1$ for every $a\in A_3$, and $x_a=2$ for every $a\in A_4$.
Note that $d_2\ \rhd \succ$; see \Cref{fig:plot-uti-qcost}.(b) for an illustration.
Since $|S\cap (A_1\cup A_2 \cup A_3)| \leq (k-q)+|A_3| = q-1<q$, we have that $\SC(S,a;d_2)=2$ for every $a\in A_1\cup A_2$.
Furthermore, since both $|A_3|<q$ and $|A_4|<q$, we have that $\SC(S,a;d_2)=1$ for every $a\in A_3\cup A_4$.
If we consider an alternative committee $S'\subseteq A_1\cup A_2 \in {A\choose k}$, which exists due to $|A_1\cup A_2| = 3q-k-2\geq q$, we have $\SC(S',a;d_2)=0$ for every $a\in A_1\cup A_2$, $\SC(S',a;d_2)=1$ for every $a\in A_3$, and $\SC(S',a;d_2)=2$ for every $a\in A_4$.
Thus, we obtain
\begin{align*}
    \dist(f(\succ),\instance) & \geq \frac{\SC(S,A;d_2)}{\SC(S',A;d_2)} \\
    & = \frac{2|A_1\cup A_2|+|A_3\cup A_4|}{|A_3|+2|A_4|} \\
    & = \frac{3(3q-k-2)}{(2q-k-1)+2(q-1)} \\
    & = 2 - \frac{k-q}{4q-k-3}.
\end{align*}

Analogously, if $|S\cap (A_1\cup A_2)| \geq q$, we define the distance metric $d_3$ on $A$ by the following positions $x\in (-\infty,\infty)^n$: $x_a=0$ for every $a\in A_1$, $x_a=1$ for every $a\in A_2$, and $x_a=2$ for every $a\in A_3\cup A_4$.
Note that $d_3\ \rhd \succ$; see \Cref{fig:plot-uti-qcost}.(b) for an illustration.
Since $|S\cap (A_2\cup A_3 \cup A_4)| \leq (k-q)+|A_2| = q-1<q$, we have that $\SC(S,a;d_3)=2$ for every $a\in A_3\cup A_4$.
Furthermore, since both $|A_1|<q$ and $|A_2|<q$, we have that $\SC(S,a;d_3)=1$ for every $a\in A_1\cup A_2$.
If we consider an alternative committee $S'\subseteq A_3\cup A_4 \in {A\choose k}$, which exists due to $|A_3\cup A_4| = 3q-k-2\geq q$, we have $\SC(S',a;d_3)=0$ for every $a\in A_3\cup A_4$, $\SC(S',a;d_3)=1$ for every $a\in A_2$, and $\SC(S',a;d_3)=2$ for every $a\in A_1$.
Thus, we obtain
\begin{align*}
    \dist(f(\succ),\instance) & \geq \frac{\SC(S,A;d_3)}{\SC(S',A;d_3)} \\
    & = \frac{2|A_3\cup A_4|+|A_1\cup A_2|}{|A_2|+2|A_1|} \\
    & = \frac{3(3q-k-2)}{(2q-k-1)+2(q-1)} \\
    & = 2 - \frac{k-q}{4q-k-3}.
\end{align*}

Since $\dist(f(\succ),\instance)\geq 2 - \frac{k-q}{4q-k-3}$ regardless of $f(\succ)$, we conclude that $\dist(f) \geq 2 - \frac{k-q}{4q-k-3}$.
\end{proof}
The lower bound provided in this theorem increases in $q$ and varies between $\frac{3}{2}+\frac{3}{2(k+1)}$ for $q=\frac{k}{2}+1$ and $2$ for $q=k$; \Cref{fig:plot-uti-qcost}.(a) illustrates it for $k=100$ and $q$ between $51$ and $100$.

\begin{figure}
    \centering
\begin{subfigure}{0.47\textwidth}
    \centering
\begin{tikzpicture}[xscale=0.07,yscale=0.09]
\discontarrow(2,2)(2,4)(2,8)(2,40);
\discontarrow(2,2)(4,2)(8,2)(65,2);
\Text[x=67,y=2]{$q$}
\Text[x=2,y=43]{distortion}
\draw (10,3) -- (10,1) node[below] {$50$};
\draw (20,3) -- (20,1) node[below] {$60$};
\draw (30,3) -- (30,1) node[below] {$70$};
\draw (40,3) -- (40,1) node[below] {$80$};
\draw (50,3) -- (50,1) node[below] {$90$};
\draw (60,3) -- (60,1) node[below] {$100$};

\draw (3,10) -- (1,10) node[left] {$1.5$};
\draw (3,15) -- (1,15) node[left] {$1.6$};
\draw (3,20) -- (1,20) node[left] {$1.7$};
\draw (3,25) -- (1,25) node[left] {$1.8$};
\draw (3,30) -- (1,30) node[left] {$1.9$};
\draw (3,35) -- (1,35) node[left] {$2$};

\draw[semithick,blue] plot[mark=., mark color=none,mark options={fill=black,draw opacity=0},mark size=3pt] coordinates { (11, 10.7426) (12, 12.1429) (13, 13.4404) (14, 14.646) (15, 15.7692) (16, 16.8182) (17, 17.8) (18, 18.7209) (19, 19.5865) (20, 20.4015) (21, 21.1702) (22, 21.8966) (23, 22.5839) (24, 23.2353) (25, 23.8535) (26, 24.441) (27, 25.0) (28, 25.5325) (29, 26.0405) (30, 26.5254) (31, 26.989) (32, 27.4324) (33, 27.8571) (34, 28.2642) (35, 28.6548) (36, 29.0299) (37, 29.3902) (38, 29.7368) (39, 30.0704) (40, 30.3917) (41, 30.7014) (42, 31.0) (43, 31.2882) (44, 31.5665) (45, 31.8354) (46, 32.0954) (47, 32.3469) (48, 32.5904) (49, 32.8261) (50, 33.0545) (51, 33.2759) (52, 33.4906) (53, 33.6989) (54, 33.9011) (55, 34.0975) (56, 34.2883) (57, 34.4737) (58, 34.654) (59, 34.8294) (60, 35.0) };
\end{tikzpicture}
\caption{Lower bounds for $k=100$, $q\in \{51,\ldots,100\}$.}
    \end{subfigure}
    \hfill
    \begin{subfigure}{0.52\textwidth}

\begin{tikzpicture}[scale=1, every node/.style={font=\footnotesize}]
% Metric d_1
    \node[anchor=east] at (-0.5, 0) {\textbf{Metric $d_1$}};
    % Draw points and labels
    \filldraw[black] (0,0) circle (2.5pt) node[anchor=north] {$A_1\cup A_2$};
    \filldraw[black] (5,0) circle (2.5pt) node[anchor=north] {$A_3\cup A_4$};
    % Distances
    \draw[-] (0,0) -- (5,0) node[midway, anchor=south] {$2$};

% Metric d_2
\begin{scope}[yshift=-1cm],
    \node[anchor=east] at (-0.5, 0) {\textbf{Metric $d_2$}};
    % Draw points and labels
    \filldraw[black] (0,0) circle (2.5pt) node[anchor=north] {$A_1\cup A_2$};
    \filldraw[black] (2.5,0) circle (2.5pt) node[anchor=north] {$A_3$};
    \filldraw[black] (5,0) circle (2.5pt) node[anchor=north] {$A_4$};
    % Distances
    \draw[-] (0,0) -- (2.5,0) node[midway, anchor=south] {$1$};
    \draw[-] (2.5,0) -- (5,0) node[midway, anchor=south] {$1$};
\end{scope}

% Metric d_3
\begin{scope}[yshift=-2cm],
    \node[anchor=east] at (-0.5, 0) {\textbf{Metric $d_3$}};
    % Draw points and labels
    \filldraw[black] (0,0) circle (2.5pt) node[anchor=north] {$A_1$};
    \filldraw[black] (2.5,0) circle (2.5pt) node[anchor=north] {$A_2$};
    \filldraw[black] (5,0) circle (2.5pt) node[anchor=north] {$A_3\cup A_4$};
    % Distances
    \draw[-] (0,0) -- (2.5,0) node[midway, anchor=south] {$1$};
    \draw[-] (2.5,0) -- (5,0) node[midway, anchor=south] {$1$};
    \filldraw[white] (0,-1.2) circle (.1pt);
\end{scope}
\end{tikzpicture}
\caption{Metrics considered in the proof.}
\end{subfigure}
    \caption{Lower bound on the distortion of any rule for utilitarian $q$-cost stated in \Cref{thm:utilitarian-lowerbound-qcost}, and metrics used to prove it.}
    \label{fig:plot-uti-qcost}
\end{figure}

\subsubsection{A Voting Rule for $k=2$}\label{subsec:utilitarian-kcost}
In this section, we focus on the special case of utilitarian $q$-cost when $q = k = 2$. In this setting, the social cost of a committee $S$ for an agent $a$ is determined by the distance to the \textit{farthest} agent in the committee $S'$. Formally, for a set of agents $A$ and a committee $S' \in {A \choose 2}$, the social cost is:
\[
\SC(S', A; d) = \sum_{a \in A} \max_{s \in S'} d(a, s).
\]

On an intuitive level, the goal is to select agents that are both close to each other and close to the median agent(s). 
In particular, note that the optimal committee always consists of two consecutive agents: For any committee of non-consecutive agents, replacing the most extreme agent among the selected one with another closer to the median cannot decrease the social cost.

A visual aid for computing the social cost of a committee is what we call \textit{stair diagrams}, illustrated in \Cref{fig:stair-diagram-comparison}.
The area below both \textit{staircases} is a cost that every committee must incur. A specific committee $\{s_1,s_2\}$ must incur, in addition, a cost equal to the area of the rectangle whose basis is the line segment between both selected candidates and whose height is $n$ (and potentially an additional area to reach this point from the median).
The figure illustrates the common area incurred by any committee and the additional cost of two possible committees for each $n\in \{8,9\}$.
It also provides an intuition of the intrinsic difference between the cases with odd and even $k$.
The following lemma bounds the social cost of any committee from below and provides intuition about this objective. 

\begin{restatable}{lemma}{lemTwoSocialCost}\label{lem:2-social-cost}
Let $\instance = (A, k, \succ)$ be an election and $d\ \rhd \succ$ a consistent metric. Then, for every committee $S' = \{s_1, s_2\}\in {A\choose 2}$, 
\[
    \SC(S',A; d) \geq \begin{cases}
        \sum_{i=1}^{\frac{n-1}{2}} d(i, n-i+1) + \frac{n-1}{2}\cdot d(s_1, s_2) + \SC\big(S',\big\{\frac{n+1}{2}\big\}; d\big) & \text{if } n \text{ is odd,}\\[5pt]
        \sum_{i=1}^{\frac{n}{2}} d(i, n-i+1) + \frac{n}{2}\cdot d(s_1, s_2) & \text{if } n \text{ is even.}
    \end{cases}
\]
\end{restatable}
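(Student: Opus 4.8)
The plan is to reduce the stated bound to a pointwise inequality for pairs of agents that are symmetric about the center, and then sum. As usual, assume the agents are labelled $1,\dots,n$ so that $x_1\le x_2\le\dots\le x_n$ (this is legitimate by \Cref{lem:agents-order}), and fix a committee $S'=\{s_1,s_2\}$ with $x_{s_1}\le x_{s_2}$. Recall that under $q$-cost with $q=k=2$ we have $\SC(S',a;d)=\max\{d(a,s_1),d(a,s_2)\}$ for every agent $a$, so that $\SC(S',A;d)=\sum_{a\in A}\max\{d(a,s_1),d(a,s_2)\}$.

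The key step is the following pairwise bound: for any two agents $i,j$ with $i<j$ (hence $x_i\le x_j$),
\[
    \SC(S',i;d)+\SC(S',j;d)\ \ge\ d(i,j)+d(s_1,s_2).
\]
To prove it, I would use the trivial lower bounds $\SC(S',i;d)\ge d(i,s_2)$ and $\SC(S',j;d)\ge d(j,s_1)$ on the respective maxima, together with a one-dimensional triangle inequality:
\[
    \SC(S',i;d)+\SC(S',j;d)\ \ge\ |x_i-x_{s_2}|+|x_j-x_{s_1}|\ \ge\ \big|(x_j-x_i)+(x_{s_2}-x_{s_1})\big|\ =\ d(i,j)+d(s_1,s_2),
\]
where the final equality uses $x_j-x_i\ge 0$ and $x_{s_2}-x_{s_1}\ge 0$. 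The ``crossing'' choice (pair the left agent $i$ with the right selected agent $s_2$, and the right agent $j$ with the left selected agent $s_1$) is exactly what makes the two displacement vectors add rather than cancel, and it is also the reason the center-symmetric matching below is the right one.

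Finally, I would sum this inequality over a center-symmetric matching of $A$. If $n$ is even, the pairs $\{i,n-i+1\}$ for $i\in\{1,\dots,n/2\}$ partition $A$, and adding the pairwise bound over all of them gives exactly $\sum_{i=1}^{n/2}d(i,n-i+1)+\tfrac{n}{2}\,d(s_1,s_2)$. If $n$ is odd, the pairs $\{i,n-i+1\}$ for $i\in\{1,\dots,\tfrac{n-1}{2}\}$ together with the singleton $\{\tfrac{n+1}{2}\}$ partition $A$; summing the pairwise bound over the $\tfrac{n-1}{2}$ pairs and adding the middle agent's contribution $\SC(S',\tfrac{n+1}{2};d)=\SC\big(S',\{\tfrac{n+1}{2}\};d\big)$ yields the claimed expression. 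The only points that need a quick check are that $i<n-i+1$ over the relevant index ranges — so the pairwise bound is applied with the correct orientation — and that the listed pairs (plus the singleton, when $n$ is odd) really do partition $A$; both are elementary index bookkeeping. I do not anticipate any genuine obstacle here: essentially all of the content sits in the crossing triangle inequality above, and the center-symmetric matching is precisely what the stair-diagram picture suggests.
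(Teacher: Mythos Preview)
Your proposal is correct and follows essentially the same route as the paper: establish the pairwise bound $\SC(S',i;d)+\SC(S',j;d)\ge d(i,j)+d(s_1,s_2)$ and sum it over the center-symmetric matching of $A$. The only difference is cosmetic: the paper verifies the pairwise bound via a short case distinction on the relative position of $\{s_1,s_2\}$ and $\{i,n-i+1\}$, whereas your ``crossing'' choice $\SC(S',i;d)\ge d(i,s_2)$, $\SC(S',j;d)\ge d(j,s_1)$ plus the triangle inequality $|x_i-x_{s_2}|+|x_j-x_{s_1}|\ge |(x_j-x_{s_1})-(x_i-x_{s_2})|$ handles all positions at once.
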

\begin{proof}
Let $\instance=(A,k,\succ)$ with $A=[n]$ and $d$ be as in the statement and $S'=\{s_1,s_2\}\in {A\choose k}$ an arbitrary committee.
We assume that $s_1<s_2$ w.l.o.g..
Let $i\in \{1,\ldots,\big\lfloor\frac{n}{2}\rfloor\}$ be a fixed agent.
If $i\leq s_1<s_2\leq n-i+1$, we have that the cost of the committee for agents $i$ and $n-i+1$ is at least
\[
    \SC(S',i;d) + \SC(S',n-i+1;d) = d(i,s_2)+d(s_1,n-i+1) \geq d(i,n-i+1)+d(s_1,s_2).
\]
Similarly, if $s_2<i$, we have
\[
    \SC(S',i;d) + \SC(S',n-i+1;d) = d(s_1,i)+d(s_1,n-i+1) \geq d(i,n-i+1)+d(s_1,s_2),
\]
and if $s_1>n-i+1$,
\[
    \SC(S',i;d) + \SC(S',n-i+1;d) = d(i,s_2)+d(n-i+1,s_2) \geq d(i,n-i+1)+d(s_1,s_2).
\]
Summing up over all agents, we obtain
\[
    \SC(S',A;d) = \sum_{i=1}^{\frac{n}{2}} (\SC(S',i;d) + \SC(S',n-i+1;d)) \geq \sum_{i=1}^{\frac{n}{2}} d(i, n-i+1) + \frac{n}{2} d(s_1, s_2)
\]
if $n$ is even, and
\begin{align*}
    \SC(S',A;d) & = \sum_{i=1}^{\frac{n-1}{2}} (\SC(S',i;d) + \SC(S',n-i+1;d)) +\SC\bigg(S',\frac{n+1}{2};d\bigg) \\
    & \geq \sum_{i=1}^{\frac{n-1}{2}} d(i, n-i+1) + \frac{n-1}{2} d(s_1, s_2)+\SC\bigg(S',\frac{n+1}{2};d\bigg)
\end{align*}
if $n$ is odd.
\end{proof}

\begin{figure}[t]
    \centering
    \begin{subfigure}{0.48\textwidth}
        \centering
        \begin{tikzpicture}[scale=0.7, every node/.style={font=\footnotesize}]
            \draw[thick] (0,0.6) -- (7.4,0.6);
            
            % Fill the common area in layered boxes
            \fill[gray!30] (0, 0.6) -- (7.4, 0.6) -- (7.4, 1.2) -- (0, 1.2) -- cycle;
            \fill[gray!30] (0.7, 1.2) -- (6.7, 1.2) -- (6.7, 1.8) -- (0.7, 1.8) -- cycle;
            \fill[gray!30] (1.4, 1.8) -- (0.7*5 +2.5, 1.8) -- (0.7*5 +2.5, 2.4) -- (1.4, 2.4) -- cycle;

            \fill[gray!30] (2.1, 2.4) -- (0.7*5 +2.1, 2.4) -- (0.7*5 +2.1, 3) -- (2.1, 3) -- cycle;

            % Fill the ALG area 
            
            \fill[red!30] (1.4, 2.4) -- (2.35, 2.4) -- (2.35, 6) -- (1.4, 6) -- cycle;
            \fill[red!30] (2.1, 3) -- (3.5, 3) -- (3.5, 3.6) -- (2.1, 3.6) -- cycle;

            % Fill the OPT area 

            \fill[green!30] (3.5, 3) -- (5.6, 3) -- (5.6, 3.6) -- (3.5, 3.6) -- cycle;
            
           \fill[green!30] (0.7*5-0.7 + 2.8,  2.4) -- (6, 2.4) -- (6, 6) -- (0.7*5-0.7 +2.8 , 6) -- cycle;

            % Draw the first stair steps starting from point 1
            \foreach \x in {1,...,2} {
                \draw[thick] (0.7*\x-0.7,\x*0.6) -- (0.7*\x-0.7,\x*0.6+0.6);
                \draw[thick] (0.7*\x-0.7,\x*0.6+0.6) -- (0.7*\x ,\x*0.6+0.6);
            }
            \draw[thick] (0.7*3-0.7,3*0.6) -- (0.7*3-0.7,3*0.6+0.6);
            \draw[thick] (0.7*3-0.7,3*0.6+0.6) -- (2.35 ,3*0.6+0.6);
            
            \draw[thick] (2.35,4*0.6) -- (2.35,4*0.6+0.6);
            \draw[thick] (2.35,4*0.6+0.6) -- (0.7*4 + 0.7 ,4*0.6+0.6);

            \draw[thick] (0.7*5,5*0.6) -- (0.7*5,5*0.6+0.6);
            \draw[thick] (0.7*5,5*0.6+0.6) -- (0.7*5 + 2.1 ,5*0.6+0.6);

            \draw[thick] (0.7*5 +2.1,6*0.6) -- (0.7*5 +2.1,6*0.6+0.6);
            \draw[thick] (0.7*5 +2.1,6*0.6+0.6) -- (0.7*5 + 2.5 ,6*0.6+0.6);

            \draw[thick] (0.7*5 +2.5,7*0.6) -- (0.7*5 +2.5,7*0.6+0.6);
            \draw[thick] (0.7*5 +2.5,7*0.6+0.6) -- (0.7*5 + 3.2 ,7*0.6+0.6);

            \draw[thick] (0.7*5 +3.2,8*0.6) -- (0.7*5 +3.2,8*0.6+0.6);
            \draw[thick] (0.7*5 +3.2,8*0.6+0.6) -- (0.7*5 + 3.9 ,8*0.6+0.6);

            % Final vertical line at point 9
            \draw[thick] (7.4, 5.4) -- (7.4, 6);

            % Second set of stairs (going down)
            \draw[thick] (0, 8*0.6+0.6) -- (0, 8*0.6+1.2);

            \foreach \x in {1,...,2} {
            \draw[thick] (\x* 0.7 - 0.7,8*0.6+0.6 - \x*0.6 + 0.6) -- (\x* 0.7,8*0.6+0.6- \x*0.6 + 0.6);
            \draw[thick] (\x* 0.7,8*0.6+0.6- \x*0.6 + 0.6) -- (\x* 0.7,8*0.6- \x*0.6 + 0.6);
            }

            \draw[thick] (3* 0.7 - 0.7,8*0.6+0.6 - 3*0.6 + 0.6) -- (2.35,8*0.6+0.6- 3*0.6 + 0.6);
            \draw[thick] (2.35,8*0.6+0.6- 3*0.6 + 0.6) -- (2.35,8*0.6- 3*0.6 + 0.6);
            
            \draw[thick] (2.35,8*0.6+0.6 - 4*0.6 + 0.6) -- (3.5,8*0.6+0.6- 4*0.6 + 0.6);
            \draw[thick] (3.5,8*0.6+0.6- 4*0.6 + 0.6) -- (3.5,8*0.6- 4*0.6 + 0.6);

            \draw[thick] (3.5,3) -- (5.6,3);
            \draw[thick] (5.6,3) -- (5.6,2.4);

            \draw[thick] (5.6,2.4) -- (6,2.4);
            \draw[thick] (6 ,2.4) -- (6,1.8);

            \draw[thick] (6,1.8) -- (6.7,1.8);
            \draw[thick] (6.7,1.8) -- (6.7,1.2);

            \draw[thick] (6.7,1.2) -- (7.4,1.2);
            \draw[thick] (7.4,1.2) -- (7.4,0.6);

            % Draw the points spaced further apart
            \foreach \x in {1,...,3} {
                \filldraw[black] (0.7*\x-0.7,0.6) circle (2pt) node[anchor=north] {\x};
            }
            \filldraw[black] (2.35,0.6) circle (2pt) node[anchor=north] {4};
            \filldraw[black] (3.5,0.6) circle (2pt) node[anchor=north] {5};
            \filldraw[black] (5.6,0.6) circle (2pt) node[anchor=north] {6};
            \filldraw[black] (6,0.6) circle (2pt) node[anchor=north] {7};
            \filldraw[black] (6.7,0.6) circle (2pt) node[anchor=north] {8};
            \filldraw[black] (7.4,0.6) circle (2pt) node[anchor=north] {9};

        \end{tikzpicture}
        \caption{Stair diagram for $n=9$. The red area corresponds to the committee $\{3,4\}$; the green area to $\{6,7\}$.}
    \end{subfigure}
    \hfill
    \begin{subfigure}{0.48\textwidth}
        \centering
        \begin{tikzpicture}[scale=0.7, every node/.style={font=\footnotesize}]
            % Move the baseline higher and end at point 8
            \draw[thick] (0,0.6) -- (7.7,0.6);
            
            % Fill the common area in layered boxes
            \foreach \x/\y in {1/8, 2/7, 3/6, 4/5} {
                \fill[gray!30] (0.7*\x-0.7, 0.6) -- (0.7*\y +2.1, 0.6) -- (0.7*\y + 2.1, \x*0.6+0.6) -- (0.7*\x-0.7, \x*0.6+0.6) -- cycle;
            }

            % Fill the ALG area 
            \foreach \x/\y in {4/5} {
                \fill[red!30] (0.7*\x-0.7, 0.6 + 2.4) -- (0.7*\y +2.1, 0.6 +2.4) -- (0.7*\y + 2.1, \x*0.6+0.6 + 2.4) -- (0.7*\x-0.7 , \x*0.6+0.6 + 2.4) -- cycle;
            }

            % Fill the OPT area 
            \foreach \x/\y in {5/6} {
                \fill[green!30] (0.7*\x-0.7 + 2.8,  2.4) -- (0.7*\y +2.1, 2.4) -- (0.7*\y + 2.1, \x*0.6 + 2.4) -- (0.7*\x-0.7 +2.8 , \x*0.6+ 2.4) -- cycle;
            }

            % Draw the first stair steps starting from point 1
            \foreach \x in {1,...,3} {
                \draw[thick] (0.7*\x-0.7,\x*0.6) -- (0.7*\x-0.7,\x*0.6+0.6); % Vertical step
                \draw[thick] (0.7*\x-0.7,\x*0.6+0.6) -- (0.7*\x,\x*0.6+0.6); % Horizontal step
            }

            \draw[thick] (2.1,2.4) -- (2.1,3); % Vertical step
            \draw[thick] (2.1,3) -- (5.6,3); % Horizontal step
            
            % Draw the first stair steps starting from point 1
            \foreach \x in {5,...,7} {
                \draw[thick] (0.7*\x+ 2.1,\x*0.6) -- (0.7*\x+2.1,\x*0.6+0.6); % Vertical step
                \draw[thick] (0.7*\x+ 2.1,\x*0.6+0.6) -- (0.7*\x+ 2.8,\x*0.6+0.6); % Horizontal step
            }

            % Final vertical line at point 8
            \draw[thick] (7.7, 7*0.6+0.6) -- (7.7, 7*0.6+1.2);

            % Draw the second set of stairs starting from point 8 (lowered to touch baseline)
            \foreach \x in {1,...,4} {
                \draw[thick] (7.7-\x*0.7 + 0.7, \x*0.6) -- 
                             (7.7-\x*0.7+ 0.7, \x*0.6+0.6); % Vertical step (ascending)
                \draw[thick] (7.7-\x*0.7+ 0.7, \x*0.6+0.6) -- 
                             (7.7-\x*0.7, \x*0.6+0.6); % Horizontal step (ascending)
            }

            \foreach \x in {5,...,7} {
                \draw[thick] (7.7-\x*0.7 - 2.1, \x*0.6) -- 
                             (7.7-\x*0.7-2.1, \x*0.6+0.6); % Vertical step (ascending)
                \draw[thick] (7.7-\x*0.7-2.1, \x*0.6+0.6) -- 
                             (7.7-\x*0.7-2.8, \x*0.6+0.6); % Horizontal step (ascending)
            }

            % Final vertical line at point 1
            \draw[thick] (0, 7*0.6+0.6) -- (0, 7*0.6+1.2);

            % Draw the points spaced further apart
            \foreach \x in {1,...,4} {
                \filldraw[black] (0.7*\x-0.7,0.6) circle (2pt) node[anchor=north] {\x};
            }
            % Draw the points spaced further apart
            \foreach \x in {5,...,8} {
                \filldraw[black] (0.7*\x+ 3*0.7 ,0.6) circle (2pt) node[anchor=north] {\x};
            }

        \end{tikzpicture}
        \caption{Stair diagram for $n=8$. The red area corresponds to the committee $\{4,5\}$; the green area to $\{5,6\}$.}
    \end{subfigure}
    \caption{Stair diagrams for 9 and 8 agents. The common cost incurred by any committee is shown in gray; the additional cost of two specific committees is shown in red and green.}
    \label{fig:stair-diagram-comparison}
\end{figure}

In the remainder of this section, we establish tight distortion bounds of $\frac{4}{3}$ and $2$ for odd and even values of $n$, respectively. To derive these bounds, we frequently utilize the inequalities above to bound the social cost of the optimal committee.

\paragraph*{\bf Odd number of agents} 
We first focus on odd values of $n$.
For $n=3$, it is easy to see that the optimal set corresponds to the median agent and the agent that the median prefers among the others, which yields a simple rule with distortion $1$.
For $n\geq 5$ we introduce a voting rule called \favCouple.
For an election $\instance=(A,k,\succ)$, we say that agents $a,b\in A$ are a \textit{couple} if they rank each other above all other agents; i.e., if $b\succ_a c$ and $a\succ_b c$ for every $c\in A\setminus \{a,b\}$.
Note that each agent can take part in at most one couple.
\favCouple\ selects the closest couple to the median when restricting to the five middle agents.

\begin{Vrule}[\favCouple]\label{closest couple}
For a preference profile $\succ$, compute the order from left to right $1,\ldots,n$ and let $m=\frac{n+1}{2}$ be the median agent. 
If there is a couple among the sets $\{m-1,m\}$ and $\{m,m+1\}$, return it.
Else, return $\{m+1,m+2\}$ if $m+2\succ_m m-2$ and return $\{m-2,m-1\}$ otherwise.
\end{Vrule}

On an intuitive level, this voting rule selects two consecutive agents who are both close to each other and to the median agent. 
The restriction to middle agents is necessary; simply choosing an arbitrary couple can lead to a distortion of up to $2$.
For example, this is the case if there are $n$ agents with distances $d(a,a+1)=1+a\varepsilon$ for all $a\in [n]$ and a small $\varepsilon>0$, as the only couple is $\{1,2\}$ with a social cost of approximately $\frac{n^2}{2}$, while the committee consisting of the median agent and any neighbor is close to $\frac{n^2}{4}$.
We now show that this rule provides the best-possible distortion of $\frac{4}{3}$ for an odd number of agents.
To establish the distortion of \favCouple, we address different cases depending on the set selected by this rule and the optimal set.
In each of them, we can use the selection condition of the rule to bound the social cost of the set selected by it from above and the optimal social cost from below and achieve a ratio of at most $\frac{4}{3}$ between them.
Intuitively, in situations like the one illustrated in \Cref{fig:stair-diagram-comparison}.(a), \favCouple\ can select a suboptimal committee (e.g.\ the red one instead of the green one), but this imposes several restriction on the distances, such as  $d(3,4)\leq \min\{d(4,5),d(6,7)\}$ and $2d(3,4)\leq d(5,7)$ in this example.
In particular, this implies lower bounds on the common cost incurred by any voting rule.
We study and apply these inequalities carefully to obtain our guarantee.

For the lower bound, on the other hand, we consider $n=5$ agents and two metrics illustrated in \Cref{fig:lower-bound-metrics}.
If a rule selects $\{1,2\}$, under metric $d_1$ the agents' costs, from left to right, are $1$, $1$, $2$, $4$, and $4$, accounting for a total cost of $12$, while the optimal committee $\{4,5\}$ induces a social cost of $4+3+2+0+0=9$.
Similarly, note that if a rule selects $\{2,3\}$, $\{3,4\}$, or $\{4,5\}$, then the social costs under metric $d_2$ are $8$, $8$, and $10$, respectively, while the optimal committee $\{1,2\}$ induces a social cost of $6$.
In all cases, the ratio is at least $\frac{4}{3}$, so the bound follows.

\begin{restatable}{theorem}{thmUtilitarianTwoCostOdd}\label{thm:utilitarian-2cost-k2-odd} 
    For every odd $n\geq 5$, \favCouple\ achieves a distortion of $\frac{4}{3}$ for utilitarian $2$-cost.  
    Moreover, there exists $n\in \NN$ such that, for every $(n,2)$-voting rule $f$, we have $\dist(f) \geq \frac{4}{3}$ for utilitarian $2$-cost.
\end{restatable}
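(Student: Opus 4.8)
\emph{Plan.} The statement has two halves. For the upper bound---that \favCouple\ has distortion at most $\tfrac43$ for every odd $n\ge 5$---I would fix an election and a consistent metric, use \Cref{lem:agents-order} to assume the agents are ordered $1,\dots,n$ with positions $x_1\le\dots\le x_n$, set $m=\tfrac{n+1}2$, let $S$ be \favCouple's output, and use that an optimal committee $S^*=\{i^*,i^*+1\}$ may be taken consecutive. The backbone is the identity $\SC(\{i,i+1\},A;d)=i\cdot d(i,i+1)+\sum_{a\in A}|x_a-x_i|$, the fact that $\sum_a|x_a-x_m|$ equals the common stair area $C:=\sum_{j=1}^{(n-1)/2}d(j,n-j+1)$ and minimizes $\sum_a|x_a-x_i|$ over $i$ (so $\sum_a|x_a-x_i|\ge C+|x_i-x_m|$, with a sharper gap-counting bound when needed), and \Cref{lem:2-social-cost}. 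I would then split on what \favCouple\ returns, using left--right symmetry.

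\textbf{Case A: $S$ is a couple, so $S\in\{\{m-1,m\},\{m,m+1\}\}$; say $S=\{m,m+1\}$.} Being a couple forces $d(m,m+1)\le d(m-1,m)$ and $d(m,m+1)\le d(m+1,m+2)$, and here $\SC(S,A;d)=C+m\cdot d(m,m+1)$. From the backbone identity and these two inequalities I would show (i) $\SC(\{i,i+1\},A;d)\ge C+d(m,m+1)$ for every $i$, hence $\OPT\ge C+d(m,m+1)$; and (ii) $C\ge(3m-4)\,d(m,m+1)$, since for each $i\le m-2$ the interval $[x_i,x_{n-i+1}]$ contains the gaps $d(m-1,m),d(m,m+1),d(m+1,m+2)$, all at least $d(m,m+1)$. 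Together these give $\tfrac{\SC(S,A;d)}{\OPT}\le\tfrac{C+m\,d(m,m+1)}{C+d(m,m+1)}\le\tfrac43$, the last step being exactly $C\ge(3m-4)\,d(m,m+1)$.

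\textbf{Case B: $S=\{m+1,m+2\}$ (or symmetrically $\{m-2,m-1\}$).} This occurs precisely when neither $\{m-1,m\}$ nor $\{m,m+1\}$ is a couple and $m+2\succ_m m-2$, i.e.\ $d(m,m+1)+d(m+1,m+2)\le d(m-2,m-1)+d(m-1,m)$. Using that each agent's favorite among the others is an immediate neighbor, the two ``not a couple'' hypotheses become ordering relations among the four central gaps; combining them with the displayed inequality I would first deduce $d(m+1,m+2)\le d(m,m+1)$ and $d(m-2,m-1)+d(m-1,m)\ge d(m,m+1)$. I would then compute $\SC(\{m+1,m+2\},A;d)=C+(m+1)\,d(m+1,m+2)+d(m,m+1)$ and verify $\SC(\{i,i+1\},A;d)\ge\tfrac34\,\SC(S,A;d)$ for every $i$: far committees through $\SC(\{i,i+1\},A;d)\ge C+|x_i-x_m|$, and the committees $\{m-2,m-1\},\{m-1,m\},\{m,m+1\}$ by exact evaluation, feeding in the gap inequalities (in particular $C\ge(m-1)\,d(m,m+1)$ together with $d(m-2,m-1)+d(m-1,m)\ge d(m,m+1)$, which pin down $C$ well enough) to keep the ratio at $\tfrac43$. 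This case is the main obstacle: the ``not a couple'' conditions are weak and only become usable after the neighbor-favorite observation, and one must juggle several sub-cases according to the ordering of the central gaps and to where $S^*$ lies while keeping the constants tight; Case A, by contrast, is essentially mechanical once the two couple inequalities are in hand.

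For the lower bound I would take $n=5$ and the profile in which $1$ and $2$ each rank the other first while $3,4,5$ rank everyone one-sidedly (so that, e.g., $3$ ranks $1$ above $5$), realized by the two consistent metrics with positions $d_1=(0,1,2,4,4)$ and $d_2=(0,0,1,2,3)$ (perturbing ties by $\varepsilon$ if strict consistency is desired, as the paper's footnote permits). A direct computation shows that under $d_2$ the committee $\{1,2\}$ has social cost $6$ and every other committee has social cost at least $8$, while under $d_1$ the committee $\{1,2\}$ has social cost $12$ whereas $\{4,5\}$ has social cost $9$. Hence, for any $(5,2)$-voting rule $f$: if $f$ returns $\{1,2\}$ then $\dist(f(\succ),\instance;d_1)\ge\tfrac{12}{9}=\tfrac43$, and otherwise $\dist(f(\succ),\instance;d_2)\ge\tfrac{8}{6}=\tfrac43$; either way $\dist(f)\ge\tfrac43$.
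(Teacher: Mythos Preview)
Your overall architecture matches the paper's: split on whether the median is in $S$, compare $\SC(S)$ to $\SC(S^*)$ using the ``stair'' quantity $C$ and the central gap inequalities, and use the same $n=5$ instance for the lower bound (your two metrics are exactly the paper's $d_1$ and $d_2$). Case~A is correct and in fact tidier than the paper's Case~1: your identity $\SC(\{i,i+1\})=i\,d(i,i+1)+\sum_a|x_a-x_i|$ makes the comparison $\tfrac{C+m\,d(m,m+1)}{C+d(m,m+1)}\le\tfrac43$ fall out cleanly from $C\ge(3m-4)\,d(m,m+1)$.

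Case~B, however, is only a plan, and as written it does not close. Two concrete issues. First, your ``far committees'' bound $\SC(\{i,i+1\})\ge C+|x_i-x_m|$ together with explicit checks of $\{m-2,m-1\},\{m-1,m\},\{m,m+1\}$ does not cover committees to the \emph{right} of $S=\{m+1,m+2\}$, and these can be optimal (e.g.\ for $n=7$ with gaps $5,5,4,3,3,0$ the rule picks $\{5,6\}$ while $\{6,7\}$ is optimal). You need a separate bound such as $\SC(\{i,i+1\})\ge \sum_a|x_a-x_{m+2}|=C+g+3h$ for $i\ge m+2$. Second, the lower bound $C\ge(m-1)\,d(m,m+1)$ you propose is too weak to finish any of these comparisons for general $m$; one needs the sharper $C\ge(2m-3)g+(2m-4)h$, which follows from $d(j,n{-}j{+}1)\ge d(m{-}2,m{+}2)\ge 2(g{+}h)$ for $j\le m-2$ (using $d(m-2,m)\ge g+h$). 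With that in hand the right-side committees and $\{m,m+1\}$ go through; the committees $\{m-1,m\}$ and $\{m-2,m-1\}$ still require a short additional argument because $d(m-1,m)$ can be tiny. The paper organizes this differently, sub-casing on where $S^*$ lies (to the left of the median, containing it, or to the right), which avoids having to prove the inequality for \emph{all} $i$ simultaneously; that decomposition is what lets the constants stay at~$\tfrac43$ without the stronger $C$-bound.
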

\begin{proof}

We consider an arbitrary election $\instance=(A,k,\succ)$ with $n\geq 5$ and $A=[n]$, and a consistent metric $d\ \rhd \succ$.
We denote the five middle agents by $a_1, \dots, a_5$ from left to right, with $a_3$ being the median agent. 
We let $S$ denote the committee selected by \favCouple\ and $S^*$ denote the optimal committee for the metric $d$.
We analyze two main cases, depending on whether the rule selects the median agent or not.

    \paragraph*{\bf Case 1: $a_3\in S$}   
    w.l.o.g., we assume that $a_2 \succ_{a_3} a_4$, which implies that the selected committee is $S = \{a_2, a_3\}$. This implies that agents $a_2$ and $a_3$ form a couple, and both $d(a_2, a_3) \leq d(a_1, a_2)$ and $d(a_2, a_3) \leq d(a_3, a_4)$ hold.
    Therefore,
    \begin{equation}\label{eq:fSelectsM}
        d(a_1,a_5) \geq 3 \cdot d(a_2,a_3), \quad d(a_2,a_4) \geq 2 \cdot d(a_2,a_3).
    \end{equation}

For each $i \leq \frac{n-1}{2}$, the joint cost of $S$ for agents $i$ and $n-i+1$ is given by 
\[
    \SC(S,i;d)+\SC(S,n-i+1;d) = d(i, a_3) + d(a_2,n-1+1) = d(i,n-i+1) + d(a_2, a_3).
\]
Since the median agent incurs a cost of $\SC(A,a_3;d)=d(a_2, a_3)$, we obtain:
\begin{align*}
\SC(S, A; d) & = \sum_{i=1}^{\frac{n-3}{2}} d(i,n-i+1) + d(a_2,a_4) + \left(\frac{n-1}{2}\right) d(a_2,a_3) + d(a_2,a_3)\\
& = \sum_{i=1}^{\frac{n-3}{2}} d(i,n-i+1) + \left(\frac{n+1}{2}\right) d(a_2,a_3) + d(a_2,a_4).
\end{align*}
On the other hand, by Lemma~\ref{lem:2-social-cost}, we have:
\begin{align*}
\SC(S^*, A; d) & \geq \sum_{i=1}^{\frac{n-1}{2}} d(i,n-i+1) + \SC(\{a_3\}, A; d) \\
& \geq \sum_{i=1}^{\frac{n-3}{2}} d(i,n-i+1)  + d(a_2,a_4) + d(a_2,a_3),
\end{align*}
where we used, for the second inequality, that the cost of the median agent is at least $d(a_2,a_3)$ due to the assumption that $a_2 \succ_{a_3} a_4$.
Thus, the distortion is:
\begin{align*}
\dist(f) & = \frac{\SC(S, A; d)}{\SC(S^*, A; d)}\\
& \leq \frac{\sum_{i=1}^{\frac{n-3}{2}} d(i,n-i+1) + \left(\frac{n+1}{2}\right) d(a_2,a_3) + d(a_2,a_4)}{\sum_{i=1}^{\frac{n-3}{2}} d(i,n-i+1) + d(a_2,a_3) + d(a_2,a_4)}\\
&\leq \frac{\left(\frac{n-3}{2}\right) \cdot 3 \cdot d(a_2,a_3) + \left(\frac{n+1}{2}\right) \cdot d(a_2,a_3) + 2 \cdot d(a_2,a_3)}{\left(\frac{n-3}{2}\right) \cdot 3\cdot d(a_2,a_3) + d(a_2,a_3) + 2\cdot d(a_2,a_3)} = \frac{\frac{4n-8}{2} +2}{\frac{3n-9}{2} + 3} = \frac{4n-4}{3n-3} = \frac{4}{3},
\end{align*}
where the second inequality follows from inequalities~\eqref{eq:fSelectsM} and the fact that $d(i,n-i+1)\geq d(1,5)$ for every $i\leq \frac{n-3}{2}$.
This concludes the proof for this case.
 
\paragraph{\bf Case 2: $a_3\notin S$}     
In this case, we either have $S= \{a_1, a_2\}$ or $S= \{a_4, a_5\}$; we assume the former w.l.o.g.. From the definition of \favCouple, this implies that $\{a_2,a_3\}$ and $\{a_3,a_4\}$ are not couples, so we must have $a_1\succ_{a_2} a_3$ and $a_5\succ_{a_4} a_3$.
It also implies that $a_1\succ_{a_3} a_5$, since $\{a_4,a_5\}$ would be selected otherwise. 
In terms of distances:
\begin{equation}\label{eq:mNotInf1}
    d(a_2, a_3) \geq d(a_1, a_2), \quad d(a_3,a_4) \geq d(a_4,a_5), \quad d(a_3,a_5) \geq d(a_1,a_3).
\end{equation}
    
Similarly as before, the social cost of the selected committee is
\begin{align*}
\SC(S, A; d) &= \sum_{i=1}^{\frac{n-3}{2}} d(i,n-i+1) + \left(\frac{n-3}{2}\right) d(a_1,a_2) + d(a_1,a_2) + d(a_1,a_3) + d(a_1,a_4)\\
&= \sum_{i=1}^{\frac{n-3}{2}} d(i,n-i+1) + \left(\frac{n+3}{2}\right) d(a_1,a_2) + d(a_2, a_3) + d(a_2,a_4).
\end{align*}

We now consider two cases depending on whether $a_3$ is in the optimal committee.

\paragraph*{\bf Case 2.1: $a_3 \in S^*$.}  
If the median agent is selected in the optimal committee, we have from \Cref{lem:2-social-cost} that
\begin{equation}
\SC(S^*, A; d) \geq \sum_{i=1}^{\frac{n-3}{2}} d(i,n-i+1) + d(a_2,a_4) + \left(\frac{n-1}{2} + 1\right) \min \{d(a_2,a_3), d(a_3,a_4)\}.\label{eq:lb-committee-m-not-alg-opt}
\end{equation}

We now claim that $\big(\frac{n-1}{2} + 1\big) \min \{d(a_2,a_3), d(a_3,a_4)\} \geq \frac{3}{2}d(a_1,a_3)$.
Indeed, if we have $\min \{d(a_2,a_3), d(a_3,a_4)\} = d(a_2, a_3)$, this holds because $\frac{n-1}{2} + 1\geq 3$ and, due to inequalities \eqref{eq:mNotInf1}, $3 d(a_2, a_3)\geq \frac{3}{2}d(a_1,a_3)$.
If $\min \{d(a_2,a_3), d(a_3,a_4)\} = d(a_3, a_4)$, this holds because $\frac{n-1}{2} + 1\geq 3$ and, due to inequalities \eqref{eq:mNotInf1}, $3 d(a_3,a_4) \geq \frac{3}{2}d(a_3,a_5) \geq \frac{3}{2}d(a_1,a_3)$.

Replacing in inequality \eqref{eq:lb-committee-m-not-alg-opt}, we obtain
\[
\SC(S^*, A; d) \geq \sum_{i=1}^{\frac{n-3}{2}} d(i,n-i+1) + d(a_2,a_4) + \frac{3}{2}\cdot d(a_1, a_2) +\frac{3}{2}\cdot d(a_2,a_3).
\]
Thus, the distortion is
\begin{align*}
\dist(f) = \frac{\SC(S, A; d)}{\SC(S^*, A; d)} & \leq \frac{\sum_{i=1}^{\frac{n-3}{2}} d(i,n-i+1) + \left(\frac{n+3}{2}\right) d(a_1,a_2) + d(a_2, a_3) + d(a_2,a_4)}{\sum_{i=1}^{\frac{n-3}{2}} d(i,n-i+1) +d(a_2,a_4) + \frac{3}{2}\cdot d(a_1, a_2)+ \frac{3}{2}\cdot d(a_2,a_3)}\\
& \leq \frac{\left(\frac{n-3}{2}\right) \cdot 4 \cdot d(a_1,a_2) + \left(\frac{n+3}{2}\right) d(a_1,a_2) + d(a_1,a_2) + 2\cdot d(a_1,a_2)}{\left(\frac{n-3}{2}\right) \cdot 4 \cdot d(a_1,a_2) + 2 \cdot d(a_1,a_2) + \frac{3}{2}\cdot d(a_1, a_2)+ \frac{3}{2}\cdot d(a_1, a_2)}\\
& \leq \frac{\left(\frac{n-3}{2}\right) \cdot 4 + \left(\frac{n+3}{2}\right) +1+ 2}{\left(\frac{n-3}{2}\right) \cdot 4  + 2  + \frac{3}{2}  + \frac{3}{2}} \\
& = \frac{4n -12 +n+3+2+4}{4n -12 + 4 + 3 + 3} = \frac{5n-3}{4n-2} \leq \frac{5}{4} \leq \frac{4}{3},
\end{align*}
where the second inequality follows by applying inequalities \eqref{eq:mNotInf1} and the fact that $d(i,n-i+1)\geq d(1,5)$ for every $i\leq \frac{n-3}{2}$. We conclude the distortion bound of $\frac{4}{3}$ for this case.
    
\paragraph{\bf Case 2.2: $a_3 \notin S^*$.}  
We begin by rewriting the social cost of $S$ more conveniently as
\begin{align*}
\SC(S, A; d) & = \sum_{i=1}^{\frac{n-3}{2}} d(i,n-i+1) + \left(\frac{n-3}{2}\right) d(a_1,a_2) + d(a_1,a_2) + d(a_1,a_3) + d(a_1,a_4)\\
& = \sum_{i=1}^{\frac{n-3}{2}} d(i,n-i+1) + \left(\frac{n+1}{2}\right)d(a_1,a_2) + d(a_1, a_3) + d(a_2,a_3) + d(a_3, a_4)\\
& \leq \sum_{i=1}^{\frac{n-3}{2}} d(i,n-i+1) + \left(\frac{n+1}{2}\right) d(a_1,a_2) + d(a_3, a_5) + d(a_2,a_3) + d(a_3, a_5),
\end{align*}
where the last inequality follows from inequalities \eqref{eq:mNotInf1}.
We distinguish two further cases to bound the social cost of the optimal committee from below, depending on whether the optimal committee selects agents from the left or from the right side of the median.

\paragraph*{\bf Case 2.2.1: $S^*\subseteq \{a_4,a_5,\ldots,n\}$}
If the optimal committee selects an agent on the right side of the median agent, its social cost satisfies
\[
\SC(S^*, A; d) \geq \sum_{i=1}^{\frac{n-3}{2}} d(i,n-i+1) + d(a_2,a_5) + d(a_3, a_5) =\sum_{i=1}^{\frac{n-3}{2}} d(i,n-i+1) + d(a_2,a_3) + 2 d(a_3,a_5).
\]

Thus, the distortion is
\begin{align*}
\dist(f) = \frac{\SC(S, A; d)}{\SC(S^*, A; d)} & \leq \frac{\sum_{i=1}^{\frac{n-3}{2}} d(i,n-i+1) + \left(\frac{n+1}{2}\right) d(a_1,a_2) + d(a_2,a_3) + 2 d(a_3, a_5)}{\sum_{i=1}^{\frac{n-3}{2}} d(i,n-i+1) + d(a_2,a_3) + 2 d(a_3,a_5)}\\
& \leq \frac{\left(\frac{n-3}{2}\right) \cdot 4 \cdot d(a_1,a_2) + \left(\frac{n+1}{2}\right) d(a_1,a_2) + d(a_1,a_2) + 2 \cdot 2\cdot  d(a_1, a_2)}{\left(\frac{n-3}{2}\right) \cdot 4 \cdot d(a_1,a_2) + d(a_1,a_2) + 2 \cdot 2\cdot d(a_1,a_2)}\\
& \leq \frac{\left(\frac{n-3}{2}\right) \cdot 4 + \left(\frac{n+1}{2}\right)  + 1 + 4}{\left(\frac{n-3}{2}\right) \cdot 4  +1 + 4}\\
& = \frac{4n - 12 + n + 1 + 2+8 }{4n-12 +2+8} = \frac{5n - 1}{4n -2} \leq \frac{4}{3},
\end{align*}
where we used inequalities \eqref{eq:mNotInf1} for the second inequality.

\paragraph*{\bf Case 2.2.2: $S^*\subseteq \{1,\ldots, a_1,a_2\}$.}  
If $S^*=S$, the distortion is trivially $1$ and we conclude. Otherwise, the social cost of $S^*$ satisfies
\[
\SC(S^*, A; d) \geq \sum_{i=1}^{\frac{n-3}{2}} d(i,n-i+1) + d(a_1, a_2) + d(a_1,a_3) + d(a_1, a_4).
\]

Thus, the distortion is
\begin{align*}
\dist(f) & = \frac{\SC(S, A; d)}{\SC(S^*, A; d)}  \\
& \leq \frac{\sum_{i=1}^{\frac{n-3}{2}} d(i,n-i+1) + \left(\frac{n-3}{2}\right) d(a_1,a_2)+ d(a_1,a_2) + d(a_1,a_3) + d(a_1,a_4)}{\sum_{i=1}^{\frac{n-3}{2}} d(i,n-i+1) + d(a_1, a_2) + d(a_1,a_3) + d(a_1, a_4)}\\
& \leq \frac{\left(\frac{n-3}{2}\right) \cdot 4 d(a_1,a_2) + \left(\frac{n-3}{2}\right) d(a_1,a_2)+ d(a_1,a_2) + 2 d(a_1,a_2) + 3 d(a_1,a_2)}{\left(\frac{n-3}{2}\right) \cdot 4 d(a_1,a_2) + d(a_1, a_2) + 2 d(a_1,a_2) + 3 d(a_1, a_2)}\\
&\leq \frac{4n-12 + n-3 + 2+4+6}{4n-12+2+4+6} = \frac{5n - 3}{4n} < \frac{4}{3},
\end{align*}
where we used inequalities \eqref{eq:mNotInf1} for the second inequality.
This concludes the proof of the distortion of \favCouple.

For the lower bound, we fix $n=5$ and an arbitrary $(n,2)$-voting rule $f$, consider the profile $\succ\in \calL^5(5)$ defined as
    \begin{align*}
        1 & \succ_1 2 \succ_1 3 \succ_1 4 \succ_1 5,\\
        2 & \succ_2 1 \succ_2 3 \succ_2 4 \succ_2 5,\\
        3 & \succ_3 2 \succ_3 1 \succ_3 4 \succ_3 5,\\
        4 & \succ_4 5 \succ_4 3 \succ_4 2 \succ_4 1,\\
        5 & \succ_5 4 \succ_5 3 \succ_5 2 \succ_5 1,
    \end{align*}
    and consider the election $\instance=(A,2,\succ)$ with $A=[5]$. We distinguish two cases depending on the set of agents $S=f(\succ)$ selected by the rule.
    
    Suppose first that $S= \{1,2\}$.
    We take the distance metric $d_1$ on $A$ given by positions $x_1=0$, $x_2=1$, $x_3=2$, and $x_4=x_5=4$.
    Note that $d_1\ \rhd \succ$; see \Cref{fig:lower-bound-metrics} for an illustration.
    Since $\SC(\{1,2\},A;d_1)=12$, and $\SC(\{4,5\},A;d_1)=9$, we obtain
    \[
        \dist(f(\succ), \instance) \geq \frac{\SC(S,A;d_1)}{\min_{S'\in {A\choose 2}}\SC(S',A;d_1)} \geq \frac{12}{9} = \frac{4}{3}.
    \]
    
    If $S\in \{\{2,3\},\{3,4\},\{4,5\}\}$, we consider the distance metric $d_2$ on $A$ given by positions $x_1=x_2=0$, $x_3=1$, $x_4=2$, and $x_5=3$.
    Note that $d_2\ \rhd \succ$; see \Cref{fig:lower-bound-metrics} for an illustration.
    Since $\SC(\{2,3\},A;d_2)=\SC(\{3,4\},A;d_2)=8$ and $\SC(\{4,5\},A;d_2)=10$, whereas $\SC(\{1,2\},A;d_2)=6$, we obtain
    \[
        \dist(f(\succ), \instance) \geq \frac{\SC(S,A;d_2)}{\min_{S'\in {A\choose 2}}\SC(S',A;d_2)} \geq \frac{8}{6} = \frac{4}{3}.
    \]
    
    Since $\dist(f(\succ),\instance) \geq \frac{4}{3}$ in all these cases and sets of non-consecutive agents can only induce a larger social cost, we conclude that $\dist(f)\geq \frac{4}{3}$.  
\end{proof}

\begin{figure}[t]
\centering

% Metric d_1
\begin{tikzpicture}[scale=1, every node/.style={font=\footnotesize}]
    \node[anchor=east] at (-0.5, 0) {\textbf{Metric $d_1$}};
    \filldraw[black] (0,0) circle (2.5pt) node[anchor=north] {$1$};
    \filldraw[black] (2,0) circle (2.5pt) node[anchor=north] {$2$};
    \filldraw[black] (4,0) circle (2.5pt) node[anchor=north] {$3$};
    \filldraw[black] (8,0) circle (2.5pt) node[anchor=north] {$4,5$};
   
    % Distances
    \draw[-] (0,0) -- (2,0) node[midway, anchor=south] {$1$};
    \draw[-] (2,0) -- (4,0) node[midway, anchor=south] {$1$};
    \draw[-] (4,0) -- (8,0) node[midway, anchor=south] {$2$};

% Metric d_2
\begin{scope}[yshift=-1cm]
    \node[anchor=east] at (-0.5, 0) {\textbf{Metric $d_2$}};
    \filldraw[black] (0,0) circle (2.5pt) node[anchor=north] {$1,2$};
    %\filldraw[black] (1,0) circle (2.5pt) node[anchor=north] {$2$};
    \filldraw[black] (2,0) circle (2.5pt) node[anchor=north] {$3$};
    \filldraw[black] (4,0) circle (2.5pt) node[anchor=north] {$4$};
    \filldraw[black] (6,0) circle (2.5pt) node[anchor=north] {$5$};
    % Distances
    \draw[-] (0,0) -- (2,0) node[midway, anchor=south] {$1$};
    \draw[-] (2,0) -- (4,0) node[midway, anchor=south] {$1$};
    \draw[-] (4,0) -- (6,0) node[midway, anchor=south] {$1$};
\end{scope}
\end{tikzpicture}

\caption{Metrics considered in the proof of \Cref{thm:utilitarian-2cost-k2-odd}.}
\label{fig:lower-bound-metrics}
\end{figure}

\paragraph*{\bf Even number of agents}
When $n$ is even, we show that the voting rule that selects the two median agents attains the best-possible distortion of $2$.

\begin{restatable}{proposition}{propUtilitarianTwoCostEven}\label{prop:utilitarian-2cost-k2-even}
For an even number of agents $n$, the voting rule that selects the two median agents achieves a distortion of $2$ for utilitarian $2$-cost.
Moreover, there exists $n\in \NN$ such that, for every $(n,2)$-voting rule $f$, we have $\dist(f)\geq 2$ for utilitarian $2$-cost.
\end{restatable}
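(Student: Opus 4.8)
The plan is to treat the upper and lower bounds separately; throughout, $n$ is even, $S=\{m_1,m_2\}$ with $m_1=n/2$ and $m_2=n/2+1$ denotes the committee returned by the rule, and agents are ordered $1,\dots,n$ from left to right as usual.

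For the upper bound I would combine \Cref{lem:2-social-cost} with a direct computation of $\SC(S,A;d)$. Since the agents are ordered on the line, for an agent $a\le m_1$ the farther selected agent is $m_2$, so $\SC(S,a;d)=x_{m_2}-x_a$; symmetrically $\SC(S,a;d)=x_a-x_{m_1}$ for $a\ge m_2$. Pairing agent $i$ with agent $n-i+1$ as in \Cref{lem:2-social-cost} (note that $i\le m_1$ and $n-i+1\ge m_2$ when $i\le n/2$), the joint cost of $S$ for the pair telescopes:
\[
    \SC(S,i;d)+\SC(S,n-i+1;d)=(x_{m_2}-x_i)+(x_{n-i+1}-x_{m_1})=d(i,n-i+1)+d(m_1,m_2),
\]
so that $\SC(S,A;d)=\sum_{i=1}^{n/2}d(i,n-i+1)+\frac{n}{2}\,d(m_1,m_2)$. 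The key inequality is $d(i,n-i+1)\ge d(m_1,m_2)$ for every $i\le n/2$, since the interval between $x_i$ and $x_{n-i+1}$ contains the one between $x_{m_1}$ and $x_{m_2}$; summing it over $i$ gives $\sum_{i=1}^{n/2}d(i,n-i+1)\ge \frac{n}{2}\,d(m_1,m_2)$, hence $\SC(S,A;d)\le 2\sum_{i=1}^{n/2}d(i,n-i+1)$. On the other hand, \Cref{lem:2-social-cost} bounds the optimal cost $\SC(S^*,A;d)$ from below by $\sum_{i=1}^{n/2}d(i,n-i+1)$, and dividing gives distortion at most $2$ (the case where all agents coincide is trivial).

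For the lower bound I would exhibit a single preference profile on $n=4$ agents consistent with three metrics, each punishing a different family of committees. Take the symmetric profile in which $\{1,2\}$ and $\{3,4\}$ behave like near-couples: agent $1$ reports $2\succ 3\succ 4$, agent $2$ reports $1\succ 3\succ 4$, agent $3$ reports $4\succ 2\succ 1$, and agent $4$ reports $3\succ 2\succ 1$ (each preceded by itself). By \Cref{lem:agents-order} the induced order is $1,2,3,4$. One checks directly (using the agent-dependent tie-breaking allowed in the preliminaries) that the metrics given by the positions $(0,0,1,1)$, $(0,1,2,2)$ and $(0,0,1,2)$ are all consistent with this profile. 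A short computation of the $2$-cost social costs then shows: under $(0,0,1,1)$ the committees $\{1,2\}$ and $\{3,4\}$ cost $2$ while every other pair (consecutive or not) costs $4$; under $(0,1,2,2)$ the committee $\{1,2\}$ costs $6$ while the optimal $\{3,4\}$ costs $3$; and under $(0,0,1,2)$, symmetrically, $\{3,4\}$ costs $6$ while the optimal $\{1,2\}$ costs $3$. Hence, for any $(4,2)$-voting rule $f$: if it outputs $\{1,2\}$ pick the second metric, if it outputs $\{3,4\}$ pick the third, and otherwise pick the first; in every case $\dist(f(\succ),([4],2,\succ))\ge 2$, so $\dist(f)\ge 2$.

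The main obstacle is designing the lower-bound profile: it must be simultaneously consistent with a metric that forces two tight clusters (making the central committee $\{2,3\}$ a factor $2$ off the optimum) and with metrics that spread the agents so that a one-sided committee $\{1,2\}$ or $\{3,4\}$ is also a factor $2$ off. The positions $(0,1,2,2)$ and their mirror are calibrated precisely so that the one-sided committee is exactly twice the optimum while the profile's structural requirement that the middle gap be the widest is still respected; verifying consistency and the social-cost values is routine but needs care.
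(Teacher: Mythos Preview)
Your proof is correct and follows essentially the same approach as the paper. The upper bound is identical (compute $\SC(S,A;d)$ exactly, bound the optimum via \Cref{lem:2-social-cost}, and compare using $d(i,n-i+1)\ge d(m_1,m_2)$), and the lower bound uses the same $n=4$ profile with the same three-case split; your only variation is using the metric $(0,0,1,1)$ to handle all committees other than $\{1,2\}$ and $\{3,4\}$ at once, whereas the paper uses $(0,0,2,2)$ for $\{2,3\}$ and dismisses non-consecutive committees by a monotonicity remark.
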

\begin{proof}
We consider an arbitrary election $\instance=(A,k,\succ)$ with even $n\geq 4$ and $A=[n]$, and a consistent metric $d\ \rhd \succ$. Note that the assumption $n\geq 4$ is w.l.o.g.\ since, for $n=2$, a distortion of $1$ is trivially achieved.
We let $m_1 = \frac{n}{2}$ and $m_2 = \frac{n}{2} + 1$ denote the left and right median, respectively, $S=\{m_1,m_2\}$ denote the committee selected by the rule, and $S^*$ denote the optimal committee for the metric $d$.
The social cost of $S$ is
\[
\SC(S, A; d) = \sum_{i=1}^{\frac{n}{2}} d(i, n-i+1) + \frac{n}{2} d(m_1, m_2),
\]
whereas \Cref{lem:2-social-cost} implies a lower bound on the social cost of the optimal committee of
\[
\SC(S^*, A; d) \geq \sum_{i=1}^{\frac{n}{2}} d(i, n-i+1).
\]
Thus, the distortion of the voting rule is
\[
\dist(f) = \frac{\SC(S, A; d)}{\SC(S^*, A; d)} \leq \frac{\sum_{i=1}^{\frac{n}{2}} d(i, n-i+1) + \frac{n}{2} d(m_1, m_2)}{\sum_{i=1}^{\frac{n}{2}} d(i, n-i+1)} \leq \frac{2\sum_{i=1}^{\frac{n}{2}} d(i, n-i+1)}{\sum_{i=1}^{\frac{n}{2}} d(i, n-i+1)} = 2,
\]
where the second inequality follows from the fact that $d(m_1, m_2) \leq d(i, n-i+1)$ for any $i \leq \frac{n}{2}$.
Thus, the voting rule achieves a distortion of at most $2$.

For the lower bound, we fix $n=4$ and an arbitrary $(n,2)$-voting rule $f$, consider the profile $\succ\in \calL^4(4)$ defined as
    \begin{align*}
        1 & \succ_1 2 \succ_1 3 \succ_1 4,\\
        2 & \succ_2 1 \succ_2 3 \succ_2 4,\\
        3 & \succ_3 4 \succ_3 2 \succ_3 1,\\
        4 & \succ_4 3 \succ_4 2 \succ_4 1,
    \end{align*}
    and consider the election $\instance=(A,2,\succ)$ with $A=[4]$. We distinguish three cases depending on the set of agents $S=f(\succ)$ selected by the rule.
    
    Suppose first that $S= \{3,4\}$.
    We take the distance metric $d_1$ on $A$ given by positions $x_1=x_2=0$, $x_3=1$, and $x_4=2$.
    Note that $d_1\ \rhd \succ$; see \Cref{fig:2-winner-instances} for an illustration.
    Since $\SC(\{3,4\},A;d_1)=6$, and $\SC(\{1,2\},A;d_1)=3$, we obtain
    \[
        \dist(f(\succ), \instance) \geq \frac{\SC(S,A;d_1)}{\min_{S'\in {A\choose 2}}\SC(S',A;d_1)} \geq \frac{6}{3} = 2.
    \]
    
    If $S=\{1,2\}$, we consider the distance metric $d_2$ on $A$ given by positions $x_1=0$, $x_2=1$, and $x_3=x_4=2$.
    Note that $d_2\ \rhd \succ$; see \Cref{fig:2-winner-instances} for an illustration.
    Since $\SC(\{1,2\},A;d_2)=6$ and $\SC(\{3,4\},A;d_2)=3$, we obtain
    \[
        \dist(f(\succ), \instance) \geq \frac{\SC(S,A;d_2)}{\min_{S'\in {A\choose 2}}\SC(S',A;d_2)} \geq \frac{6}{3} = 2.
    \]

    Finally, if $S=\{2,3\}$, we consider the distance metric $d_3$ on $A$ given by positions $x_1=x_2=0$ and $x_3=x_4=2$.
    Note that $d_3\ \rhd \succ$; see \Cref{fig:2-winner-instances} for an illustration.
    Since $\SC(\{2,3\},A;d_3)=8$ and $\SC(\{1,2\},A;d_3)=\SC(\{3,4\},A;d_3)=4$, we obtain
    \[
        \dist(f(\succ), \instance) \geq \frac{\SC(S,A;d_3)}{\min_{S'\in {A\choose 2}}\SC(S',A;d_3)} \geq \frac{8}{4} = 2.
    \]
    
    Since $\dist(f(\succ),\instance) \geq 2$ in all these cases and sets of non-consecutive agents can only induce a larger social cost, we conclude that $\dist(f)\geq 2$. 
\end{proof} 

\begin{figure}[t]
\centering

% Metric d_1
\begin{tikzpicture}[scale=1, every node/.style={font=\footnotesize}]
    \node[anchor=east] at (-0.5, 0) {\textbf{Metric $d_1$}};
    \filldraw[black] (0,0) circle (2.5pt) node[anchor=north] {$1 , 2$};
    \filldraw[black] (3,0) circle (2.5pt) node[anchor=north] {$3$};
    \filldraw[black] (6,0) circle (2.5pt) node[anchor=north] {$4$};
    % Distances
    \draw[-] (0,0) -- (3,0) node[midway, anchor=south] {$1$};
    \draw[-] (3,0) -- (6,0) node[midway, anchor=south] {$1$};

% Metric d_2
\begin{scope}[yshift=-1cm]
    \node[anchor=east] at (-0.5, 0) {\textbf{Metric $d_2$}};
    \filldraw[black] (0,0) circle (2.5pt) node[anchor=north] {$1$};
    \filldraw[black] (3,0) circle (2.5pt) node[anchor=north] {$2$};
    \filldraw[black] (6,0) circle (2.5pt) node[anchor=north] {$3, 4$};

    % Distances
    \draw[-] (0,0) -- (3,0) node[midway, anchor=south] {$1$};
    \draw[-] (3,0) -- (6,0) node[midway, anchor=south] {$1$};
\end{scope}

% Metric d_3
\begin{scope}[yshift=-2cm]
    \node[anchor=east] at (-0.5, 0) {\textbf{Metric $d_3$}};
    \filldraw[black] (0,0) circle (2.5pt) node[anchor=north] {$1, 2$};
    \filldraw[black] (6,0) circle (2.5pt) node[anchor=north] {$3,4$};
    % Distances
    \draw[-] (0,0) -- (6,0) node[midway, anchor=south] {$2$};
\end{scope}
\end{tikzpicture}

\caption{Metrics considered in the proof of \Cref{prop:utilitarian-2cost-k2-even} and \Cref{prop:egalitarian-k1}.}
\label{fig:2-winner-instances}
\end{figure}

\section{Egalitarian Social Cost}\label{sec:egalitarian}
In this section, we study the worst-case distortion achievable by voting rules in the context of peer selection with egalitarian social cost.
Recall that, in this case, given a set of agents $A$, a committee size $k$, and a distance metric $d$, the social cost of a committee $S'\in {A\choose k}$ corresponds to the maximum cost of this committee for some agent $a\in A$:
\[
    \SC(S',A;d) = \max\{\SC(S',a;d) \mid a\in A\}.
\]
We will start the section with the simple case $k=1$, where $S'=\{s\}$ for some $s\in A$ and thus $\SC(S',a;d)$ is simply $d(a,s)$ for every $a\in A$.
In \Cref{subsec:egalitarian-additive,subsec:egalitarian-qcost} we explore the case of general committee size under additive and $q$-cost candidate-aggregation functions, respectively.

\subsection{Warm-up: Selecting a Single Candidate}\label{subsec:egalitarian-k1}

In the context of single-candidate elections, any voting rule achieves a distortion of $2$, since for any election and any selected candidate, one of the extreme agents will incur a cost of at least half of its distance to the other extreme agent, and no rule can induce a social cost larger than the distance between the extreme agents.
Despite its simplicity, it is not hard to see that this is the best possible distortion by considering four agents and two metrics as in \Cref{fig:2-winner-instances}. 
If a rule selects an agent in $\{1,2\}$, the cost incurred by agent $4$ under metric $d_1$ is $2$, while selecting agent $3$ would imply a cost of at most $1$ for every agent.
The situation is analogous under metric $d_2$ if a rule selects agent $3$ or $4$.
The following proposition formally states these facts.
\begin{restatable}{proposition}{propEgalitariankOne}\label{prop:egalitarian-k1}
    For every $n\in \NN$, any $(n,1)$-voting rule has distortion $2$ for egalitarian social cost. 
    There exists $n\in \NN$ such that, for every $(n,1)$-voting rule $f$, $\dist(f) \geq 2$ for egalitarian social cost.
\end{restatable}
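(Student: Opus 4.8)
The statement has two halves: an upper bound of $2$ that holds for *every* $(n,1)$-voting rule, and a matching lower bound showing no rule can do better. The plan for the upper bound is to fix an arbitrary election $\instance=([n],1,\succ)$, a consistent metric $d\rhd\succ$ induced by positions $x$, and the singleton $S=\{s\}$ returned by the rule. Relabel so the agents are ordered left to right (using \Cref{lem:agents-order}), so agents $1$ and $n$ are the two extremes. The key two observations are: (i) the optimal social cost is at least $\tfrac12 d(1,n)$, because for *any* candidate $s'$, one of $d(1,s')$, $d(n,s')$ is at least $\tfrac12 d(1,n)$ by the triangle inequality, hence $\SC(\{s'\},A;d)=\max_a d(a,s')\ge \tfrac12 d(1,n)$; and (ii) the social cost of $S=\{s\}$ is at most $d(1,n)$, since every agent lies in the interval $[x_1,x_n]$ on the line, so $d(a,s)\le d(1,n)$ for all $a$, giving $\SC(S,A;d)=\max_a d(a,s)\le d(1,n)$. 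Dividing, $\dist(S,\instance;d)\le 2$; taking the sup over consistent $d$ and then over $\succ$ yields $\dist(f)\le 2$ for any rule. The only subtlety is making sure observation (ii) uses that we are on the line and agents are ordered, which is exactly what \Cref{lem:agents-order} provides.

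For the lower bound I would reuse the instance already set up in \Cref{fig:2-winner-instances}: take $n=4$ with the preference profile $1\succ_1 2\succ_1 3\succ_1 4$, $2\succ_2 1\succ_2 3\succ_2 4$, $3\succ_3 4\succ_3 2\succ_3 1$, $4\succ_4 3\succ_4 2\succ_4 1$. Fix an arbitrary $(4,1)$-voting rule $f$ and let $s=f(\succ)$. Case split on $s$. If $s\in\{1,2\}$, use metric $d_1$ with positions $x_1=x_2=0$, $x_3=1$, $x_4=2$ (one checks $d_1\rhd\succ$): then $\SC(\{s\},A;d_1)\ge d_1(4,s)=2$, whereas selecting agent $3$ gives social cost $\max\{1,1,0,1\}=1$, so the distortion is at least $2$. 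If $s\in\{3,4\}$, use the symmetric metric $d_2$ with $x_1=0$, $x_2=1$, $x_3=x_4=2$: then $\SC(\{s\},A;d_2)\ge d_2(1,s)=2$ while selecting agent $2$ gives social cost $1$, again distortion at least $2$. Since these two cases exhaust all possibilities for $s\in\{1,2,3,4\}$, we get $\dist(f)\ge 2$.

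There is essentially no hard step here; the whole proposition is a warm-up. The one thing to be careful about is the consistency check $d_i\rhd\succ$: one must verify that placing $1,2$ at the same location (and $3,4$ at the same location in the symmetric case) is compatible with the stated rankings, which it is because the definition of consistency in this paper permits agent-dependent tie-breaking — e.g. at location $0$ agent $1$ may rank $1\succ_1 2$ and agent $2$ may rank $2\succ_2 1$. I would state this explicitly to preempt any worry. If one preferred to avoid ties altogether, the same bound follows by perturbing positions by an arbitrarily small $\varepsilon$, as noted in the paper's footnote on tie-breaking; I would mention this as a remark but carry out the clean version with ties.
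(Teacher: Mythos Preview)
Your proposal is correct and follows essentially the same approach as the paper: the upper bound uses that any singleton has cost at most $d(1,n)$ while the optimum has cost at least $\tfrac12 d(1,n)$, and the lower bound uses the same $n=4$ profile and the same two metrics $d_1,d_2$ from \Cref{fig:2-winner-instances} with the same case split on $s\in\{1,2\}$ versus $s\in\{3,4\}$. Your explicit remark on tie-breaking consistency is a nice addition but otherwise the arguments coincide.
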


\begin{proof}
    Fix $n\in \NN$ and an $(n,1)$-voting rule $f$ arbitrarily. 
    Let $\succ\ \in \calL^n(n)$ be any preference profile on $A=[n]$ and let $s$ be the agent that $f$ outputs for this profile, i.e., $S = f(\succ)$ and $S =\{s\}$.
    We denote the agents by $\{1,\ldots,n\}$ from left to right, and we let $d\ \rhd \succ$ be any consistent distance metric.
    It is clear that, on the one hand, we have
    \begin{equation}
        \SC(\{s\},A;d) = \max\{d(a,s)\mid a\in A\} \leq \max\{d(a,b)\mid a,b\in A\} = d(1,n).\label{ineq:egalitarian-k1-ub-sc-alg}
    \end{equation}
    On the other hand, for every agent $b\in A$ we have that $d(1,b)+d(b,n)=d(1,n)$ and, therefore, $\max\{d(1,b),d(b,n)\} \geq \frac{d(1,n)}{2}$.
    This implies
    \begin{equation}
        \min_{S'\in {A\choose 1}} \SC(S',A;d) = \min_{b\in A} \max\{d(a,b)\mid a\in A\} = \min_{b\in A}\max\{d(1,b),d(b,n)\} \geq \frac{d(1,n)}{2}.\label{ineq:egalitarian-k1-lb-sc-opt}
    \end{equation}
    Combining \cref{ineq:egalitarian-k1-ub-sc-alg,ineq:egalitarian-k1-lb-sc-opt}, we directly obtain that $\dist(f)\leq 2$.

    For the second claim, we denote $S = f(\succ)$, and we fix $n=4$ and an arbitrary $(n,1)$-voting rule $f$, consider the profile $\succ\in \calL^4(4)$ defined as
    \begin{align*}
        1 & \succ_1 2 \succ_1 3 \succ_1 4,\\
        2 & \succ_2 1 \succ_2 3 \succ_2 4,\\
        3 & \succ_3 4 \succ_3 2 \succ_3 1,\\
        4 & \succ_4 3 \succ_4 2 \succ_4 1,
    \end{align*}
    and consider the election $\instance=(A,1,\succ)$ with $A=[4]$. We distinguish two cases depending on the agent selected by $f$.
    
    Suppose first that $S\in \{1,2\}$.
    We take the distance metric $d_1$ on $A$ given by positions $x_1=x_2=0$, $x_3=1$, and $x_4=2$.
    It is not hard to check that $d_1\ \rhd \succ$; see \Cref{fig:2-winner-instances} for an illustration.
    Since $\SC(\{1\},A;d_1)=2$, $\SC(\{2\},A;d_1)=2$, and $\SC(\{3\},A;d_1)=1$, we obtain
    \[
        \dist(f(\succ), \instance) \geq \frac{\SC(S,A;d_1)}{\min_{a\in A}\SC(\{a\},A;d_1)} \geq \frac{\SC(\{2\},A;d_1)}{\SC(\{3\},A;d_1)} = 2.
    \]

    Similarly, if $S\in \{3,4\}$, we consider the distance metric $d_2$ on $A$ given by positions $x_1=0$, $x_2=1$, $x_3=x_4=2$.
    It is not hard to check that $d_2\ \rhd \succ$; see \Cref{fig:2-winner-instances} for an illustration.
    Since $\SC(\{3\},A;d_2)=2$, $\SC(\{4\},A;d_2)=2$, and $\SC(\{2\},A;d_2)=1$, we obtain
    \[
        \dist(f(\succ), \instance) \geq \frac{\SC(S,A;d_2)}{\min_{a\in A}\SC(\{a\},A;d_2)} \geq \frac{\SC(\{3\},A;d_2)}{\SC(\{2\},A;d_2)} = 2.
    \]

    Since $\dist(f(\succ),\instance) \geq 2$ both when $S\in \{1,2\}$ and when $S\in \{3,4\}$, we conclude that $\dist(f)\geq 2$.
\end{proof}

\subsection{Egalitarian Additive Social Cost}\label{subsec:egalitarian-additive}

In this section, we study voting rules in the context of egalitarian additive social cost, defined as the maximum over agents of the sum of the distances from the agent to all selected candidates.
That is, for a set of agents $A$, a committee size $k$, and a distance metric $d$, the social cost of a committee $S'\in {A\choose k}$ is 
\[
    \SC(S',A;d) = \max\bigg\{\sum_{s\in S'} d(a,s) \;\Big\vert\; a\in A\bigg\}.
\]

We begin with a simple  observation: When $k=2$ candidates are to be selected, a simple rule selecting both extreme candidates achieves the best-possible distortion of $1$.
Intuitively, this voting rule makes sense because, for any selected committee, (1) the cost of the committee is maximized for one of the extreme agents, and (2) the sum of the costs of the committee for both extreme agents is fixed (and equal to two times the distance between them).
Thus, selecting both extreme agents ensures they incur the same cost and minimizes the maximum cost between them.
This rule and its distortion will be covered as a special case of the rule and result we introduce in what follows.

For larger $k$, the above intuition about the cost of any committee being maximized for the extreme agents remains true.
We state this property, which will be exploited in the development and analysis of a voting rule guaranteeing a constant distortion, in the following lemma.

\begin{restatable}{lemma}{lemEgAddExtremes}\label{lem:eg-add-extreme-agents}
For every set of agents $A=[n]$, committee size $k$, committee $S'\in {A\choose k}$, and distance metric $d$, it holds that
\[
    \SC(S',A;d) = \max\{\SC(S',1;d),\SC(S',n;d)\}.
\]
\end{restatable}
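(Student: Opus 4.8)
The plan is to exploit a basic geometric fact about the line: for a fixed committee $S'$, the additive cost $\SC(S',a;d)=\sum_{s\in S'}|x_a-x_s|$ is a convex function of the position $x_a$, so over the interval spanned by all agents it is maximized at an endpoint, i.e.\ at agent $1$ or agent $n$.

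First I would dispatch the easy direction: since $1,n\in A$ and $\SC(S',A;d)$ is by definition $\max\{\SC(S',a;d)\mid a\in A\}$, we immediately get $\SC(S',A;d)\ge \max\{\SC(S',1;d),\SC(S',n;d)\}$. For the reverse inequality, fix an arbitrary $a\in A$ and let $x\ \rhd \succ$ be the position vector defining $d$; by \Cref{lem:agents-order} we may assume the agents are ordered so that $x_1\le x_a\le x_n$ for every $a\in A$ (if $x_1=x_n$ all agents coincide and the statement is trivial, so assume $x_1<x_n$). Define $\phi\colon\RR\to\RR_+$ by $\phi(t)=\sum_{s\in S'}|t-x_s|$, so that $\SC(S',a;d)=\phi(x_a)$ for every $a\in A$; as a sum of the convex functions $t\mapsto|t-x_s|$, the map $\phi$ is convex.

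Writing $x_a=\lambda x_1+(1-\lambda)x_n$ with $\lambda=\frac{x_n-x_a}{x_n-x_1}\in[0,1]$, convexity yields
\[
    \SC(S',a;d)=\phi(x_a)\le \lambda\,\phi(x_1)+(1-\lambda)\,\phi(x_n)\le \max\{\phi(x_1),\phi(x_n)\}=\max\{\SC(S',1;d),\SC(S',n;d)\}.
\]
Taking the maximum over $a\in A$ gives $\SC(S',A;d)\le \max\{\SC(S',1;d),\SC(S',n;d)\}$, and combining the two inequalities proves the lemma. (An alternative, equally short route avoiding the word ``convexity'' is to note that, in the region where the partition of $S'$ into members left and right of $x_a$ is fixed, $\SC(S',a;d)$ is affine in $x_a$ with slope $|\{s\in S': x_s\le x_a\}|-|\{s\in S': x_s> x_a\}|$, so the piecewise-linear function $\phi$ has a monotone-nondecreasing slope and hence attains its maximum on $[x_1,x_n]$ at an endpoint.)

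There is no genuine obstacle here: the only points needing a word of care are the reduction to the ordered setting, which is exactly \Cref{lem:agents-order}, and the degenerate case in which all agents share a location, where both sides are zero.
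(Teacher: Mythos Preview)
Your proof is correct. The paper takes a slightly different, more hands-on route: it argues by contradiction, splitting into two cases according to whether at least half of the committee members lie weakly to the left of $a$ or strictly to its right, and in each case expands the sum $\sum_{s\in S'}d(1,s)$ (resp.\ $\sum_{s\in S'}d(s,n)$) to show directly that the appropriate endpoint has cost at least $\SC(S',a;d)$. This is precisely the ``alternative route'' you sketch in parentheses---the case split is exactly the sign of the piecewise-linear slope. Your primary argument via convexity of $t\mapsto\sum_{s\in S'}|t-x_s|$ is cleaner and shorter, packaging the same idea without the explicit case analysis; the paper's version has the minor advantage of being fully self-contained without invoking convexity as a black box. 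Either way the content is the same.
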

\begin{proof}
Let $A=[n]$, $k$, $S'$, and $d$ be as in the statement, and recall that we refer to the agents sorted from left to right by $\{1,\ldots,n\}$.
We suppose towards a contradiction that there exists $a\in A$ such that $\SC(S',a;d) > \max\{\SC(S',1;d),\SC(S',n;d)\}$; i.e.,
\begin{equation}
    \sum_{s\in S'} d(a,s) > \max\bigg\{ \sum_{s\in S'} d(1,s), \sum_{s\in S'} d(s,n)\bigg\}.\label{ineq:eg-add-extreme-better}
\end{equation}
We now distinguish two cases. 
If $a$ has at least as many agents in $S'$ weakly to its left as strictly to its right; i.e., $|\{s\in S'\mid s\leq a\}| \geq |\{s\in S'\mid s>a\}|$, then
\begin{align*}
    \sum_{s\in S'} d(s,n) & = \sum_{s\in S': s\leq a} (d(a,s)+d(a,n)) + \sum_{s\in S': s>a} (d(a,s)-(d(a,s)-d(s,n))) \\
    & \geq \sum_{s\in S': s\leq a} (d(a,s)+d(a,n)) + \sum_{s\in S': s>a} (d(a,s)-d(a,n))\\
    & = \sum_{s\in S'} d(a,s) + (|\{s\in S': s\leq a\}| - |\{s\in S': s>a\}|) d(a,n) \\
    & \geq \sum_{s\in S'} d(a,s),
\end{align*}
a contradiction to \cref{ineq:eg-add-extreme-better}.
Analogously, if $|\{s\in S'\mid s\leq a\}| < |\{s\in S'\mid s>a\}|$, then
\begin{align*}
    \sum_{s\in S'} d(1,s) & = \sum_{s\in S': s> a} (d(1,a)+d(a,s)) + \sum_{s\in S': s\leq a} (d(a,s)-(d(a,s)-d(1,s))) \\
    & \geq \sum_{s\in S': s> a} (d(1,a)+d(a,s)) + \sum_{s\in S': s\leq a} (d(a,s)-d(1,a))\\
    & = \sum_{s\in S'} d(a,s) + (|\{s\in S': s>a\}|-|\{s\in S': s\leq a\}|) d(1,a) \\
    & \geq \sum_{s\in S'} d(a,s),
\end{align*}
a contradiction to \cref{ineq:eg-add-extreme-better}.
\end{proof}

Since for any set of agents $A$, committee size $k$, committee $S'\in {A\choose k}$, and distance metric $d$ we have that
\begin{equation}
    \SC(S',1;d)+\SC(S',n;d) = \sum_{a\in S'} (d(1,a)+d(a,n)) = kd(1,n),\label{eq:eg-add-sum-sc-extremes}
\end{equation}
where the last expression does not depend on $S'$, the previous lemma implies that the optimal committee will be the set that balances the cost for the extreme agents as much as possible.
With this in mind, it is natural to consider a generalization of the 
rule that select both extreme agents to larger committees. 
The rule, which we denote \kExtremes, simply returns the $\big\lfloor \frac{k}{2} \big\rfloor$ agents closest to one extreme and the $\big\lceil \frac{k}{2} \big\rceil$ agents closest to the other extreme.
\begin{Vrule}[\kExtremes]
    For a preference profile $\succ$, compute the order of agents from left to right $1,\ldots,n$ and return $S=\big\{1,\ldots,\big\lfloor \frac{k}{2} \big\rfloor\big\} \cup \big\{n-\big\lceil \frac{k}{2} \big\rceil+1,\ldots,n\big\}$.
\end{Vrule}
The following theorem states the distortion of this voting rule. 
It captures the previously claimed distortion of $1$ for $k=2$, and it approaches $\frac{3}{2}$ as $k$ grows.
This is best possible up to $O\big(\frac{1}{k}\big)$ terms, which vanish as $k$ grows.
The upper and lower bounds stated in this theorem are depicted in \Cref{fig:plot-egalitarian-add}.
\begin{figure}
    \centering
\begin{tikzpicture}[xscale=0.07,yscale=0.07]
\draw[thick,-latex] node[left] {$1$} (-1,0) -- (105,0) node[right] {$k$};
\draw[thick,-latex] node[below] {$0$} (0,-1) -- (0,55) node[above] {distortion};
\foreach \n in {10,20,30,40,50,60,70,80,90,100}{
    \draw (\n,1) -- (\n,-1) node[below] {$\n$};
}
\draw (1,10) -- (-1,10) node[left] {$1.1$};
\draw (1,20) -- (-1,20) node[left] {$1.2$};
\draw (1,30) -- (-1,30) node[left] {$1.3$};
\draw (1,40) -- (-1,40) node[left] {$1.4$};
\draw (1,50) -- (-1,50) node[left] {$1.5$};

% Perm
\draw[semithick,blue] plot[mark=., mark color=none,mark options={fill=black,draw opacity=0},mark size=3pt] coordinates { (2, 0.0) (3, 33.3333) (4, 33.3333) (5, 45.0) (6, 40.0) (7, 47.619) (8, 42.8571) (9, 48.6111) (10, 44.4444) (11, 49.0909) (12, 45.4545) (13, 49.359) (14, 46.1538) (15, 49.5238) (16, 46.6667) (17, 49.6324) (18, 47.0588) (19, 49.7076) (20, 47.3684) (21, 49.7619) (22, 47.619) (23, 49.8024) (24, 47.8261) (25, 49.8333) (26, 48.0) (27, 49.8575) (28, 48.1481) (29, 49.8768) (30, 48.2759) (31, 49.8925) (32, 48.3871) (33, 49.9053) (34, 48.4848) (35, 49.916) (36, 48.5714) (37, 49.9249) (38, 48.6486) (39, 49.9325) (40, 48.7179) (41, 49.939) (42, 48.7805) (43, 49.9446) (44, 48.8372) (45, 49.9495) (46, 48.8889) (47, 49.9537) (48, 48.9362) (49, 49.9575) (50, 48.9796) (51, 49.9608) (52, 49.0196) (53, 49.9637) (54, 49.0566) (55, 49.9663) (56, 49.0909) (57, 49.9687) (58, 49.1228) (59, 49.9708) (60, 49.1525) (61, 49.9727) (62, 49.1803) (63, 49.9744) (64, 49.2063) (65, 49.976) (66, 49.2308) (67, 49.9774) (68, 49.2537) (69, 49.9787) (70, 49.2754) (71, 49.9799) (72, 49.2958) (73, 49.981) (74, 49.3151) (75, 49.982) (76, 49.3333) (77, 49.9829) (78, 49.3506) (79, 49.9838) (80, 49.3671) (81, 49.9846) (82, 49.3827) (83, 49.9853) (84, 49.3976) (85, 49.986) (86, 49.4118) (87, 49.9866) (88, 49.4253) (89, 49.9872) (90, 49.4382) (91, 49.9878) (92, 49.4505) (93, 49.9883) (94, 49.4624) (95, 49.9888) (96, 49.4737) (97, 49.9893) (98, 49.4845) (99, 49.9897) (100, 49.4949) } node[above right] {\small\color{blue}upper bound};

\draw[semithick,magenta] plot[mark=., mark color=none,mark options={fill=black,draw opacity=0},mark size=3pt] coordinates {
(2, 0.0) (3, 16.6667) (4, 25.0) (5, 30.0) (6, 33.3333) (7, 35.7143) (8, 37.5) (9, 38.8889) (10, 40.0) (11, 40.9091) (12, 41.6667) (13, 42.3077) (14, 42.8571) (15, 43.3333) (16, 43.75) (17, 44.1176) (18, 44.4444) (19, 44.7368) (20, 45.0) (21, 45.2381) (22, 45.4545) (23, 45.6522) (24, 45.8333) (25, 46.0) (26, 46.1538) (27, 46.2963) (28, 46.4286) (29, 46.5517) (30, 46.6667) (31, 46.7742) (32, 46.875) (33, 46.9697) (34, 47.0588) (35, 47.1429) (36, 47.2222) (37, 47.2973) (38, 47.3684) (39, 47.4359) (40, 47.5) (41, 47.561) (42, 47.619) (43, 47.6744) (44, 47.7273) (45, 47.7778) (46, 47.8261) (47, 47.8723) (48, 47.9167) (49, 47.9592) (50, 48.0) (51, 48.0392) (52, 48.0769) (53, 48.1132) (54, 48.1481) (55, 48.1818) (56, 48.2143) (57, 48.2456) (58, 48.2759) (59, 48.3051) (60, 48.3333) (61, 48.3607) (62, 48.3871) (63, 48.4127) (64, 48.4375) (65, 48.4615) (66, 48.4848) (67, 48.5075) (68, 48.5294) (69, 48.5507) (70, 48.5714) (71, 48.5915) (72, 48.6111) (73, 48.6301) (74, 48.6486) (75, 48.6667) (76, 48.6842) (77, 48.7013) (78, 48.7179) (79, 48.7342) (80, 48.75) (81, 48.7654) (82, 48.7805) (83, 48.7952) (84, 48.8095) (85, 48.8235) (86, 48.8372) (87, 48.8506) (88, 48.8636) (89, 48.8764) (90, 48.8889) (91, 48.9011) (92, 48.913) (93, 48.9247) (94, 48.9362) (95, 48.9474) (96, 48.9583) (97, 48.9691) (98, 48.9796) (99, 48.9899) (100, 49.0)} node[below right] {\small\color{magenta}lower bound};
\end{tikzpicture}
    \caption{Distortion of \kExtremes\ and lower bound stated in \Cref{thm:egalitarian-additive} for $k\in \{2,\ldots,99\}$.}
    \label{fig:plot-egalitarian-add}
\end{figure}
\begin{restatable}{theorem}{thmEgalitarianAdditive}\label{thm:egalitarian-additive}
    For every $n,k\in \NN$ with $n\geq k\geq 2$, \kExtremes\ has a distortion for egalitarian additive social cost of at most $\frac{3}{2} - \frac{1}{2(k-1)}$ if $k$ is even and at most $\frac{3}{2} - \frac{1}{k(k-1)}$ if $k$ is odd.
    Conversely, for every $k\in \NN$ with $k\geq 3$ there exists $n\in \NN$ with $n\geq k$ such that, for every $(n,k)$-voting rule $f$, $\dist(f) \geq \frac{3}{2}-\frac{1}{k}$ for egalitarian additive social cost.
\end{restatable}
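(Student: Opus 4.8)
The plan is to prove the upper and lower bounds separately.

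\emph{Upper bound on the distortion of \kExtremes.} By \Cref{lem:eg-add-extreme-agents} and \eqref{eq:eg-add-sum-sc-extremes}, for every committee $S'$ we have $\SC(S',A;d)=\max\{\SC(S',1;d),\SC(S',n;d)\}$ with $\SC(S',1;d)+\SC(S',n;d)=k\,d(1,n)$. Normalising the metric so that $x_1=0$ and $x_n=d(1,n)=1$, the social cost of $S'$ is therefore $\tfrac k2+\tfrac12\bigl|\sum_{j\in S'}(2x_j-1)\bigr|$, so a committee is cheap exactly when it is balanced between the two extreme agents. This gives three lower bounds on $\OPT$: the balance bound $\OPT\ge\tfrac k2$, and $\OPT\ge\sum_{j=1}^{k}x_j$ and $\OPT\ge\sum_{j=n-k+1}^{n}(1-x_j)$, since the cheapest committee for agent $1$ (resp.\ $n$) is the one consisting of the $k$ leftmost (resp.\ rightmost) agents. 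Writing $a=\lfloor k/2\rfloor$, $b=\lceil k/2\rceil$ and $S=\{1,\dots,a\}\cup\{n-b+1,\dots,n\}$ for the committee returned by \kExtremes, and assuming first that $\ALG=\SC(S,n;d)=\sum_{j=1}^{a}(1-x_j)+\sum_{j=n-b+1}^{n}(1-x_j)$, the key point is that the agents $n-b+1,\dots,n$ are the ones closest to agent $n$ among the $k$ rightmost, so $\sum_{j=n-b+1}^{n}(1-x_j)\le\tfrac bk\sum_{j=n-k+1}^{n}(1-x_j)\le\tfrac bk\,\OPT$, while $\sum_{j=1}^{a}(1-x_j)\le a\le\tfrac{2a}{k}\OPT$; together these give $\ALG\le\tfrac{2a+b}{k}\OPT$, which is $\tfrac32\OPT$ for even $k$. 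To obtain the sharper constants $\tfrac32-\tfrac1{2(k-1)}$ and $\tfrac32-\tfrac1{k(k-1)}$ --- and to treat the case where $\SC(S,1;d)\ge\SC(S,n;d)$, which for odd $k$ is the harder one because of the floor/ceiling asymmetry --- one has to argue that the two estimates above cannot be simultaneously tight: tightness of the first forces $x_1=\dots=x_a=0$ which, combined with $\OPT=\tfrac k2$, forces $x_2,\dots,x_{n-1}$ to be small, contradicting tightness of the second; quantifying this trade-off and optimising over the remaining coordinates yields the claimed $O(1/k)$ improvement, with the parity of $k$ entering through $a$ versus $b$.

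\emph{Lower bound.} Fix $k\ge 3$. The plan is to construct an instance on $n=2k+2$ agents partitioned into a cluster $X$ of $k+1$ agents at one end, a block $Y$ of $k$ agents in the middle, and a single agent $z$ at the other end, with the preference profile $\succ$ in which each agent ranks the members of its own part first and the remaining parts by proximity. One then exhibits two metrics consistent with $\succ$ that differ in the placement of $Y$ (and of $z$) relative to the midpoint, chosen so that the optimal committee is built around $z$ together with agents near $X$ in one metric and is a different committee in the other, the middle block $Y$ being the intended optimum that no rule can identify from $\succ$. For an arbitrary $(n,k)$-voting rule $f$ with output $S$, a case analysis on how many members $S$ takes from $X$, from $Y$, and whether it contains $z$ shows that $S$ is far from optimal in at least one of the two metrics; evaluating $\SC(S,A;d)/\OPT$ there via \Cref{lem:eg-add-extreme-agents} and the balance identity yields a ratio of at least $\tfrac{3k-2}{2k}=\tfrac32-\tfrac1k$, hence $\dist(f)\ge\tfrac32-\tfrac1k$.

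\emph{Main obstacle.} On the upper-bound side the difficulty is the tight lower-order accounting sketched above, together with keeping the two cases (according to which of $\SC(S,1;d)$, $\SC(S,n;d)$ is larger) straight for odd $k$. On the lower-bound side the real work is to pin down the exact coordinates of $X$, $Y$ and $z$ in the two metrics --- subject to the consistency requirement that $Y$ be strictly closer to $X$ than to $z$ --- so that \emph{every} possible output of a rule is forced, in one of the two metrics, to a ratio of at least $\tfrac{3k-2}{2k}$ rather than a smaller quantity; the three-cluster structure keeps the computations short, but choosing the parameters correctly is the crux.
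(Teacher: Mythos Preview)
Your upper-bound plan is along the same lines as the paper's: both use \Cref{lem:eg-add-extreme-agents} and the balance identity \eqref{eq:eg-add-sum-sc-extremes} to reduce to the costs at the two extreme agents, and both exploit the three lower bounds on $\OPT$ that you list. The paper makes the trade-off you sketch explicit via a three-way case split on whether $d\bigl(1,\lfloor k/2\rfloor\bigr)$ exceeds $d(1,n)/2$ and, if so, on the size of $\sum_{s=2}^{\lfloor k/2\rfloor}d(1,s)$ relative to a carefully chosen threshold; the threshold is tuned so that the two ``tight'' cases yield the same bound. Your ``tightness of one forces slack in the other'' intuition is correct, but you would still have to identify this threshold and carry out the computation --- this is exactly where the parity-dependent constants $\tfrac32-\tfrac1{2(k-1)}$ and $\tfrac32-\tfrac1{k(k-1)}$ emerge, and it is not automatic from what you wrote.

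Your lower-bound construction, however, does not work. With only three parts $X,Y,z$ of sizes $k{+}1$, $k$, $1$ and a \emph{single} middle block $Y$, the rule ``output $Y$'' is well defined from the profile (it is simply the middle $k$ agents in the computed order) and has distortion at most $k/(k-1)$ over \emph{all} metrics consistent with your profile. Indeed, the worst metric for this rule puts $Y$ at the same point as $X$; then $\ALG=k\,d(1,n)$ while $\OPT=(k-1)\,d(1,n)$, achieved by $\{z\}$ together with $k-1$ agents at the left, so the ratio is $k/(k-1)$. Since $k/(k-1)<\tfrac32-\tfrac1k$ for every $k\ge 5$, your instance cannot certify the claimed bound. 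The paper circumvents this by using \emph{four} parts: two singletons $\{1\},\{n\}$ at the extremes and two blocks $A_2,A_3$, each of size $k$, in between. The two metrics are then mirror images --- one merges $A_1\cup A_2$ at the left, the other merges $A_3\cup A_4$ at the right --- so there is no single ``middle block'' a rule can safely pick. Any rule takes at least $\lceil k/2\rceil$ agents from one side, and in the corresponding metric the far extreme agent incurs cost at least $(k-1)+\lceil k/2\rceil$ while the optimum (the other middle block) has cost $k$, giving the ratio $\tfrac32-\tfrac1k$.
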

\begin{proof}
We first show the bound on the distortion of \kExtremes. 
We fix $n,k\in \NN$ with $n\geq k\geq 2$, a linear order $\succ$ on $A=[n]$, and a consistent distance metric $d\ \rhd \succ$.
We write $\instance=(A,k,\succ)$ for the corresponding election and denote \kExtremes\ by $f$ and the outcome by $S$ in this part of the proof for compactness.

We claim that, if $d$ is such that $\SC(S,1;d)< \SC(S,n;d)$, there exists an alternative distance metric $d'$ with $\SC(S,1;d')\geq \SC(S,n;d')$ and  $\dist(f(\succ),\instance;d')\geq \dist(f(\succ),\instance;d)$.
Indeed, consider such $d$ defined by positions $x\in (-\infty,\infty)^n$, and let $d'$ be defined by positions $x'\in (-\infty,\infty)^n$, where $x'_a=x_{n+1-a}$ for every $a\in [n]$.
Since $f$ selects $\big\lfloor \frac{k}{2}\big\rfloor$ agents closest to the left-most agent and the $\big\lceil \frac{k}{2} \big\rceil$ agents closest to the right-most agent, we have
\[
    \SC(S,1;d')\geq \SC(S,n;d) > \SC(S,1;d) \geq \SC(S,n;d').
\]
Furthermore, this chain of inequalities combined with \Cref{lem:eg-add-extreme-agents} imply that $\SC(S,A;d') \geq \SC(S,A;d)$.
Since $\min\big\{\SC(S',A;d')\mid S'\in {A\choose k}\big\} = \min\big\{\SC(S',A;d)\mid S'\in {A\choose k}\big\}$, this yields $\dist(f(\succ),\instance;d')\geq \dist(f(\succ),\instance;d)$, so the claim follows.
Thanks to this claim, we can assume in what follows that $\SC(S,1;d)\geq \SC(S,n;d)$ and thus, by \Cref{lem:eg-add-extreme-agents}, $\SC(S,A;d)=\SC(S,1;d)$.

We distinguish three cases depending on the distances from agent $1$ to other agents and show the claimed distortion for each of them.
We first suppose that $d\big(1,\big\lfloor \frac{k}{2}\big\rfloor\big) \leq \frac{d(1,n)}{2}$.
In this case,
\begin{align*}
    \SC(S,1;d) & = \sum_{s=1}^{\lfloor k/2\rfloor} d(1,s) + \sum_{s=n-\lceil k/2\rceil + 1}^{n} d(1,s) \\
    & \leq \bigg(\bigg\lfloor\frac{k}{2}\bigg\rfloor-1\bigg) \frac{d(1,n)}{2} + \bigg\lceil \frac{k}{2}\bigg\rceil d(1,n) \\
    & = \bigg(k+\bigg\lceil \frac{k}{2}\bigg\rceil-1\bigg) \frac{d(1,n)}{2},
\end{align*}
where we used the assumption $d\big(1,\big\lfloor \frac{k}{2}\big\rfloor\big) \leq d/2$ and the fact that $d(1,1)=0$ for the inequality.
From \Cref{lem:eg-add-extreme-agents} and \cref{eq:eg-add-sum-sc-extremes} we know that $\SC(S',A;d)\geq \frac{kd(1,n)}{2}$ for any $S'\in {A\choose k}$, so we obtain
\[
    \dist(f(\succ), \instance) = \frac{\SC(S,1;d)}{\min_{S'\in {A\choose k}} \SC(S',A;d)} \leq \frac{\big(k+\big\lceil \frac{k}{2}\big\rceil-1\big) \frac{d(1,n)}{2}}{\frac{kd(1,n)}{2}} = \frac{3}{2} - \frac{2-k\bmod 2}{2k},
\]
which is smaller than $\frac{3}{2} - \frac{1}{2(k-1)}$ for even $k \geq 2$ and smaller than $\frac{3}{2} - \frac{1}{k(k-1)}$ for odd $k \geq 3$.
Thus, we conclude the result in this case.

We next suppose that $d\big(1,\big\lfloor \frac{k}{2}\big\rfloor\big) > \frac{d(1,n)}{2}$ and $\sum_{s=2}^{\lfloor k/2\rfloor} d(1,s) \leq \frac{k-2-k\bmod2}{k-1} \cdot \frac{kd(1,n)}{4}$.
In a similar way as before, we now have
\begin{align*}
    \SC(S,1;d) & = \sum_{s=1}^{\lfloor k/2\rfloor} d(1,s) + \sum_{s=n-\lceil k/2\rceil + 1}^{n} d(1,s) \\
    & \leq \frac{k-2-k\bmod2}{k-1} \cdot \frac{kd(1,n)}{4} + \bigg\lceil \frac{k}{2}\bigg\rceil d(1,n) \\
    & = \bigg(3k - \frac{k-(k-2)k\bmod 2}{k-1}\bigg)\frac{d(1,n)}{4},
\end{align*}
where the inequality follows from the assumption $\sum_{s=2}^{\lfloor k/2\rfloor} d(1,s) \leq \frac{k-2-k\bmod2}{k-1} \cdot \frac{kd(1,n)}{4}$ and the fact that $d(1,1)=0$.
From \Cref{lem:eg-add-extreme-agents} and \cref{eq:eg-add-sum-sc-extremes} we know that $\SC(S',A;d)\geq \frac{kd(1,n)}{2}$ for any $S'\in {A\choose k}$, so we obtain
\begin{align*}
    \dist(f(\succ), \instance) & = \frac{\SC(S,1;d)}{\min_{S'\in {A\choose k}} \SC(S',A;d)} \\
    & \leq \frac{\big(3k - \frac{k-(k-2)k\bmod 2}{k-1}\big)\frac{d(1,n)}{4}}{\frac{kd(1,n)}{2}} = \frac{3}{2} - \frac{k-(k-2)k\bmod 2}{2k(k-1)},
\end{align*}
which corresponds to the expression in the statement.

We finally consider the case with $d\big(1,\big\lfloor \frac{k}{2}\big\rfloor\big) > \frac{d(1,n)}{2}$ and $\sum_{s=2}^{\lfloor k/2\rfloor} d(1,s) > \frac{k-2-k\bmod2}{k-1} \cdot \frac{kd(1,n)}{4}$.
Since the distance between $1$ and the right-most point among $\big\{2,\ldots,\big\lfloor \frac{k}{2}\big\rfloor\big\}$, namely $d\big(1,\big\lfloor \frac{k}{2}\big\rfloor\big)$, is at least its average distance to points within this set, we know that
\begin{equation}
    d\bigg(1,\bigg\lfloor \frac{k}{2}\bigg\rfloor\bigg) \geq \frac{1}{\big\lfloor \frac{k}{2}\big\rfloor -1} \sum_{s=2}^{\lfloor k/2\rfloor}d(1,s) \geq \frac{1}{\big\lfloor \frac{k}{2}\big\rfloor -1} \cdot \frac{k-2-k\bmod2}{k-1} \cdot \frac{kd(1,n)}{4} = \frac{kd(1,n)}{2(k-1)}.\label{ineq:eg-add-lb-right-most}
\end{equation}
Let now $S'\in {A\setminus \{1\}\choose k-1}$ be any set of $k-1$ agents without $1$.
Since $\big\{2,\ldots,\big\lfloor \frac{k}{2}\big\rfloor\big\}$ are the closest agents to $1$, we know that $\frac{1}{k-1}\sum_{s\in S'} d(1,s) \geq \frac{1}{\lfloor k/2\rfloor -1}\sum_{s=2}^{\lfloor k/2\rfloor} d(1,s)$.
Rearranging this expression and using our assumption once again, we obtain
\[
    \sum_{s\in S'} d(1,s) \geq \frac{k-1}{\big\lfloor \frac{k}{2}\big\rfloor -1} \sum_{s=2}^{\lfloor k/2\rfloor}d(1,s) \geq \frac{kd(1,n)}{2},
\]
where we used \cref{ineq:eg-add-lb-right-most} for the last inequality.
For any committee $S'\in {A\choose k}$, this implies that $\SC(S',1;d)\geq \frac{kd(1,n)}{2}$, \cref{eq:eg-add-sum-sc-extremes} implies that $\SC(S',1;d) \geq \SC(S',n;d)$, and \Cref{lem:eg-add-extreme-agents} implies that $\SC(S',A;d)=\SC(S',1;d)$.
Therefore,
\begin{equation}
    \min_{S'\in {A\choose k}} \SC(S',A;d) = \min_{S'\in {A\choose k}} \SC(S',1;d) = \sum_{s=2}^k d(1,s);\label{eq:eg-add-opt-char}
\end{equation}
i.e., the optimal set in this case corresponds to $\{1,\ldots,k\}$.
Combining the previous expressions, we obtain the following chain of inequalities:
\allowdisplaybreaks
\begin{align*}
    \dist(f(\succ), \instance) & = \frac{\SC(S,1;d)}{\min_{S'\in {A\choose k}} \SC(S',A;d)}\\
    & = 1+ \frac{\SC(S,1;d) - \min_{S'\in {A\choose k}} \SC(S',A;d)}{\min_{S'\in {A\choose k}} \SC(S',A;d)}\\
    & \leq 1+ \frac{2}{kd(1,n)} \bigg(\sum_{s\in S} d(1,s) - \sum_{s=2}^{k} d(1,s)\bigg)\\
    & = 1 + \frac{2}{kd(1,n)} \bigg(\sum_{s=n-\lceil k/2\rceil +1}^{n} d(1,s) -  \sum_{s=\lfloor k/2\rfloor +1}^{k} d(1,s) \bigg)\\
    & \leq 1 + \frac{2}{kd(1,n)} \cdot \bigg\lceil \frac{k}{2} \bigg\rceil \bigg(d(1,n)-d\bigg(1,\bigg\lfloor \frac{k}{2}\bigg\rfloor\bigg)\bigg)\\
    & \leq 1 + \frac{2}{kd(1,n)} \cdot \bigg\lceil \frac{k}{2} \bigg\rceil \bigg( d(1,n) - \frac{kd(1,n)}{2(k-1)}\bigg)\\
    & = \frac{3}{2} - \frac{k-(k-2)k\bmod 2}{2k(k-1)}.
\end{align*}
Indeed, the first inequality follows from \cref{eq:eg-add-opt-char} and the fact that $\SC(S',A;d)\geq \frac{kd(1,n)}{2}$ for every $S'\in {A\choose k}$ due to \cref{eq:eg-add-sum-sc-extremes}, the third equality from the definition of $f$, the second inequality from simple bounds on $d(1,s)$ for different values of $s$, and the last inequality from \cref{ineq:eg-add-lb-right-most}.
The other equalities come from simple calculations.
Since the last expression again corresponds to the expression in the statement, we conclude.

For the lower bound, we consider any $k\in \NN$ with $k\geq 3$, we fix $n=2(k+1)$, and consider an arbitrary $(n,k)$-voting rule $f$.
We partition the agents into four sets $A=\dot\bigcup_{i=1}^4 A_i$ such that $A_1=\{1\},$ $A_4=\{n\}$ and $|A_2|=|A_3|=k$.
We consider the profile $\succ\in \calL^n(n)$, where $S = f(\succ)$, and
\begin{enumerate}[label=(\roman*)]
    \item $b\succ_a c$ whenever $a\in A_i,b\in A_j,c\in A_\ell$ for some $i,j,\ell\in [4]$ with $|i-j|<|i-\ell|$;
    \item $1\succ_a b$ whenever $a\in A_2,b\in A_3\cup A_4$;
    \item $n\succ_a b$ whenever $a\in A_3,b\in A_1\cup A_2$;
\end{enumerate}
and the remaining pairwise comparisons are arbitrary.
We consider the election $\instance=(A,k,\succ)$ with $A=[n]$.

In what follows, we distinguish whether $f$ selects more agents from $A_1\cup A_2$ or from $A_3\cup A_4$ and construct appropriate distance metrics to show that, in either case, the distortion is at least the one claimed in the statement.
Intuitively, if $f$ selects more agents from $A_1\cup A_2$ we will consider a metric where these sets lie on one extreme, $A_4=n$ on the other extreme, and all agents $A_3$ in the middle, so that picking all agents from $A_3$ would lead to a much lower social cost.
In the opposite case, we will construct a symmetric instance.

Formally, we first consider the case with $|S \cap (A_1\cup A_2)| \geq \frac{k}{2}$ and define the distance metric $d_1$ on $A$ by the following positions $x\in (-\infty,\infty)^n$: $x_a=0$ for every $a\in A_1\cup A_2$, $x_a=1$ for every $a\in A_3$, and $x_n=2$.
It is not hard to check that $d_1\ \rhd \succ$; see \Cref{fig:egalitarian-add} for an illustration.
Since $|S \cap (A_1\cup A_2)| \geq \frac{k}{2}$, we obtain
\[
    \dist(f(\succ), \instance) \geq \frac{\SC(S,A;d_1)}{\SC(A_3,A;d_1)} \geq \frac{\SC(S,n;d_1)}{\SC(A_3,n;d_1)} \geq \frac{(k-1)+|S\cap(A_1\cup A_2)|}{k} \geq\frac{3}{2}-\frac{1}{k}.
\]

Conversely, if $|S\cap (A_3\cup A_4)| \geq \frac{k}{2}$, we define the distance metric $d_2$ on $A$ by the following positions $x\in (-\infty,\infty)^n$: $x_1=0$, $x_a=1$ for every $a\in A_2$, and $x_a=2$ for every $a\in A_3\cup A_4$.
It is not hard to check that $d_2\ \rhd \succ$; see \Cref{fig:egalitarian-add} for an illustration.
Since $|S\cap (A_3\cup A_4)| \geq \frac{k}{2}$, we obtain
\[
    \dist(f(\succ), \instance) \geq \frac{\SC(S,A;d_2)}{\SC(A_2,A;d_2)} \geq \frac{\SC(S,1;d_2)}{\SC(A_2,1;d_2)} \geq \frac{(k-1)+|S\cap(A_3\cup A_4)|}{k} \geq\frac{3}{2}-\frac{1}{k}.
\]

Since $\dist(f(\succ), \instance)\geq \frac{3}{2}-\frac{1}{k}$ regardless of $f(\succ)$, we conclude that $\dist(f) \geq \frac{3}{2}-\frac{1}{k}$.
\end{proof}

\begin{figure}
\centering
% Metric d_1
\begin{tikzpicture}[scale=1, every node/.style={font=\footnotesize}]
    \node[anchor=east] at (-0.5, 0) {\textbf{Metric $d_1$}};
    % Draw points and labels
    \filldraw[black] (0,0) circle (2.5pt) node[anchor=north] {$A_1\cup A_2$};
    \filldraw[black] (3,0) circle (2.5pt) node[anchor=north] {$A_3$};
    \filldraw[black] (6,0) circle (2.5pt) node[anchor=north] {$A_4$};
    % Distances
    \draw[-] (0,0) -- (3,0) node[midway, anchor=south] {$1$};
    \draw[-] (3,0) -- (6,0) node[midway, anchor=south] {$1$};

%Metric d_2
\begin{scope}[yshift=-1cm],
    \node[anchor=east] at (-0.5, 0) {\textbf{Metric $d_2$}};
    % Draw points and labels
    \filldraw[black] (0,0) circle (2.5pt) node[anchor=north] {$A_1$};
    \filldraw[black] (3,0) circle (2.5pt) node[anchor=north] {$A_2$};
    \filldraw[black] (6,0) circle (2.5pt) node[anchor=north] {$A_3\cup A_4$};
    % Distances
    \draw[-] (0,0) -- (3,0) node[midway, anchor=south] {$1$};
    \draw[-] (3,0) -- (6,0) node[midway, anchor=south] {$1$};
\end{scope}
\end{tikzpicture}
    \caption{Metrics considered in the proof of \Cref{thm:egalitarian-additive}.}
    \label{fig:egalitarian-add}
\end{figure}

\subsection{Egalitarian $q$-Cost}\label{subsec:egalitarian-qcost}

We now turn our attention to the $q$-cost aggregation function of candidates, so that the social cost is the maximum over agents of the distance from each agent to its $q$th closest candidate; i.e.,
\[
    \SC(S',A;d) = \max\big\{\tilde{d}(a)_q \mid a\in A\big\}
\]
for a set of agents $A$, a committee size $k$, a committee $S'\in {A\choose k}$, and a distance metric $d$, where $\tilde{d}(a)\in \RR^{S'}_+$ contains the values $\{d(a,s)\mid s\in S'\}$ in increasing order.

We begin by showing that no voting rule can guarantee a constant distortion for $q$-cost when $q \leq \frac{k}{3}$. This implies that the unbounded distortion for this objective, previously established in the setting of disjoint voters and candidates~\citep{caragiannis2022metric}, also holds in our setting.
\begin{restatable}{theorem}{thmEgalitarianqCostSmallq}\label{thm:egalitarian-qcost-smallq}
    For every $k,q\in \NN$ with $\frac{k}{3}\geq q$, there exists $n\in \NN$ with $n\geq k$ such that, for every $(n,k)$-voting rule $f$, $\dist(f)$ is unbounded for egalitarian $q$-cost.
\end{restatable}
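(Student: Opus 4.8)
The plan is to mimic the lower‑bound strategy of \Cref{thm:utilitarian-1cost} and \Cref{thm:utilitarian-lowerbound-qcost}: build one preference profile that is consistent with two symmetric line metrics, each of which admits a committee of social cost $0$, yet such that no single committee of size $k$ can be optimal for both. Concretely, set $p=\big\lfloor \frac{k}{q}\big\rfloor$, which is at least $3$ since $q\le \frac{k}{3}$, fix $n=(p+1)q$ (so $n\ge k$ because $k<(p+1)q$), and partition $A=[n]$ into $p+1$ blocks $A_1,\dots,A_{p+1}$ of size $q$ each. Define two position vectors: in $x^{(1)}$ put $A_1$ and $A_2$ both at position $0$ and $A_i$ at position $i-2$ for $3\le i\le p+1$; in $x^{(2)}$ put $A_i$ at position $i-1$ for $1\le i\le p$ and $A_{p+1}$ at position $p-1$ (the left–right mirror of $x^{(1)}$). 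Let $d_1,d_2$ be the induced metrics; both realize the block order $A_1,\dots,A_{p+1}$ of \Cref{lem:agents-order}, with one pair of adjacent blocks collapsed at the left, respectively right, extreme.

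The first step is to exhibit a profile $\succ$ consistent with both $d_1$ and $d_2$. For this I would check, block by block (using the mirror symmetry to halve the work), that $d_1$ and $d_2$ never \emph{strictly} disagree, i.e., from no agent's viewpoint is one block strictly closer under $d_1$ but strictly farther under $d_2$ (equivalently, every discrepancy between the two metrics is an off‑by‑one integer distance that is absorbed by a tie). Granting this, for each agent $a$ the lexicographic order on the triples $\big(d_1(a,\cdot),\,d_2(a,\cdot),\,\text{index}\big)$ is a linear order on $A$ that is (weakly) consistent with both metrics, and I take $\succ$ to be the resulting profile.

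Next I would compute optima and characterize zero‑cost committees. Under $d_i$ there are exactly $p$ distinct locations, each holding at least $q$ agents, and since $pq\le k$ there is a committee picking $\ge q$ agents at every location; such a committee has egalitarian $q$-cost $0$, so $\SC(S^{*},A;d_1)=\SC(S^{*},A;d_2)=0$ for the corresponding optima. Spelling out the two location structures, $\SC(S,A;d_1)=0$ forces $|S\cap A_i|\ge q$ for every $i\ge 3$, while $\SC(S,A;d_2)=0$ forces $|S\cap A_i|\ge q$ for every $i\le p-1$; these index sets cover all of $[p+1]$, so a committee optimal for both metrics would satisfy $|S|\ge (p+1)q>k$, a contradiction. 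Hence for every $(n,k)$-voting rule $f$, writing $S=f(\succ)$, we have $\SC(S,A;d_1)>0$ or $\SC(S,A;d_2)>0$: in that metric $S$ contains fewer than $q$ agents at some location, so every agent at that location has its $q$-th closest selected agent at strictly positive distance, giving $\SC(S,A;d_i)>0=\min_{S'\in\binom{A}{k}}\SC(S',A;d_i)$ and therefore distortion $\infty$ (or, scaling the inter‑location distances to $M$ and perturbing away the ties as in the preliminaries, distortion at least $M/\varepsilon$ for arbitrarily small $\varepsilon$). Thus $\dist(f)$ is unbounded, as claimed.

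I expect the only real obstacle to be the first step — verifying that the two collapsed metrics are simultaneously consistent with a single profile — since one must track carefully which block‑to‑block comparisons are strict versus tied under each metric; the counting core (the $(p+1)q>k$ contradiction) and the "missing a location $\Rightarrow$ positive cost" reduction are routine once the metrics are available.
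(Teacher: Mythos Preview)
Your plan coincides with the paper's own argument: the same choice $n=(p+1)q$ with $p=\lfloor k/q\rfloor\ge 3$, the same partition into $p+1$ blocks of size $q$, the same two ``collapse one extreme pair'' metrics, and the same $(p+1)q>k$ counting contradiction. You are right to isolate the simultaneous-consistency step as the only delicate point, and your lexicographic recipe for $\succ$ is exactly the right construction \emph{provided} the two metrics never strictly disagree from some agent's viewpoint.

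Unfortunately they do strictly disagree whenever $p$ is even. Take $p=4$ (for instance $k=4$, $q=1$), so in your notation $x^{(1)}=(0,0,1,2,3)$ and $x^{(2)}=(0,1,2,3,3)$. For any $a\in A_3$, under $x^{(1)}$ one has $d(a,A_1)=1<2=d(a,A_5)$, forcing every agent of $A_1$ above every agent of $A_5$ in $\succ_a$; under $x^{(2)}$ one has $d(a,A_1)=2>1=d(a,A_5)$, forcing the reverse. Hence no common profile exists when $p$ is even, and your proposed block-by-block check fails precisely here. More generally, for $a\in A_i$ with $2\le i\le p$ the only pair on which the two metrics can strictly disagree is $(A_1,A_{p+1})$, and a short computation shows the disagreement occurs exactly when $2i=p+2$; thus the construction is sound for all odd $p\ge 3$ but breaks for even $p$. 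The paper asserts $d_1\rhd\succ$ and $d_2\rhd\succ$ without carrying out this verification and therefore shares the same gap. A repair for even $p$ requires a different choice of metrics (one in which no block sits symmetrically between the two collapsed ends), so this is a genuine missing step rather than a routine check.
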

\begin{proof}
We let $k,q\in \NN$ with $\frac{k}{3}\geq q$ be arbitrary, define $p=\big\lfloor \frac{k}{q}\big\rfloor$, and take $n=(p+1)q$.
We partition the agents into $p+1\geq 4$ sets $A=\dot\bigcup_{i\in [p+1]}A_i$ such that $|A_i|\geq q$ for every $i\in [p+1]$; note that this is possible since $(p+1)q \leq \big(\frac{k}{q}+1)q=k+q=n$.
We consider any fixed $(n,k)$-voting rule $f$ and the profile $\succ\in \calL^n(n)$, where $S = f(\succ)$, and
\begin{enumerate}[label=(\roman*)]
    \item $b\succ_a c$ whenever $a\in A_i,b\in A_j,c\in A_\ell$ with $|i-j|<|i-\ell|$ for some $i,j,\ell\in [p+1]$;
    \item $b\succ_a c$ whenever $a\in A_i,b\in A_1,c\in A_j$ with $|i-1|=|i-j|$ for some $i,j\in [p]$;
    \item $b\succ_a c$ whenever $a\in A_i,b\in A_{p+1},c\in A_j$ with $|i-(p+1)|=|i-j|$ for some $i,j\in \{2,\ldots,p+1\}$;
\end{enumerate}
and the remaining pairwise comparisons are arbitrary.
We consider the election $\instance=(A,k,\succ)$ with $A=[n]$.
Since $(p+1)q>\frac{k}{q}q=k$, we know that there exists $j\in [p+1]$ such that $|S\cap A_j| < q$.
We distinguish two cases depending on the identity of $j$.

If $j\notin \{p,p+1\}$, we consider the distance metric $d_1$ on $A$ given by the following positions $x\in (-\infty,\infty)^n$: $x_a = i-1$ for every $a\in A_i$ and $i\in [p]$, and $x_a = p-1$ for every $a\in A_{p+1}$.
It is not hard to see that $d_1\ \rhd \succ$; see \Cref{fig:egalitarian-qcost-smallq} for an illustration.
Since $|S \cap A_j| < q$ for some $j\notin \{p,p+1\}$, we have that $\SC(S,A_j;d_1)=1$. 
On the other hand, we can define an alternative committee $S'=\bigcup_{i\in [p]}S'_i$ such that $|S'_i\cap A_i|\geq q$ for every $i\in [p]$, which is possible because $|A_i|\geq q$ for every $i\in [p]$ and $pq\leq \frac{k}{q}q=k$.
Since $\SC(S',A;d_1)=0$ and $\dist(f(\succ), \instance) \geq \frac{\SC(S,A;d_1)}{\SC(S',A;d_1)}$, we conclude that $\dist(f(\succ), \instance)$ is unbounded.

If $j\in \{p,p+1\}$, we consider the distance metric $d_2$ on $A$ given by the following positions $x\in (-\infty,\infty)^n$: $x_a=0$ for every $a\in A_q$, and $x_a = i-2$ for every $a\in A_i$ and $i\in \{2,\ldots,p+1\}$.
It is not hard to see that $d_2\ \rhd \succ$; see \Cref{fig:egalitarian-qcost-smallq} for an illustration.
Since $|S\cap A_j| < q$ for some $j\in \{p,p+1\}$, we have that $\SC(S,A_j;d_2)=1$. 
On the other hand, we can define an alternative committee $S'=\bigcup_{i\in \{2,\ldots,p+1\}}S'_i$ such that $|S'_i\cap A_i|\geq q$ for every $i\in \{2,\ldots,p+1\}$, which is possible because $|A_i|\geq q$ for every $i\in \{2,\ldots,p+1\}$ and $pq\leq \frac{k}{q}q=k$.
Since $\SC(S',A;d_2)=0$ and $\dist(f(\succ), \instance) \geq \frac{\SC(S,A;d_2)}{\SC(S',A;d_2)}$, we conclude that $\dist(f(\succ), \instance)$ is unbounded.

Since $\dist(f(\succ), \instance)$ is unbounded regardless of $f(\succ)$, we conclude that $\dist(f)$ is unbounded.
\end{proof}

\begin{figure}
\centering
    \begin{tikzpicture}[scale=1, every node/.style={font=\footnotesize}]
    \node[anchor=east] at (-0.5, 0) {\textbf{Metric $d_1$}};
    % Draw points and labels
    \filldraw[black] (0,0) circle (2.5pt) node[anchor=north] {$A_1$};
    \filldraw[black] (1.5,0) circle (2.5pt) node[anchor=north] {$A_2$};
    \filldraw[black] (3,0) circle (2.5pt) node[anchor=north] {$A_3$};
    \filldraw[black] (3.5,0) circle (0.5pt) node[anchor=north] {};
    \filldraw[black] (4,0) circle (0.5pt) node[anchor=north] {};
    \filldraw[black] (4.5,0) circle (0.5pt) node[anchor=north] {};
    \filldraw[black] (5,0) circle (2.5pt) node[anchor=north] {$A_{p-2}$};
    \filldraw[black] (6.5,0) circle (2.5pt) node[anchor=north] {$A_{p-1}$};
    \filldraw[black] (8,0) circle (2.5pt) node[anchor=north] {$A_p\cup A_{p+1}$};
    % Distances
    \draw[-] (0,0) -- (1.5,0) node[midway, anchor=south] {$1$};
    \draw[-] (1.5,0) -- (3,0) node[midway, anchor=south] {$1$};
    \draw[-] (5,0) -- (6.5,0) node[midway, anchor=south] {$1$};
    \draw[-] (6.5,0) -- (8,0) node[midway, anchor=south] {$1$};

%d_2
\begin{scope}[yshift=-1cm],
    \node[anchor=east] at (-0.5, 0) {\textbf{Metric $d_2$}};
    % Draw points and labels
    \filldraw[black] (0,0) circle (2.5pt) node[anchor=north] {$A_1\cup A_2$};
    \filldraw[black] (1.5,0) circle (2.5pt) node[anchor=north] {$A_3$};
    \filldraw[black] (3,0) circle (2.5pt) node[anchor=north] {$A_4$};
    \filldraw[black] (3.5,0) circle (0.5pt) node[anchor=north] {};
    \filldraw[black] (4,0) circle (0.5pt) node[anchor=north] {};
    \filldraw[black] (4.5,0) circle (0.5pt) node[anchor=north] {};
    \filldraw[black] (5,0) circle (2.5pt) node[anchor=north] {$A_{p-1}$};
    \filldraw[black] (6.5,0) circle (2.5pt) node[anchor=north] {$A_{p}$};
    \filldraw[black] (8,0) circle (2.5pt) node[anchor=north] {$A_{p+1}$};
    % Distances
    \draw[-] (0,0) -- (1.5,0) node[midway, anchor=south] {$1$};
    \draw[-] (1.5,0) -- (3,0) node[midway, anchor=south] {$1$};
    \draw[-] (5,0) -- (6.5,0) node[midway, anchor=south] {$1$};
    \draw[-] (6.5,0) -- (8,0) node[midway, anchor=south] {$1$};
\end{scope}
\end{tikzpicture}
    \caption{Metrics considered in the proof of \Cref{thm:egalitarian-qcost-smallq}.}
    \label{fig:egalitarian-qcost-smallq}
\end{figure}

In the context of egalitarian $q$-cost for $q> \frac{k}{3}$, much better results are possible.
The case with $q> \frac{k}{2}$ behaves similarly to the setting where a single candidate is to be selected: Any voting rule achieves a distortion of $2$ and this is best possible.
When $\frac{k}{3} < q \leq \frac{k}{2}$, the best-possible distortion a voting rule can achieve is again $2$, but not any rule does so.
We show that \kExtremes\ attains it.
\begin{restatable}{theorem}{thmEgalitarianqCostqLarge}\label{thm:egalitarian-qcost-qlarge}
    Let $n,k,q\in \NN$ be such that $n\geq k\geq 2$ and $q> \frac{k}{3}$.
    If $q>\frac{k}{2}$, any $(n,k)$-voting rule has distortion $2$ for egalitarian $q$-cost. 
    If $q>\frac{k}{3}$, \kExtremes\ has distortion $2$ for egalitarian $q$-cost.
    For every $k,q\in \NN$ with $q>\frac{k}{3}\geq 1$, there exists $n\in \NN$ with $n\geq k$ such that, for every $(n,k)$-voting rule $f$, $\dist(f)\geq 2$.
\end{restatable}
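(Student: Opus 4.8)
The plan is to treat the three claims separately, relying on two elementary facts about the line: (i) for every committee $S'\in\binom{A}{k}$ one has $\SC(S',A;d)\le d(1,n)$, since every agent and every member of $S'$ lies in $[x_1,x_n]$ and so each agent's distance to any member of $S'$ — in particular to its $q$th closest — is at most $d(1,n)$; and (ii) a ``$q$-ball'' observation: the $q$ closest members of $S'$ to an agent $a$ all lie within distance $\SC(S',a;d)$ of $x_a$, so the interval $[x_a-\SC(S',a;d),\,x_a+\SC(S',a;d)]$ contains at least $q$ members of $S'$. For the first claim ($q>\frac k2$, i.e.\ $2q>k$): given any committee $S'$, fact (ii) applied to agents $1$ and $n$ places at least $q$ members of $S'$ in $[x_1,\,x_1+\SC(S',1;d)]$ and at least $q$ in $[x_n-\SC(S',n;d),\,x_n]$; since $2q>|S'|$ these intervals must overlap, so $\SC(S',1;d)+\SC(S',n;d)\ge d(1,n)$ and hence $\SC(S',A;d)\ge\tfrac12 d(1,n)$. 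Combined with fact (i) for the chosen committee, this gives distortion at most $2$, matching \Cref{prop:egalitarian-k1}.

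For the second claim, if $q>\frac k2$ we are done by the first part, so assume $\frac k3<q\le\frac k2$; then $q\le\lfloor k/2\rfloor=:\ell$ and $3q>k$. Using \Cref{lem:agents-order} we may take the agents ordered, so \kExtremes\ returns $S=\{1,\dots,\ell\}\cup\{n-\lceil k/2\rceil+1,\dots,n\}$ (the case $n=k$ being trivial). Let $a^\ast$ be an agent of maximum cost under $S$ and set $\Delta:=\SC(S,A;d)=\SC(S,a^\ast;d)$; by the rule's left--right symmetry assume $x_{a^\ast}\le\tfrac12(x_1+x_n)$. Since $\ell\ge q$, bounding $\Delta$ by the distance from $a^\ast$ to the $q$ agents of $\{1,\dots,\ell\}$ nearest to it yields $\Delta\le x_{a^\ast}-x_1\le\tfrac12 d(1,n)$ when $x_{a^\ast}\ge x_\ell$, and $\Delta\le x_\ell-x_1=d(1,\ell)$ when $x_{a^\ast}<x_\ell$. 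To compare with an optimal committee $S^\ast$, apply fact (ii) under $S^\ast$ to an appropriate triple of agents — $\{1,a^\ast,n\}$ in the first case, a triple built from $1$, $n$ and the relevant cluster endpoints in the second — and exploit that three intervals each containing at least $q$ of the $k<3q$ members of $S^\ast$ cannot be pairwise disjoint: two of them intersect, and in each case this forces $\Delta\le 2\,\SC(S^\ast,A;d)$.

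For the third claim, we adapt the lower-bound instances behind \Cref{thm:egalitarian-additive,thm:egalitarian-qcost-smallq}. For a suitable (large) $n$ we partition the agents into a bounded number of groups and fix a preference profile that is simultaneously consistent — exploiting the agent-dependent tie-breaking permitted by the model — with several distance metrics, each collapsing some adjacent groups onto one another while leaving one group ``isolated'' at distance $2$ from the rest, and under each of which an optimal committee (placing $q$ agents at each of two other locations) has social cost $1$. Given the committee $S=f(\succ)$ output by a rule $f$, the counting bound $3q>k$ shows that $S$ cannot contain $q$ agents near each of three locations, so some group is under-represented in $S$; revealing the metric that isolates that group, an agent of that group then has $q$-cost exactly $2$ under $S$, so $\dist(f)\ge 2$. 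Together with the second claim, the optimal distortion is exactly $2$ for all $q>\frac k3$.

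The main obstacle is the second claim, and within it the case where the worst agent $a^\ast$ lies strictly inside one of \kExtremes's clusters: the crude bound $\Delta\le d(1,\ell)$ is not obviously at most $2\,\SC(S^\ast,A;d)$, and one must track where the optimal committee's $q$-balls around the cluster endpoints sit relative to those around the two extreme agents, using that the cluster endpoints are themselves selected by the rule and that the optimum must serve both extremes simultaneously. A secondary difficulty, in the third claim, is arranging a single preference profile consistent with all the metrics used, since collapsing different pairs of groups imposes conflicting orderings unless the ties are placed with care.
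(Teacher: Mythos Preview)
Your treatment of the first claim is correct and is essentially the paper's argument phrased in terms of intervals.

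For the second claim, the obstacle you flag is genuine and is caused by your choice of $a^\ast$. Taking $a^\ast$ to be the \emph{worst} agent under $S$ gives no control over $d(1,a^\ast)$ when $a^\ast$ sits inside the left cluster, so the pigeonhole on $\{1,a^\ast,n\}$ can return the useless inequality $2\,\SC(S^\ast,A;d)\ge d(1,a^\ast)\ll\Delta$; your Case~1 argument, by contrast, works precisely because $x_{a^\ast}\ge x_\ell$ guarantees $d(1,a^\ast)\ge\Delta$. The paper sidesteps the bad case entirely by defining $a^\ast$ differently: it takes
\[
a^\ast\in\arg\max_{a\in A}\min\{d(1,a),d(a,n)\},\qquad d^\ast:=\min\{d(1,a^\ast),d(a^\ast,n)\},
\]
i.e.\ the most ``central'' agent rather than the worst one. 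After disposing of two easy edge cases (fewer than $q$ agents on one side of $a^\ast$), one checks that $S$ contains at least $q$ members in each of $\{1,\dots,a^\ast\}$ and $\{a^\ast{+}1,\dots,n\}$, so every agent lies within $d^\ast$ of $q$ members of $S$ and hence $\SC(S,A;d)\le d^\ast$. Now the three points $1,a^\ast,n$ are \emph{pairwise} at distance at least $d^\ast$; thus if $\SC(S^\ast,A;d)<d^\ast/2$ the three $q$-balls of $S^\ast$ around them are pairwise disjoint, forcing $|S^\ast|\ge 3q>k$. This is exactly your pigeonhole, but with a choice of $a^\ast$ that makes all three overlap cases yield the same clean bound and eliminates the cluster case altogether.

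For the lower bound, your plan is sound and in fact more robust than what the paper writes. The paper reuses the four-group, two-metric instance from \Cref{thm:egalitarian-additive} and splits on which half of $[n]$ contains at least $k/2$ members of $S$; but that split only forces $\SC(S,n;d_1)=2$ when fewer than $q$ members of $S$ lie in $A_3\cup A_4$, which needs $q>\lfloor k/2\rfloor$. For $\tfrac{k}{3}<q\le\tfrac{k}{2}$ one genuinely needs the three-location pigeonhole you describe, and the ``secondary difficulty'' you anticipate---arranging one profile consistent with several collapsing metrics via agent-dependent tie-breaking---is exactly the remaining work.
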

\begin{proof}
    Let $n,k\in \NN$ be such that $n\geq k\geq 2$.
    Let first $q\in\NN$ be such that $q> \frac{k}{2}$. 
    Let $f$ be any $(n,k)$-voting rule and let $\succ\ \in \calL^n(n)$ be an arbitrary preference profile on $A=[n]$.
    We denote, as usual, agents by $\{1,\ldots,n\}$ from left to right, $S = f(\succ)$, and we let $d\ \rhd \succ$ be any consistent distance metric.
    For a committee $S'\in {A\choose k}$, we let $\tilde{d}(S',a)\in \RR^k_+$ denote the vector with the values $\{d(a,s)\mid s\in S'\}$ in increasing order.
    It is clear that
    \begin{equation}
        \SC(S,A;d) = \max\{\tilde{d}(S,a)_q \mid a\in A\} \leq \max\{d(a,b)\mid a,b\in A\} = d(1,n).\label{ineq:egalitarian-qcost-ub-sc-alg}
    \end{equation}
    On the other hand, for every committee $S'\in {A\choose k}$, if we denote the agents in $S'$ in increasing order by $s_1,\ldots,s_k$ we have that $s_q>s_{k-q}$ because $q>\frac{k}{2}$.
    This implies that, for every committee $S'\in {A\choose k}$, we have
    \[
        \tilde{d}(S',1)_q+\tilde{d}(S',n)_q = s(1,s_q)+ d(s_{k-q},n) > d(1,n),
    \]
    and thus $\max\{\tilde{d}(S',1)_q,\tilde{d}(S',n)_q\} \geq \frac{d(1,n)}{2}$.
    Therefore,
    \begin{align}
        \min_{S'\in {A\choose k}} \SC(S',A;d) & = \min_{S'\in {A\choose k}} \max\{\tilde{d}(S',a)_q \mid a\in A\} \nonumber\\
        & \geq \min_{S'\in {A\choose k}} \max\{\tilde{d}(S',1)_q,\tilde{d}(S',n)_q\}\geq \frac{d(1,n)}{2}.\label{ineq:egalitarian-qcost-lb-sc-opt}
    \end{align}
    Combining \cref{ineq:egalitarian-qcost-ub-sc-alg,ineq:egalitarian-qcost-lb-sc-opt}, we directly obtain that $\dist(f)\leq 2$.

    Let now $q\in \NN$ be such that $\frac{k}{3}<q\leq \frac{k}{2}$, $\succ\ \in \calL^n(n)$ be an arbitrary preference profile on $A=[n]$, and $d\ \rhd \succ$ be a consistent distance metric; we consider the election $\instance=(A,k,\succ)$.
    We denote the outcome of \kExtremes\ for this profile by $S$ for compactness.
    We denote agents by $\{1,\ldots,n\}$ from left to right and, for $S'\in {A\choose k}$, we let $\tilde{d}(S',a)\in \RR^k_+$ denote the vector with the values $\{d(a,s)\mid s\in S'\}$ in increasing order.
    We finally let $a^*\in \arg\max\{\min\{d(1,a),d(a,n)\}\mid a\in A\}$ denote the agent with maximum distance from both extreme agents, assume w.l.o.g.\ that this is its distance to $1$, i.e., $d(1,a^*)\leq d(a^*,n)$, and write $d^*=d(1,a^*)$ for this distance. 
    Observe that 
    \begin{equation}
        \min \{d(a^*,n),d(1,a^*+1)\} \geq \frac{d(1,n)}{2}.\label{ineq:lbs-distances-a*}
    \end{equation}
    Indeed, $d(a^*,n) \geq \frac{d(1,n)}{2}$ follows directly from the inequality $d(1,a^*)\leq d(a^*,n)$ and the equality $d(1,a^*)+d(a^*,n)=d(1,n)$.
    Having $d(1,a^*+1)< \frac{d(1,n)}{2}$ would imply $\min\{d(1,a^*+1),d(a^*+1,n)\} > d^*$, a contradiction to the definition of $a^*$. 

    We first tackle two simple cases. If $a^*<q$, i.e., there are less than $q$ agents between $1$ and $a^*$, then for any committee $S'\in {A\choose k}$ we have $\SC(S',A;d)\geq \SC(S',1;d)\geq d(1,a^*+1)\geq \frac{d(1,n)}{2}$, where the second inequality follows from \cref{ineq:lbs-distances-a*}. 
    Since $\SC(S',A;d)\leq d(1,n)$ holds for any committee $S'\in {A\choose k}$, we know that in particular $\SC(S,A;d)\leq d(1,n)$ and thus $\dist(f)\leq 2$.
    Similarly, if $n-a^*<q$, i.e., there are less than $q$ agents between $a^*+1$ and $n$, then for any committee $S'\in {A\choose k}$ we have $\SC(S',A;d)\geq \SC(S',1;d)\geq d(a^*,n)\geq \frac{d(1,n)}{2}$, where the second inequality follows from \cref{ineq:lbs-distances-a*}.
    As before, $\dist(f)\leq 2$ thus follows directly.

    If none of the previous cases hold, we have both $a^*\geq q$ and $n-a^*\geq 2$, so that from the definition of \kExtremes\ we have $|S\cup \{1,\ldots,a^*\}| = \big\lfloor \frac{k}{2}\big\rfloor \geq q$ and $|S\cup \{a^*+1,\ldots,n\}| = \big\lceil \frac{k}{2}\big\rceil \geq q$.
    This implies that
    \begin{equation}
        \SC(S,A;d)\leq \max\{d(1,a^*),d(a^*+1,n)\}\leq d^*.\label{ineq:ub-sc-kextremes-a^*}
    \end{equation}
    We claim that, for every $S'\in {A\choose k}$, we have $\SC(S',A;d)\geq \frac{d^*}{2}$. Together with \cref{ineq:ub-sc-kextremes-a^*}, this would immediately imply $\dist(f)\leq 2$ and conclude the proof.
    To prove this fact, suppose for the sake of contradiction that $\SC(S',A;d)< \frac{d^*}{2}$ for some $S'\in {A\choose k}$. 
    This is equivalent to the fact that
    \[
        \SC(S',a;d) < \frac{d^*}{2} \Longleftrightarrow \bigg| S'\cup \bigg\{b\in A: d(a,b)< \frac{d^*}{2}\bigg\} \bigg| \geq q
    \]
    for every $a\in A$. 
    Since the sets $\big\{b\in A\mid d(a,b)< \frac{d^*}{2}\big\}$ for $a\in \{1,a^*,n\}$ are disjoint, we conclude that $|S'|\geq 3q > k$, a contradiction.
    
    For the lower bound, we consider the same instances as in the proof of \Cref{thm:egalitarian-additive}; we repeat the construction for completeness.
    Naturally, the proof of the lower bound in the end differs from the additive case.
    We consider any $k\in \NN$ with $k\geq 2$, we fix $n=2(k+1)$, and consider an arbitrary $(n,k)$-voting rule $f$.
We partition the agents into four sets $A=\dot\bigcup_{i=1}^4 A_i$ such that $A_1=\{1\},$ $A_4=\{n\}$ and $|A_2|=|A_3|=k$.
We consider the profile $\succ\in \calL^n(n)$, where $S = f(\succ)$, and
\begin{enumerate}[label=(\roman*)]
    \item $b\succ_a c$ whenever $a\in A_i,b\in A_j,c\in A_\ell$ for some $i,j,\ell \in [4]$ with $|i-j|<|i-\ell|$;
    \item $1\succ_a b$ whenever $a\in A_2,b\in A_3\cup A_4$;
    \item $n\succ_a b$ whenever $a\in A_3,b\in A_1\cup A_2$;
\end{enumerate}
and the remaining pairwise comparisons are arbitrary.
We consider the election $\instance=(A,k,\succ)$ with $A=[n]$.

In what follows, we distinguish whether $f$ selects more agents from $A_1\cup A_2$ or from $A_3\cup A_4$ and construct appropriate distance metrics to show that, in either case, the distortion is at least the one claimed in the statement.
Intuitively, if $f$ selects more agents from $A_1\cup A_2$ we will consider a metric where these sets lie on one extreme, $A_4=n$ on the other extreme, and all agents $A_3$ in the middle.
This way, the selected committee gives twice the social cost as picking all agents from $A_3$
In the opposite case, we will construct a symmetric instance.

Formally, we first consider the case with $|S \cap (A_1\cup A_2)| \geq \frac{k}{2}$ and define the distance metric $d_1$ on $A$ by the following positions $x\in (-\infty,\infty)^n$: $x_a=0$ for every $a\in A_1\cup A_2$, $x_a=1$ for every $a\in A_3$, and $x_n=2$.
It is not hard to check that $d_1\ \rhd \succ$; see \Cref{fig:egalitarian-add} for an illustration.
Since $|S\cap (A_1\cup A_2)| \geq \frac{k}{2}$, we obtain $\SC(S,n;d_1)=2$ and thus
\[
    \dist(f(\succ),\instance) \geq \frac{\SC(S,A;d_1)}{\SC(A_3,A;d_1)} \geq \frac{\SC(S,n;d_1)}{\SC(A_3,n;d_1)} \geq 2.
\]

Conversely, if $|S\cap (A_3\cup A_4)| \geq \frac{k}{2}$, we define the distance metric $d_2$ on $A$ by the following positions $x\in (-\infty,\infty)^n$: $x_1=0$, $x_a=1$ for every $a\in A_2$, and $x_a=2$ for every $a\in A_3\cup A_4$.
It is not hard to check that $d_2\ \rhd \succ$; see \Cref{fig:egalitarian-add} for an illustration.
Since $|S\cap (A_3\cup A_4)| \geq \frac{k}{2}$, we obtain $\SC(S,1;d_2)=2$ and thus
\[
    \dist(f(\succ),\instance) \geq \frac{\SC(S,A;d_2)}{\SC(A_2,A;d_2)} \geq \frac{\SC(S,1;d_2)}{\SC(A_2,1;d_2)} \geq 2.
\]

Since $\dist(f(\succ),\instance)\geq 2$ regardless of $f(\succ)$, we conclude that $\dist(f) \geq 2$.
\end{proof}

\section{Discussion}

In this work, we have introduced the study of metric distortion in committee elections where voters and candidates coincide and provided a first step towards an understanding of this setting by focusing on the line metric.
Our results span a variety of social costs and include both analyses of voting rules and constructions of negative instances to provide impossibility results.
Although most of our results are tight, an intriguing gap remains for utilitarian $q$-cost when $q$ is greater than $\frac{k}{2}$.
We believe that rules with a distortion better than the current upper bound of $3$ exist and their design may benefit from the insights provided by our rule for $q=k=2$.

The study of the distortion of voting rules in more general metric spaces constitutes an interesting direction for future work, even in the general setting. 
The main open question concerns the design of voting rules that achieve small distortion beyond the line metric under the utilitarian additive cost.

Another challenge in the design of elections is preventing strategic behavior.
A mild assumption in the context of peer selection, adopted by the growing literature on impartial selection, is that agents' primary concern is whether they are selected themselves, and a voting rule is deemed impartial if an agent cannot affect this fact by changing their reported preferences.
On the other hand, a rule is called strategyproof in the voting literature if no agent can misreport their preferences and lead to a better outcome with respect to their actual preferences.
Both notions---impartiality and strategyproofness--can be readily applied to our setting, the former being a relaxed version of the latter in this case.
Most of the voting rules developed in this work depend on the order of the agents and are thus strategyproof if one restricts voters' deviations to those that are consistent with this order.
This constitutes a sensible way to define these axioms, as inconsistent reports could be easily detected and punished by the designer.
A notable exception is the \favCouple\ rule, which does not depend exclusively on the order and is not even impartial: For instance, an agent next to the median agent could in some cases modify their ranking, reporting the median agent immediately after themselves, to create a couple and become selected.
Designing impartial and strategyproof voting rules with bounded distortion for peer selection constitutes an interesting challenge for future work in the area.

\pagebreak
\bibliographystyle{plainnat}
\bibliography{bibliography}

\end{document}